\DeclarePairedDelimiter\ceil{\lceil}{\rceil}
\DeclarePairedDelimiter\floor{\lfloor}{\rfloor}
\newtheorem{theorem}{Theorem}
\newtheorem{proposition}[theorem]{Proposition}
\newtheorem{corollary}[theorem]{Corollary}
\newtheorem{definition}[theorem]{Definition}
\newtheorem{lemma}[theorem]{Lemma}
\newcommand{\cov}{\overset{\mathrm{cov}}{\succ}}
\newcommand{\asucc}{\succ_{\mathrm{a}}}
\newcommand{\locc}{\overset{\mathrm{LOCC}}{\succ}}
\newcommand{\ii}{\mathrm{i}}
\newcommand{\dtv}{d_{\mathrm{TV}}}
\newcommand{\midd}{\,\middle| \,}
\newcommand{\titleofpaper}{Beyond i.i.d. in the Resource Theory of Asymmetry: An Information-Spectrum Approach for Quantum Fisher Information}
\begin{document}

\title{\titleofpaper}
\author{Koji Yamaguchi}
\email{koji.yamaguchi@uwaterloo.ca}
\affiliation{Department of Applied Mathematics, University of Waterloo, Waterloo, Ontario, N2L
3G1, Canada}
\affiliation{Department of Communication Engineering and Informatics,
University of Electro-Communications, 1-5-1 Chofugaoka, Chofu, Tokyo, 182-8585, Japan}

\author{Hiroyasu Tajima}
\affiliation{Department of Communication Engineering and Informatics,
University of Electro-Communications, 1-5-1 Chofugaoka, Chofu, Tokyo, 182-8585, Japan}
\affiliation{JST, PRESTO, 4-1-8 Honcho, Kawaguchi, Saitama, 332-0012, Japan}

\begin{abstract}
Energetic coherence is indispensable for various operations, including precise measurement of time and acceleration of quantum manipulations. Since energetic coherence is fragile, it is essential to understand the limits in distillation and dilution to restore damage. The resource theory of asymmetry (RTA) provides a rigorous framework to investigate energetic coherence as a resource to break time-translation symmetry. Recently, in the i.i.d. regime where identical copies of a state are converted into identical copies of another state, it has been shown that the convertibility of energetic coherence is governed by a standard measure of energetic coherence, called the quantum Fisher information (QFI). This fact means that QFI in the theory of energetic coherence takes the place of entropy in thermodynamics and entanglement entropy in entanglement theory. However, distillation and dilution in realistic situations take place in regimes beyond i.i.d., where quantum states often have complex correlations. Unlike entanglement theory, the conversion theory of energetic coherence in pure states in the non-i.i.d. regime has been an open problem. In this Letter, we solve this problem by introducing a new technique: an information-spectrum method for QFI. Two fundamental quantities, coherence cost and distillable coherence, are shown to be equal to the spectral QFI rates for arbitrary sequences of pure states. As a consequence, we find that both entanglement theory and RTA in the non-i.i.d. regime are understood in the information-spectrum method, while they are based on different quantities, i.e., entropy and QFI, respectively.
\end{abstract}

\pacs{
03.67.-a, %Quantum information
}

\maketitle

\textit{\textbf{Introduction.}}---
In quantum mechanics, different states can be superposed. Of vital importance is the superposition between different energy eigenstates, called energetic coherence. Energetic coherence is mandatory for creating accurate clocks \cite{giovannetti_quantum-enhanced_2001,giovannetti_quantum_2006,giacomini_quantum_2019,schnabel_quantum_2010,marvian_coherence_2020,woods_autonomous_2019}, accelerating quantum operations \cite{marvian_quantum_2016}, and measuring physical quantities noncommutative with conserved quantities \cite{wigner_messung_1952,araki_measurement_1960,yanase_optimal_1961,ozawa_conservation_2002,tajima_coherence-variance_2019}.
Recently, it has been shown that energetic coherence also plays important roles for gate implementation in quantum computation \cite{ozawa_conservative_2002,tajima_uncertainty_2018,tajima_coherence_2020,tajima_universal_2021,tajima_2022_universal}, quantum measurements \cite{tajima_coherence-variance_2019, tajima_2022_universal}, quantum error correction \cite{kubica_using_2021,zhou_new_2021,yang_optimal_2022,tajima_universal_2021,liu_quantum_2022,tajima_2022_universal}, and black hole physics \cite{tajima_universal_2021,tajima_2022_universal}.

Energetic coherence and other fundamental properties of quantum systems are better understood by treating them as resources for quantum tasks. 
Quantum resource theories (QRTs) provide a versatile framework for analyzing seemingly unrelated resources with different origins, including entanglement \cite{horodecki_quantum_2009}, athermality \cite{lostaglio_introductory_2019,sagawa_entropy_2022}, and energetic coherence \cite{gour_resource_2008,gour_measuring_2009,marvian_coherence_2020,marvian_operational_2022}. Unexpected similarities arise in different branches of QRTs \cite{chitambar_quantum_2019}, leading to a unified understanding of the underlying laws. Since valuable resources are often fragile, it is fundamental to develop a theoretical understanding of the distillation and dilution of resources to restore their damage. Here, distillation is the operation of extracting as much resource as possible from a given state, and dilution is the opposite (Fig.~\ref{fig:dist_dil}).
\begin{figure}
    \centering
    \includegraphics[width=8.6cm]{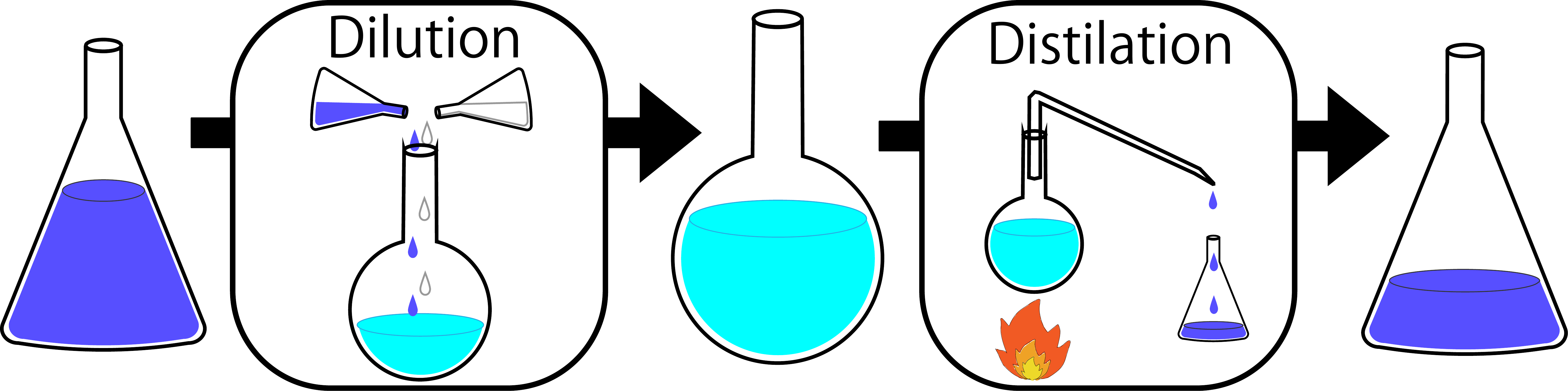}
    \caption{Schematic picture of dilution and distillation. In dilution, a given state or a given sequence of states (depicted as light blue liquid) is generated by consuming as little resource (depicted as dark blue liquid) as possible. In distillation, as much resource as possible is extracted from a given state or a given sequence of states.}
    \label{fig:dist_dil}
\end{figure}

Revealing the limits of distillation and dilution of energetic coherence is of great importance when assembling multiple inaccurate clocks into an accurate clock \cite{marvian_coherence_2020}. It has been studied \cite{marvian_coherence_2020,marvian_operational_2022} in the resource theory of asymmetry (RTA), a branch of QRTs that analyzes symmetries and conservation laws \cite{bartlett_reference_2007,gour_resource_2008,gour_measuring_2009,marvian_mashhad_symmetry_2012,marvian_operational_2022,tajima_uncertainty_2018,tajima_coherence-variance_2019,tajima_coherence_2020,tajima_universal_2021,takagi_correlation_2022,kudo_fisher_2022}. In the independent and identically distributed (i.i.d.) regime where identical copies of a pure state are converted to identical copies of another pure state, the conversion rate is shown to be given by the ratio of their quantum Fisher information (QFI), a quantifier of energetic coherence \cite{marvian_operational_2022}. In other words, QFI is in the position of entropy in the second law of thermodynamics. The same thermodynamic structure is known to exist in the i.i.d. regime for entanglement entropy in entanglement theory \cite{bennett_concentrating_1996}.

Towards practical applications, it is essential to extend the conversion theory in the i.i.d. regime to non-i.i.d. setting because a realistic resource often has complex correlations while an i.i.d. resource state has no correlation. In entanglement theory \cite{hayashi_general_2006,bowen_asymptotic_2008} and quantum thermodynamics \cite{tajima_large_2017,faist_macroscopic_2019,sagawa_asymptotic_2021}, conversion theories in the non-i.i.d. regime have been established. However, the counterpart in RTA remains elusive. 

The obstacle to analyzing non-i.i.d. regime in RTA is the limitation of the traditional information-spectrum method. This method gives a universal way of dealing with entropy-related problems for general states with arbitrary correlations in classical and quantum information theory, e.g., source coding, channel coding and hypothesis testing \cite{ogawa_strong_2000,nagaoka_information-spectrum_2007,hayashi_general_2002,hayashi_general_channel_2003,bowen_beyond_2006}. Furthermore, entanglement theory and quantum thermodynamics in the non-i.i.d. regime \cite{hayashi_general_2006, bowen_asymptotic_2008,jiao_asymptotic_2018,tajima_large_2017,sagawa_asymptotic_2021,faist_macroscopic_2019} are established with the information-spectrum method since they are based on entropy. However, a central measure in converting energetic coherence is QFI, which is quite different from entropy. Therefore, the non-i.i.d. theory in RTA has been out of the scope of the information-spectrum method. 

In this Letter, we establish the conversion theory of energetic coherence in non-i.i.d. pure states by constructing an information-spectrum approach for QFI. The key ingredients we introduce here are the followings: the spectral sup- and inf-QFI rates, the max- and min-QFIs, and asymmetric majorization. All of them clarify the correspondence in the conversion theories of entanglement and energetic coherence in the non-i.i.d. regime, which are characterized by entropies and QFIs, respectively. 
First, we prove a general formula for the coherence cost and the distillable coherence, i.e., the optimal conversion rates of a sequence of arbitrary pure states from and to a reference state. Concretely, they are shown to be equal to the spectral sup- and inf-QFI rates, respectively. This result corresponds to the general formula in entanglement theory \cite{hayashi_general_2006,bowen_asymptotic_2008}, asserting that the entanglement cost and the distillable entanglement are equal to the spectral sup- and inf-entropy rates.
%First, we prove that the spectral sup- and inf-QFI rates are equal to the coherence cost and the distillable coherence, i.e., the optimal conversion rates of a sequence of arbitrary pure states from and to a reference state, respectively. This result corresponds to the general formula in entanglement theory \cite{hayashi_general_2006,bowen_asymptotic_2008}, asserting that the entanglement cost and the distillable entanglement are equal to the spectral sup- and inf-entropy rates. 
Second, these spectral QFI rates are constructed as asymptotic rates of the smooth max- and min-QFIs. Their construction is parallel to that of spectral entropy rates, given as the asymptotic rates of the smooth max- and min-entropies with the smoothing technique \cite{datta_smooth_2009,renner_security_2005}. Third, the asymmetric majorization relation between energy distribution is shown to provide a necessary and sufficient condition for the exact convertibility among pure states in RTA. This result is the counterpart in RTA to Nielsen's theorem \cite{nielsen_conditions_1999}, which characterizes the pure-state convertibility in entanglement theory by the majorization relation of the Schmidt coefficients. 

Our findings highlight a clear correspondence in non-i.i.d. resource conversion in entanglement theory and RTA. See Figs~\ref{fig:spectral_rates} and \ref{fig:rtoa_rtoe}.  Although they treat quite different resources, i.e., entanglement and energetic coherence, both are understood within a unified framework of the information-spectrum method for each resource. 

\textit{\textbf{Resource theory of asymmetry (RTA).}}---
This Letter aims to construct a general theory of manipulating energetic coherence. To this end, we begin by identifying states with and without energetic coherence. Consider a quantum system $S$ and its Hamiltonian $H$. Energetic coherence means superposition between  eigenstates of $H$ with different eigenvalues. Thus, a state $\rho$ has energetic coherence iff the time evolution $e^{-\ii Ht}$ changes it. Conversely, a state without energetic coherence is symmetric under time evolution, i.e., $e^{-\ii Ht}\rho e^{\ii Ht}=\rho$ for any $t\in \mathbb{R}$. From these facts, we call a state without energetic coherence \textit{symmetric} and a state with energetic coherence \textit{asymmetric}. By definition, a state $\rho$ is symmetric iff $[\rho,H]=0$.

We next consider transformations of states to manipulate energetic coherence. A basic element is an operation which does not create energetic coherence in the sense that it transforms a symmetric state to a symmetric state. 
This condition is satisfied if the operation is described by a CPTP map $\mathcal{E}$ satisfying \footnote{In fact, any completely incoherence-preserving operation satisfies Eq.~\eqref{eq:covariant_channel_definition} \cite{marvian_coherence_2020}. Here, a channel $\mathcal{E}$ is called completely incoherence-preserving iff for any ancillary system $A$ with an arbitrary Hamiltonian $H_A$, the map $\mathcal{E}\otimes \mathcal{I}_A$ transforms any symmetric state to a symmetric state, where $\mathcal{I}_A$ denotes the identity map on $A$.}
\begin{align}
\mathcal{E}\left(e^{-\ii H t}\rho e^{\ii H t}\right)=e^{-\ii H t}\mathcal{E}(\rho)e^{\ii H t},\enskip\forall\rho,\forall t\in\mathbb{R}.\label{eq:covariant_channel_definition}
\end{align}
A channel ${\cal E}$ satisfying Eq.~\eqref{eq:covariant_channel_definition} is called covariant (under time evolution $e^{-\ii Ht}$). 

Based on these ideas, RTA is constructed as a resource theory of energetic coherence. 
The framework of a resource theory is determined by defining ``free states" that can be freely prepared and ``free operations" that can be freely performed.
In RTA, symmetric states are free states, and covariant operations are free operations. 
With these definitions, energetic coherence in asymmetric states becomes a resource. This structure in RTA is the same as in entanglement theory, where entanglement becomes a resource by defining separable states and local operations and classical communication (LOCC) as free states and free operations.

By adopting the above resource-theoretic perspective, coherence is quantified by resource measures, which monotonically decrease under covariant operations. 
A well-known and important one is the symmetric logarithmic derivative Fisher information \cite{helstrom_quantum_1969,holevo_probabilistic_2011} with respect to $\{e^{-\ii Ht}\rho e^{\ii Ht}\}_{t}$, given by
\begin{align}
    \mathcal{F}\left(\rho,H\right)\coloneqq 2 \sum_{i,j}\frac{(\lambda_i-\lambda_j)^2}{\lambda_i+\lambda_j}|\braket{i|H|j}|^2,
\end{align}
where $\rho=\sum_{i}\lambda_i\ket{i}\bra{i}$ is the eigenvalue decomposition. See, e.g., \cite{hansen_metric_2008,takagi_skew_2019,kudo_fisher_2022} for details and its generalization. Hereafter, we call this quantity quantum Fisher information (QFI), simply written as $\mathcal{F}(\rho)$. 
For a pure state, QFI equals four times the variance of $H$ \cite{Comment_convex_roof}.

Following the standard argument \cite{marvian_operational_2022}, we hereafter analyze a system with a Hamiltonian
\begin{align}
    H=\sum_{n=0}^\infty n\ket{n}\bra{n}\label{eq:Hamiltonian},
\end{align}
where $\{\ket{n}\}$ denotes an orthogonal basis.
With the method in Ref. \cite{marvian_operational_2022}, pure-state conversion theory in this system can be extended to a more general setup in RTA with arbitrary Hamiltonians \cite{SM}.

An essential characteristic of a pure state $\psi=\ket{\psi}\bra{\psi}$ in the manipulation of energetic coherence is its energy distribution $p_\psi=\{p_\psi(n)\}_{n=0}^\infty$, where $p_\psi(n)\coloneqq |\braket{n|\psi}|^2$.
This is because any pure state $\ket{\psi}$ can be mapped to $\sum_{n=0}^\infty \sqrt{p_\psi(n)}\ket{n}$ by an energy-conserving unitary operation, which is covariant and invertible. 
In fact, necessary and sufficient conditions for the exact convertibility between pure states have been obtained in terms of the energy distributions \cite{gour_resource_2008,marvian_mashhad_symmetry_2012,marvian_theory_2013}.

From a practical viewpoint, the exact conversion is typically impossible and too restrictive. 
Therefore, it is common to explore the convertibility with vanishing error in the asymptotic regime. 
We adopt the trace distance $D(\rho,\sigma)\coloneqq\frac{1}{2}\|\rho-\sigma\|_1$ as a quantifier of error, where $\|A\|_1\coloneqq \mathrm{Tr}(\sqrt{A^\dag A})$. For $\epsilon\in(0,1]$, we denote $\rho\approx_\epsilon \sigma $ iff two states $\rho$ and $\sigma$ satisfy $D(\rho,\sigma)\leq\epsilon$. For two sequences of states $\widehat{\rho}=\{\rho_m\}_m$ and $\widehat{\sigma}=\{\sigma_m\}_m$, we denote $\widehat{\rho}\cov_{\epsilon}\widehat{\sigma}$ iff there exists a sequence of covariant channels $\{\mathcal{E}_m\}_m$ such that $\mathcal{E}_m(\rho_m)\approx_\epsilon \sigma_m$ for all sufficiently large $m$. If $\widehat{\rho}\cov_{\epsilon}\widehat{\sigma}$ holds for all $\epsilon\in(0,1]$, we say $\widehat{\rho}$ is asymptotically convertible to $\widehat{\sigma}$ and denote $\widehat{\rho}\cov\widehat{\sigma}$. For simplicity, we only analyze systems with Hamiltonian given by Eq.~\eqref{eq:Hamiltonian}. Our main theorem on the pure-state conversion (Theorem~\ref{thm:cost_dist_Finf_F0}) for this setup can be extended to a more general setup with arbitrary Hamiltonians. %Here, we assume that the Hamiltonians for the input and the output systems are given by Eq.~\eqref{eq:Hamiltonian}. 
%Similar to the one-shot case, the results on convertibility in this setup can be extended to a more general setup with arbitrary Hamiltonians \cite{SM}.
Of course, this includes the i.i.d. case, where a Hamiltonian is given by a sum of copies of a free Hamiltonian of a subsystem. See \cite{SM} for a general formula.

%In the analysis of asymptotic convertibility, we adopt a coherence bit $\ket{\phi_{\mathrm{coh}}}\coloneqq (\ket{0}+\ket{1})/\sqrt{2}$ with Hamiltonian $\ket{1}\bra{1}$ as a reference. 
In the analysis of asymptotic convertibility, we adopt a coherence bit, i.e., a qubit with Hamiltonian $\ket{1}\bra{1}$ in a state $\ket{\phi_{\mathrm{coh}}}\coloneqq (\ket{0}+\ket{1})/\sqrt{2}$ as a reference. 
There are two fundamental resource measures: the coherence cost and the distillable coherence. 
They are defined as the optimal rates for converting a sequence of states from and to coherence bits, i.e., 
\begin{align}
        C_{\mathrm{cost}}\left(\widehat{\rho}\right)&\coloneqq \inf\left\{R\mid \widehat{\phi_{\mathrm{coh}}}(R)\cov \widehat{\rho}\right\},\\
     C_{\mathrm{dist}}\left(\widehat{\rho}\right)&\coloneqq \sup\left\{R\mid \widehat{\rho}\cov \widehat{\phi_{\mathrm{coh}}}(R)\right\},
\end{align}
where $\widehat{\phi_{\mathrm{coh}}}(R)\coloneqq \{\phi_{\mathrm{coh}}^{\otimes \ceil{R m}}\}_m$ for $R>0$ and $\phi_{\mathrm{coh}}\coloneqq\ket{\phi_{\mathrm{coh}}}\bra{\phi_{\mathrm{coh}}}$. Note that the infimum of the empty set is formally defined as $+\infty$.

Finally, we introduce several notations for later convenience.
For $a=\{a(n)\}_{n\in\mathbb{Z}}$,
we denote $a\geq 0$ iff $a(n)\geq 0$ for all $n\in\mathbb{Z}$. 
A product sequence $a*b$ is defined by
$a*b(n)\coloneqq\sum_{k\in\mathbb{Z}}a(k)b(n-k)$. 
Similarly, we define $[a*b]_s^t(n)\coloneqq \sum_{k=s}^{t}a(k)b(n-k)$.

For a given sequence $q=\{q(n)\}_{n\in\mathbb{Z}}$, another sequence $\widetilde{q}=\{\widetilde{q}(n)\}_{n\in\mathbb{Z}}$ satisfying
\begin{align}
    \delta_{0,n}=\widetilde{q}*q(n)\label{eq:delta_tildeq_q}
\end{align}
plays a central role in our analysis. Here, $\delta_{m,n}$ is the Kronecker delta. If there exists a finite $n_\star\coloneqq \min\{n\mid q(n)>0\}$, such a sequence is constructed as
\begin{align}
    \widetilde{q}(n)\coloneqq 
    \begin{cases}  
    0&(n< -n_\star)\\
    \frac{1}{q(n_\star)} &(n= -n_\star)\\
    -\frac{1}{q(n_\star)}[\widetilde{q}*q]_{-n_\star}^{n-1}(n_\star+n)&(n>-n_\star)
    \end{cases}.\label{eq:definition_tildeq}
\end{align}

Note that $\widetilde{q}(n)$ is defined recursively for $n>-n_\star$. If $q$ is an energy distribution, $n_\star\geq 0$ exists. In this case, $\widetilde{q}$ satisfies $\sum_n\widetilde{q}(n)=1$. However, it is not a probability distribution in general since it can contain negative elements. Such a sequence $\widetilde{q}$ is utilized to define central quantifiers of our analysis, the max- and min-QFI, just below. 

We also introduce a generalized Poisson distribution $\mathrm{P}_\lambda=\{\mathrm{P}_{\lambda}(n)\}_{n\in\mathbb{Z}}$ for $\lambda\in\mathbb{R}$, where $\mathrm{P}_{\lambda}(n)\coloneqq e^{-\lambda}\lambda^n/n!$ for $n\geq 0$ and $\mathrm{P}_{\lambda}(n)\coloneqq0$ for $n<0$. For $\lambda\geq 0$, $\mathrm{P}_\lambda$ is the ordinary Poission distribution. Although $\mathrm{P}_\lambda$ with negative $\lambda$ is not a probability distribution, this notation is useful since $\widetilde{\mathrm{P}}_\lambda=\mathrm{P}_{-\lambda}$ \cite{SM}.

%In particular, if $q$ is an energy distribution, $n_\star\geq 0$ exists. Note that $\widetilde{q}(n)$ is defined recursively for $n>-n_\star$. 

%We also introduce a generalized Poinson distribution $\mathrm{P}_\lambda=\{\mathrm{P}_{\lambda}(n)\}_{n\in\mathbb{Z}}$ for $\lambda\in\mathbb{R}$, where $\mathrm{P}_{\lambda}(n)\coloneqq e^{-\lambda}\lambda^n/n!$ for $n\geq 0$ and $\mathrm{P}_{\lambda}(n)\coloneqq0$ for $n<0$. 
%For $\lambda\geq 0$, $\mathrm{P}_\lambda$ is a Poisson distribution, while for $\lambda<0$, it is not a probability distribution since $\mathrm{P}_\lambda(n)<0$ if $n$ is a positive odd number. Nevertheless, $\mathrm{P}_{\lambda}$ with negative $\lambda$ is useful since $\widetilde{\mathrm{P}}_\lambda=\mathrm{P}_{-\lambda}$ \cite{SM}. 

\textit{\textbf{Main results.}}---
Now, let us construct an information-spectrum theory for QFI and show our main results.
We first introduce key quantities. For a pure state $\psi$, we define the max-QFI $\mathcal{F}_{\mathrm{max}}$ and the min-QFI $\mathcal{F}_{\mathrm{min}}$ by
\begin{align}
    \mathcal{F}_{\mathrm{max}}(\psi)&\coloneqq \inf\left\{4\lambda\mid\mathrm{P}
    _\lambda* \widetilde{p_{\psi}}\geq 0\right\},\label{eq:max_F_pure}\\
    \mathcal{F}_{\mathrm{min}}(\psi)&\coloneqq\sup \left\{4\lambda\mid p_{\psi}*\mathrm{P}_{-\lambda}\geq 0\right\}.
\end{align}
They quantify the amounts of coherence in $\psi$ transformable from and to a state whose energy distribution follows a Poisson distribution \cite{SM}.
For a general state $\rho$, we define the max-QFI by $\mathcal{F}_{\mathrm{max}}(\rho)\coloneqq\inf_{\Phi_\rho} \mathcal{F}_{\mathrm{max}}(\Phi_\rho)$, where the infimum is taken over the set of all purifications $\Phi_\rho$ of $\rho$ and the Hamiltonians of the auxiliary system with integer eigenvalues. This notation is consistent with that for pure states \cite{SM}. 

The max- and min-QFI have similar properties to the max- and min-entropies in entanglement theory \cite{SM}.
For example, they provide the upper and lower bounds for QFI:
\begin{align}
    \mathcal{F}_{\mathrm{max}}(\psi)\geq \mathcal{F}(\psi)\geq \mathcal{F}_{\mathrm{min}}(\psi). \label{eq:Finf_F_F0}
\end{align}

For a general sequence of pure states $\widehat{\psi}=\{\psi_m\}$, the spectral sup- and inf-QFI rates are defined as
\begin{align}
    \overline{\mathcal{F}}(\widehat{\psi})&\coloneqq \lim_{\epsilon\to+0}\limsup_{m\to\infty}\frac{1}{m}\mathcal{F}_{\max}^\epsilon (\psi_m),\label{eq:spectral_sup_rate}\\
    \underline{\mathcal{F}}(\widehat{\psi})&\coloneqq \lim_{\epsilon\to+0}\liminf_{m\to\infty}\frac{1}{m}\mathcal{F}_{\min}^\epsilon (\psi_m)\label{eq:spectral_inf_rate},
\end{align}
where $\mathcal{F}_{\mathrm{max}}^\epsilon (\psi)\coloneqq \inf_{\rho\in B^{\epsilon}(\psi)}\mathcal{F}_{\mathrm{max}}(\rho)$ and $\mathcal{F}_{\mathrm{min}}^\epsilon (\psi)\coloneqq \sup_{\phi\in B_{\mathrm{pure}}^\epsilon(\psi)}\mathcal{F}_{\mathrm{min}}(\phi)$ are smooth max- and min-QFI. Here, we defined $B^\epsilon(\psi)\coloneqq \{\rho:\text{states}|D(\rho,\psi)\leq \epsilon\}$ and $B_{\mathrm{pure}}^\epsilon(\psi)\coloneqq \{\phi\text{: pure states}|D(\phi,\psi)\leq \epsilon\}$.
Note that $\limsup_{m\to\infty}\frac{1}{m}\mathcal{F}_{\mathrm{max}}^\epsilon (\psi_m)$ and $\liminf_{m\to\infty}\frac{1}{m}\mathcal{F}_{\mathrm{min}}^\epsilon (\psi_m)$ monotonically increases and decreases as $\epsilon$ becomes smaller, and hence limit values $\overline{\mathcal{F}}(\psi)$ and $\underline{\mathcal{F}}(\psi)$ exist.

The main theorem of this Letter is the following:
\begin{theorem}\label{thm:cost_dist_Finf_F0}
For a general sequence of pure states $\widehat{\psi}=\{\psi_m\}$, the coherence cost and the distillable coherence are equal to the spectral sup- and inf-QFI rates, respectively. That is,
\begin{align}
C_{\mathrm{cost}}(\widehat{\psi})=\overline{\mathcal{F}}(\widehat{\psi}),\quad C_{\mathrm{dist}}(\widehat{\psi})=\underline{\mathcal{F}}(\widehat{\psi}).    
\end{align}
\end{theorem}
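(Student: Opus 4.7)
The plan is to decompose the theorem into four inequalities: the two halves of $C_{\mathrm{cost}}(\widehat{\psi}) = \overline{\mathcal{F}}(\widehat{\psi})$ and the two halves of $C_{\mathrm{dist}}(\widehat{\psi}) = \underline{\mathcal{F}}(\widehat{\psi})$. My approach follows the information-spectrum strategy of entanglement theory, with max- and min-QFI in place of max- and min-entropies, asymmetric majorization in place of Schmidt-coefficient majorization, and the generalized Poisson distribution $\mathrm{P}_\lambda$ playing the role of the maximally entangled state as a ``universal reference''.

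The first step is a one-shot exact conversion theorem. The asymmetric majorization criterion (stated in the introduction as the RTA analog of Nielsen's theorem) is expected to yield equivalences $\mathrm{P}_\lambda \cov \psi \Leftrightarrow \mathrm{P}_\lambda * \widetilde{p_\psi} \geq 0$ and $\psi \cov \mathrm{P}_\lambda \Leftrightarrow p_\psi * \mathrm{P}_{-\lambda} \geq 0$. Comparing with the definitions of $\mathcal{F}_{\mathrm{max}}$ and $\mathcal{F}_{\mathrm{min}}$, these collapse to
\begin{equation*}
\mathrm{P}_\lambda \cov \psi \Leftrightarrow 4\lambda \geq \mathcal{F}_{\mathrm{max}}(\psi), \quad \psi \cov \mathrm{P}_\lambda \Leftrightarrow 4\lambda \leq \mathcal{F}_{\mathrm{min}}(\psi).
\end{equation*}
Smoothing both sides in trace distance then yields $\epsilon$-approximate versions in terms of $\mathcal{F}_{\mathrm{max}}^\epsilon$ and $\mathcal{F}_{\mathrm{min}}^\epsilon$. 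To transfer between the Poisson reference and the coherence-bit reference appearing in the theorem, I would next prove an asymptotic equipartition-type statement,
\begin{equation*}
\mathcal{F}_{\mathrm{max}}^\epsilon(\phi_{\mathrm{coh}}^{\otimes N})/N \to 1 \text{ and } \mathcal{F}_{\mathrm{min}}^\epsilon(\phi_{\mathrm{coh}}^{\otimes N})/N \to 1
\end{equation*}
as $N\to\infty$, so that up to $o(N)$ corrections $N$ coherence bits are interconvertible with a Poisson state whose QFI is $\approx N$; this is the coherence-theoretic analog of the fact that the smooth max/min-entropies of a maximally entangled state both equal $\log N$ to leading order.

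With these ingredients, the direct halves follow: for any $R > \overline{\mathcal{F}}(\widehat{\psi})$ one has $\mathcal{F}_{\mathrm{max}}^\epsilon(\psi_m) < R m$ eventually, so one-shot dilution through a Poisson intermediary gives $\phi_{\mathrm{coh}}^{\otimes \lceil R m \rceil} \cov_\epsilon \widehat{\psi}$, establishing $C_{\mathrm{cost}}(\widehat{\psi}) \leq \overline{\mathcal{F}}(\widehat{\psi})$; the distillation direct part is symmetric, using $\mathcal{F}_{\mathrm{min}}^\epsilon$. The converses rest on the monotonicity of $\mathcal{F}_{\mathrm{max}}$ and $\mathcal{F}_{\mathrm{min}}$ under covariant operations together with the smoothing closure $\mathcal{E}(\rho) \approx_\epsilon \sigma \Rightarrow \mathcal{F}_{\mathrm{max}}(\rho) \geq \mathcal{F}_{\mathrm{max}}^\epsilon(\sigma)$ (and its min analog); applied to any successful protocol at rate $R$ and combined with the coherence-bit equipartition, taking $m\to\infty$ and then $\epsilon\to 0$ yields $R \geq \overline{\mathcal{F}}(\widehat{\psi})$ and $R \leq \underline{\mathcal{F}}(\widehat{\psi})$, respectively.

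The main obstacle will be the asymmetric majorization criterion and its interaction with the inverse sequence $\widetilde{q}$. Unlike Schmidt-coefficient majorization, energy distributions live on the unbounded non-negative integers and $\widetilde{q}$ defined by (\ref{eq:definition_tildeq}) generally has negative entries, so verifying and smoothing the positivity conditions $\mathrm{P}_\lambda * \widetilde{p_\psi} \geq 0$ is the genuinely nontrivial combinatorial content of the argument. A secondary delicate point is the distillation converse: the output of a covariant channel acting on a pure $\psi_m$ need not be pure, so one must either promote $\mathcal{F}_{\mathrm{min}}$ to mixed states compatibly with its pure-state definition, or argue via the $B_{\mathrm{pure}}^\epsilon$ smoothing using the near-purity of $\mathcal{E}_m(\psi_m)$ (which is $\epsilon$-close to the pure target $\phi_{\mathrm{coh}}^{\otimes \lceil R m \rceil}$). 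Once the one-shot pieces are secured, extracting $\overline{\mathcal{F}}$ and $\underline{\mathcal{F}}$ via the $\limsup/\liminf$ with $\epsilon\to 0$ is standard.
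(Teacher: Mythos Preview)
Your overall plan matches the paper's proof: four inequalities, asymmetric majorization as the one-shot criterion, the Poisson states $\chi_\lambda$ as intermediaries, and the interconvertibility $\widehat{\phi_{\mathrm{coh}}}(R)\leftrightarrow\widehat{\chi}_{R/4}$ (your ``equipartition'') supplying the bridge to coherence bits. The two direct parts and the cost converse go through as you sketch; the paper packages the monotonicity step for the cost converse as Lemma~\ref{lem:purification_majorization}, which shows via the covariant Stinespring dilation that $\mathcal{E}(\chi_\lambda)$ admits a purification with energy distribution exactly $\mathrm{P}_\lambda$, whence $\mathcal{F}_{\max}(\mathcal{E}(\chi_\lambda))\le 4\lambda$.

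The genuine gap is the distillation converse, and neither of your two proposed fixes closes it. Extending $\mathcal{F}_{\min}$ to mixed states does not help because the smoothing in $\mathcal{F}_{\min}^\epsilon$ is a \emph{supremum}: from $\mathcal{E}(\psi_m)\in B^\epsilon(\chi_{\lambda m})$ you only obtain $\mathcal{F}_{\min}^\epsilon(\chi_{\lambda m})\ge \mathcal{F}_{\min}(\mathcal{E}(\psi_m))$, which points the wrong way. And ``near-purity of $\mathcal{E}_m(\psi_m)$'' is information about the \emph{output}, whereas you need a pure state close to the \emph{input} $\psi_m$ that exactly a-majorizes the target. The paper's Lemma~\ref{lem:purification_smooth} supplies the missing idea: writing the covariant channel in Kraus form $K_{k,u}=\sum_n c_n^{(k,u)}\ket{n-k}\bra{n}$ (operators indexed by energy shifts $k$), one decomposes the input energy distribution as $p_\psi=\sum_{k,u}q^{(k,u)}\Upsilon_k p_{\phi^{(k,u)}}$ where $\phi^{(k,u)}$ are the normalized post-measurement states. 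Since each $\phi^{(k,u)}$ is close to the pure target $\phi$ (via Fuchs--van~de~Graaf), $p_\psi$ is close in total variation to $\sum_k w(k)\Upsilon_k p_\phi$, and the pure state with this latter energy distribution is the required $\psi'\in B_{\mathrm{pure}}^{O(\epsilon^{1/4})}(\psi)$ satisfying $p_{\psi'}\asucc p_\phi$. This pullback to the input side via the shift structure of covariant Kraus operators is the step your plan does not yet contain.
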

As a corollary of Theorem~\ref{thm:cost_dist_Finf_F0}, we immediately get \cite{SM}
\begin{align}
    &\widehat{\psi}\cov\widehat{\phi}\implies \overline{\mathcal{F}}(\widehat{\psi})\geq \overline{\mathcal{F}}(\widehat{\phi}),\quad \underline{\mathcal{F}}(\widehat{\psi})\geq \underline{\mathcal{F}}(\widehat{\phi}),\label{eq:asymptotic_nec}\\
    &\underline{\mathcal{F}}(\widehat{\psi})> \overline{\mathcal{F}}(\widehat{\phi})\implies \widehat{\psi}\cov\widehat{\phi}.\label{eq:asymptotic_suf}
\end{align}
Replacing $\overline{\mathcal{F}}$, $\underline{\mathcal{F}}$ and $\cov$ by $\overline{S}$, $\underline{S}$ and $\locc$, the same relations as Eqs.~\eqref{eq:asymptotic_nec} and \eqref{eq:asymptotic_suf} hold in entanglement theory. 
Here, $\overline{S}$ and $\underline{S}$ denote the spectral sup- and inf-entropy rates, while $\widehat{\psi}\locc \widehat{\phi}$ means that $\widehat{\psi}$ is asymptotically convertible to $\widehat{\phi}$ by LOCC \cite{SM}. 

%If $\widehat{\psi}=\{\psi^{\otimes m}\}$ for a pure state $\psi$ with period $2\pi$, $C_{\mathrm{cost}}(\widehat{\psi})=C_{\mathrm{dist}}(\widehat{\psi})=\mathcal{F}(\psi)$ holds \cite{marvian_operational_2022}. From Theorem~\ref{thm:cost_dist_Finf_F0}, we get $\overline{\mathcal{F}}(\widehat{\psi})=\mathcal{F}(\psi)=\underline{\mathcal{F}}(\widehat{\psi})$ in the i.i.d. regime. 
Theorem~\ref{thm:cost_dist_Finf_F0} for a system with Hamiltonian in Eq.~\eqref{eq:Hamiltonian} can be extended to an arbitrary sequence of systems with any Hamiltonians in pure states having a finite period \cite{SM}. In particular, the spectral QFI rates $\overline{\mathcal{F}}$ and $\underline{\mathcal{F}}$ are equal to QFI $\mathcal{F}$ in the i.i.d. setting \cite{SM}, which reproduces the result in earlier i.i.d. studies \cite{gour_resource_2008,marvian_operational_2022}. 
We remark that $\overline{S}$ and $\underline{S}$ are equal to entanglement entropy in the i.i.d. regime in entanglement theory \cite{SM}.

These results show that the spectral sup- and inf-QFI rates, $\overline{\mathcal{F}} $ and $\underline{\mathcal{F}} $, in RTA play the same roles as the spectral sup- and inf-entropy rates, $\overline{S}$ and $\underline{S}$, in entanglement theory \cite{SM}. See Fig.~\ref{fig:spectral_rates}.
In other words, RTA in the non-i.i.d. regime has the same structure on convertibility as Lieb-Yngvason's non-equilibrium theory \cite{lieb_entropy_2013}, based on QFI-related quantities rather than entropies.

\begin{figure}
    \centering
    \includegraphics[width=8.6cm]{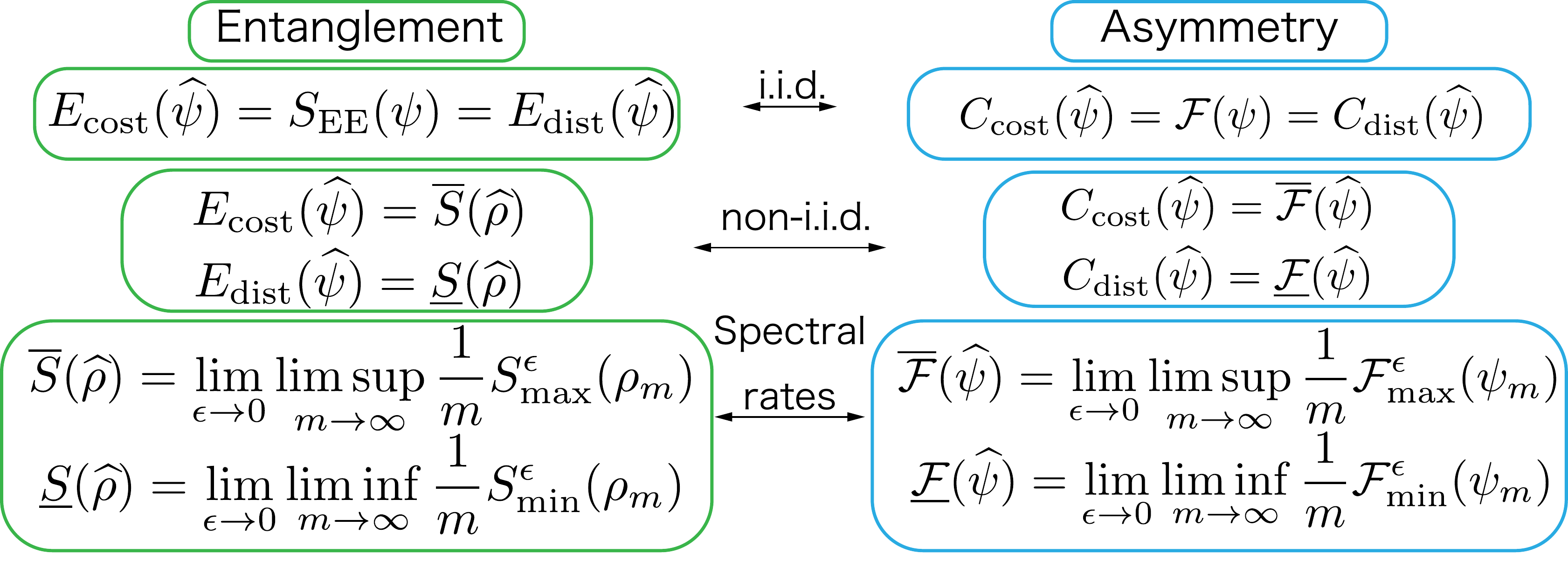}
    \caption{Comparison of entanglement theory and RTA on the asymptotic convertibility \cite{SM}. 
    The quantities $E_{\mathrm{cost}}$ and $E_{\mathrm{dist}}$ denote the entanglement cost and the distillable entanglement.
    For a sequences of general bipartite pure states $\widehat{\psi}=\{\psi_{AB,m}\}_m$, we define $\widehat{\rho}=\{\rho_m\}_m$, where $\rho_m\coloneqq \mathrm{Tr}_B(\psi_{AB,m})$.
    The quantities $S_{\mathrm{EE}}(\psi)$, $S_{\mathrm{max}}^\epsilon(\rho)$ and $S_{\mathrm{min}}^\epsilon(\rho)$ denote entanglement entropy, the smooth max- and min-entropies.}
    \label{fig:spectral_rates}
\end{figure}

Theorem~\ref{thm:cost_dist_Finf_F0} for the coherence cost is directly extended to general states, including mixed states. That is, defining $\overline{\mathcal{F}}\left(\widehat{\rho}\right)\coloneqq \lim_{\epsilon\to+0}\limsup_{m\to\infty}\frac{1}{m}\mathcal{F}_{\mathrm{max}}^\epsilon (\rho_m)$, where $\mathcal{F}_{\mathrm{max}}^\epsilon (\rho)\coloneqq\inf_{\sigma\in B^\epsilon(\rho)}\mathcal{F}_{\mathrm{max}}(\sigma)$, the following holds \cite{SM}:
\begin{theorem}\label{thm:cost_general}
For a general sequence of states $\widehat{\rho}=\{\rho_m\}$, it holds $C_{\mathrm{cost}}\left(\widehat{\rho}\right)=\overline{\mathcal{F}}\left(\widehat{\rho}\right)$.
\end{theorem}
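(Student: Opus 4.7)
The plan is to derive Theorem~\ref{thm:cost_general} by bootstrapping from the pure-state identity of Theorem~\ref{thm:cost_dist_Finf_F0} through purifications, together with the monotonicity of $\mathcal{F}_{\mathrm{max}}$ under covariant channels. I would establish the two inequalities $C_{\mathrm{cost}}(\widehat{\rho})\le\overline{\mathcal{F}}(\widehat{\rho})$ and $C_{\mathrm{cost}}(\widehat{\rho})\ge\overline{\mathcal{F}}(\widehat{\rho})$ separately.

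For the achievability $C_{\mathrm{cost}}(\widehat{\rho})\le\overline{\mathcal{F}}(\widehat{\rho})$, fix $R>\overline{\mathcal{F}}(\widehat{\rho})$ and a target error $\epsilon\in(0,1]$. By definition of $\overline{\mathcal{F}}$ there is $\epsilon'\in(0,\epsilon/2)$ with $\limsup_m\tfrac{1}{m}\mathcal{F}_{\mathrm{max}}^{\epsilon'}(\rho_m)<R$. For each large $m$, pick $\sigma_m\in B^{\epsilon'}(\rho_m)$ whose $\mathcal{F}_{\mathrm{max}}(\sigma_m)$ is within $1/m$ of $\mathcal{F}_{\mathrm{max}}^{\epsilon'}(\rho_m)$, then pick a purification $\Phi_{\sigma_m}$ (with integer-eigenvalue ancillary Hamiltonian) whose pure-state max-QFI is within $1/m$ of $\mathcal{F}_{\mathrm{max}}(\sigma_m)$. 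Since $\mathcal{F}_{\mathrm{max}}^{\epsilon''}\le\mathcal{F}_{\mathrm{max}}$ for any $\epsilon''>0$, the resulting pure-state sequence $\widehat{\Phi_\sigma}$ satisfies $\overline{\mathcal{F}}(\widehat{\Phi_\sigma})<R$. Theorem~\ref{thm:cost_dist_Finf_F0} applied to $\widehat{\Phi_\sigma}$ then yields covariant channels $\mathcal{E}_m$ on the purified system with $\mathcal{E}_m(\phi_{\mathrm{coh}}^{\otimes \lceil Rm\rceil})\approx_{\epsilon/2}\Phi_{\sigma_m}$ for large $m$. Composing with the partial trace over the purification ancilla---covariant because the global Hamiltonian is additive---produces states within $\epsilon/2+\epsilon'<\epsilon$ of $\rho_m$ by data-processing and the triangle inequality, so $\widehat{\phi_{\mathrm{coh}}}(R)\cov\widehat{\rho}$ and $C_{\mathrm{cost}}(\widehat{\rho})\le R$.

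For the converse $C_{\mathrm{cost}}(\widehat{\rho})\ge\overline{\mathcal{F}}(\widehat{\rho})$, take any $R>C_{\mathrm{cost}}(\widehat{\rho})$ and any $\epsilon>0$. There exist covariant channels $\mathcal{E}_m$ with $\sigma_m\coloneqq\mathcal{E}_m(\phi_{\mathrm{coh}}^{\otimes \lceil Rm\rceil})\in B^{\epsilon}(\rho_m)$ for large $m$. The triangle inequality gives $B^{\epsilon}(\sigma_m)\subseteq B^{2\epsilon}(\rho_m)$, while monotonicity of $\mathcal{F}_{\mathrm{max}}$ under covariant channels (applied pointwise in the smoothing ball via data-processing) yields
\[
\mathcal{F}_{\mathrm{max}}^{2\epsilon}(\rho_m)\le\mathcal{F}_{\mathrm{max}}^{\epsilon}(\sigma_m)\le\mathcal{F}_{\mathrm{max}}^{\epsilon}\bigl(\phi_{\mathrm{coh}}^{\otimes \lceil Rm\rceil}\bigr).
\]
Dividing by $m$, taking $\limsup_m$, and invoking the i.i.d. consequence of Theorem~\ref{thm:cost_dist_Finf_F0} that $\lim_{\epsilon\to 0}\limsup_k\tfrac{1}{k}\mathcal{F}_{\mathrm{max}}^{\epsilon}(\phi_{\mathrm{coh}}^{\otimes k})=\mathcal{F}(\phi_{\mathrm{coh}})=1$ bounds the left-hand side by $R$ for every $\epsilon>0$. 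Letting $\epsilon\to 0$ gives $\overline{\mathcal{F}}(\widehat{\rho})\le R$, and letting $R\downarrow C_{\mathrm{cost}}(\widehat{\rho})$ concludes.

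The key auxiliary statement supporting both halves is the monotonicity of $\mathcal{F}_{\mathrm{max}}$ under covariant channels, which I would prove via a covariant Stinespring dilation: any covariant $\mathcal{E}:S\to S'$ admits an isometry $V:S\to S'\otimes E$ with $VH_S=(H_{S'}+H_E)V$ and integer-eigenvalue $H_E$. Given any purification $|\Phi_\rho\rangle_{SA}$ of $\rho$ with integer-eigenvalue $H_A$, the vector $(V\otimes I_A)|\Phi_\rho\rangle$ purifies $\mathcal{E}(\rho)$ on $S'EA$ and has the same total energy distribution as $|\Phi_\rho\rangle$ on $SA$, because $V$ preserves the spectral decomposition of $H_S$; since the pure-state $\mathcal{F}_{\mathrm{max}}$ depends only on the energy distribution, the two pure-state max-QFIs agree, and taking the infimum over purifications yields $\mathcal{F}_{\mathrm{max}}(\mathcal{E}(\rho))\le\mathcal{F}_{\mathrm{max}}(\rho)$. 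The main obstacle is the careful bookkeeping of the two nested infima (over smoothing balls and over purifications) interleaved with the limits $m\to\infty$ and $\epsilon\to 0$; the two-step smoothing $\epsilon\mapsto 2\epsilon$ in the converse and the diagonal choice of near-optimal $\sigma_m$ and $\Phi_{\sigma_m}$ in the achievability are precisely the places where this interaction must be handled delicately.
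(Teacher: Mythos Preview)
Your argument is correct, but it is organized rather differently from the paper's. The paper does not bootstrap through Theorem~\ref{thm:cost_dist_Finf_F0}; instead it literally repeats the first half of that proof verbatim with $\rho_m$ in place of $\psi_m$. For the converse it goes through the interconversion $\widehat{\phi_{\mathrm{coh}}}(R)\leftrightarrow\widehat{\chi}_{R/4}$ and then invokes Lemma~\ref{lem:purification_majorization} directly: for any covariant $\mathcal{E}$, the state $\mathcal{E}(\chi_\lambda)$ admits a purification whose energy distribution is exactly $\mathrm{P}_\lambda$, which immediately gives $\mathcal{F}_{\mathrm{max}}^\epsilon(\rho_m)\le 4(\lambda^{\epsilon/2}+\delta)m$ without any appeal to the i.i.d.\ rate of $\phi_{\mathrm{coh}}$. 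For the achievability, the paper again works explicitly via $\mathrm{P}_{(\lambda^{\epsilon/2}+2\delta)m}\asucc p_{\Phi_m}$ and the $\widehat{\chi}\to\widehat{\phi_{\mathrm{coh}}}$ interconversion, rather than applying Theorem~\ref{thm:cost_dist_Finf_F0} to the purified sequence as a black box.

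Your route replaces these two concrete steps by two abstract ones: (i) a general monotonicity lemma $\mathcal{F}_{\mathrm{max}}(\mathcal{E}(\rho))\le\mathcal{F}_{\mathrm{max}}(\rho)$ for covariant $\mathcal{E}$, proved via the covariant isometric dilation, and (ii) the i.i.d.\ evaluation $\overline{\mathcal{F}}(\widehat{\phi_{\mathrm{coh}}}(R))=R$. The monotonicity statement is a clean strengthening of Lemma~\ref{lem:purification_majorization} (which is essentially its special case at $\rho=\chi_\lambda$) and is worth stating in its own right; the cost is that your converse now depends on the i.i.d.\ computation in the Supplemental Material, whereas the paper's converse is entirely self-contained. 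On the achievability side, note that when you apply Theorem~\ref{thm:cost_dist_Finf_F0} to $\widehat{\Phi_\sigma}$ you are implicitly invoking its extension to sequences of systems with varying (integer-spectrum) Hamiltonians, since the purifying ancillas $A_m$ need not be fixed; this is fine, but should be said explicitly.
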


\textit{\textbf{One-shot convertibility between pure states.}}---
We here define a notion of asymmetric majorization, which we abbreviate a-majorization, as follows:
\begin{definition}
For probability distributions $p=\{p(n)\}_{n\in\mathbb{Z}}$ and $q=\{q(n)\}_{n\in\mathbb{Z}}$, we say that $p$ a-majorizes $q$ iff $p*\widetilde{q}\geq 0$ hold. In this case, we denote $p\asucc q$. 
\end{definition}

For comparison, we review the definition of majorization. 
A probability distribution $p=\{p(i)\}_{i=1}^d$ majorizes another probability distribution $q=\{q(i)\}_{i=1}^d$ iff $\sum_{l=1}^k p^\downarrow(l)\geq \sum_{l=1}^k q^\downarrow(l)$ for all $k=1,\cdots,d$, where $\downarrow$ indicates that the distributions are rearranged in decreasing order so that $p^\downarrow(i)\geq p^\downarrow(j)$ and $q^\downarrow(i)\geq q^\downarrow(j)$ for $i>j$. 

The a-majorization has properties similar to the ordinary majorization \cite{SM}. Among them, a significant one is the following:
\begin{theorem}\label{thm:convertibility_a_majorization}
A pure state $\psi$ is convertible to a pure state $\phi$ by a covariant operation iff $p_\psi\asucc p_\phi$. 
\end{theorem}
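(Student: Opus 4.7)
The plan is to reduce the problem to standard wavefunctions $\ket{\psi}=\sum_n\sqrt{p_\psi(n)}\ket{n}$ and $\ket{\phi}=\sum_n\sqrt{p_\phi(n)}\ket{n}$, which is harmless since energy-conserving unitaries are covariant, and then to invoke the structural fact that every covariant channel admits a Kraus representation $\{K_\alpha\}$ in which each $K_\alpha$ is a tensor operator of some definite charge $\Delta_\alpha$, i.e., $K_\alpha=\sum_n a_\alpha(n)\ket{n-\Delta_\alpha}\bra{n}$ satisfying $e^{-\ii Ht}K_\alpha e^{\ii Ht}=e^{\ii\Delta_\alpha t}K_\alpha$. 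Given this, both directions reduce to elementary convolution algebra.

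For the \emph{only if} direction, I would take a covariant $\mathcal{E}$ with $\mathcal{E}(\psi)=\phi$ in the tensor-operator Kraus form. Purity of the output forces each rank-one term $K_\alpha\ket{\psi}\bra{\psi}K_\alpha^\dagger$ to be proportional to $\ket{\phi}\bra{\phi}$, so $K_\alpha\ket{\psi}=c_\alpha\ket{\phi}$ with $\sum_\alpha|c_\alpha|^2=1$. Reading off energy components gives $a_\alpha(n)\sqrt{p_\psi(n)}=c_\alpha\sqrt{p_\phi(n-\Delta_\alpha)}$ for all $n$; combining this with completeness $\sum_\alpha K_\alpha^\dagger K_\alpha=I$ on $\ket{n}$ with $p_\psi(n)>0$ yields $p_\psi(n)=\sum_\alpha|c_\alpha|^2 p_\phi(n-\Delta_\alpha)$. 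For $n\notin\mathrm{supp}(p_\psi)$ the componentwise equation already forces $p_\phi(n-\Delta_\alpha)=0$ whenever $c_\alpha\neq 0$, so the identity extends to all $n\in\mathbb{Z}$. Setting $r(\Delta)\coloneqq\sum_{\alpha:\Delta_\alpha=\Delta}|c_\alpha|^2\geq 0$, this rewrites as $p_\psi=r*p_\phi$; convolving both sides with $\widetilde{p_\phi}$ and using $\widetilde{p_\phi}*p_\phi=\delta_0$ from Eq.~\eqref{eq:delta_tildeq_q} gives $r=p_\psi*\widetilde{p_\phi}\geq 0$, which is $p_\psi\asucc p_\phi$.

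For the \emph{if} direction, I would set $r\coloneqq p_\psi*\widetilde{p_\phi}\geq 0$. Then $r*p_\phi=p_\psi$ by Eq.~\eqref{eq:delta_tildeq_q}, and $\sum_\Delta r(\Delta)=1$ (using $\sum\widetilde{p_\phi}=1$, obtained by summing $\widetilde{p_\phi}*p_\phi=\delta_0$), so $r$ is a probability distribution on $\mathbb{Z}$. For each $\Delta$ with $r(\Delta)>0$ I would define the charge-$\Delta$ tensor operator
\[
K_\Delta\coloneqq\sqrt{r(\Delta)}\sum_{n:p_\psi(n)>0}\sqrt{\frac{p_\phi(n-\Delta)}{p_\psi(n)}}\,\ket{n-\Delta}\bra{n},
\]
noting that $p_\phi(n-\Delta)>0$ and $r(\Delta)>0$ together force $p_\psi(n)>0$ via $p_\psi=r*p_\phi$. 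A direct computation then gives $K_\Delta\ket{\psi}=\sqrt{r(\Delta)}\ket{\phi}$, hence $\sum_\Delta K_\Delta\ket{\psi}\bra{\psi}K_\Delta^\dagger=\ket{\phi}\bra{\phi}$, while $\sum_\Delta K_\Delta^\dagger K_\Delta$ equals the projector onto $\mathrm{supp}(p_\psi)$; appending the charge-zero Kraus operator equal to the complementary projector completes the channel without altering its action on $\psi$.

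The subtle step is the tensor-operator Kraus decomposition used in the \emph{only if} direction: one must verify that the proportionality $K_\alpha\ket{\psi}\propto\ket{\phi}$ survives the passage from an arbitrary Kraus representation to a charge-diagonalized one. This is handled by the standard fact that any two Kraus representations of the same channel are related by a partial isometry, so a linear combination of vectors all proportional to $\ket{\phi}$ remains proportional to $\ket{\phi}$. Beyond this, the argument is bookkeeping built on convolution identities and the recursive definition of $\widetilde{p_\phi}$ in Eq.~\eqref{eq:definition_tildeq}.
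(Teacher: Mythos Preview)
Your proof is correct and essentially self-contained, whereas the paper takes a more modular route. The paper factors the statement into two pieces: it invokes the prior result of Gour--Spekkens (Theorem~3 in \cite{gour_resource_2008}) that $\psi\cov\phi$ iff $p_\psi=\sum_k w(k)\,\Upsilon_k p_\phi$ for some probability distribution $w$, and then separately proves (its Theorem~\ref{thm:c-majorization}) that the existence of such a $w$ is equivalent to $p_\psi*\widetilde{p_\phi}\geq 0$, the key point being that the convolution equation $p_\psi=w*p_\phi$ has the \emph{unique} solution $w=p_\psi*\widetilde{p_\phi}$. Your argument reproves the Gour--Spekkens step directly: in the ``only if'' direction you extract $p_\psi=r*p_\phi$ from purity plus the charge-graded Kraus form, and in the ``if'' direction you build the covariant channel explicitly from the ratios $\sqrt{p_\phi(n-\Delta)/p_\psi(n)}$. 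The convolution identity $r=p_\psi*\widetilde{p_\phi}$ is then exactly the paper's Theorem~\ref{thm:c-majorization}. What you gain is independence from the cited literature; what the paper gains is a clean separation into an operational statement (convertibility $\Leftrightarrow$ random shifts) and a purely combinatorial one (random shifts $\Leftrightarrow$ a-majorization).

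One small remark: the ``subtle step'' you flag is not actually subtle. Once $\mathcal{E}(\psi)=\phi$ is pure, the relation $K_\alpha\ket{\psi}\propto\ket{\phi}$ holds for \emph{every} Kraus representation of $\mathcal{E}$, since a sum of positive rank-one operators equalling a rank-one projector forces each term to be a nonnegative multiple of that projector. So you can start directly with the charge-graded Kraus operators and the proportionality is automatic; no passage between representations is needed.
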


This is the counterpart in RTA to Nielsen's theorem in entanglement theory \cite{nielsen_conditions_1999}:
A bipartite pure state $\psi$ is convertible to a bipartite pure state $\phi$ by LOCC iff $\lambda_\psi\prec \lambda_\phi$, where $\lambda_\psi$ and $\lambda_\phi$ are the probability distributions given by the Schmidt coefficients of $\psi$ and $\phi$, respectively. This correspondence is the motivation for introducing the terminology of a-majorization.
See Fig.~\ref{fig:rtoa_rtoe}.

We remark that other necessary and sufficient conditions on one-shot convertibility in RTA were proven in earlier studies \cite{gour_resource_2008,marvian_theory_2013,marvian_mashhad_symmetry_2012}.
Our contribution here is to provide the one-shot convertibility condition in terms of a-majorization to make it useful for our purpose to analyze the asymptotic convertibility in the non-i.i.d. regime. 
In particular, this reformulation makes the correspondence between RTA and entanglement theory clearer. 

\begin{figure}
    \centering
    \includegraphics[width=8.6cm]{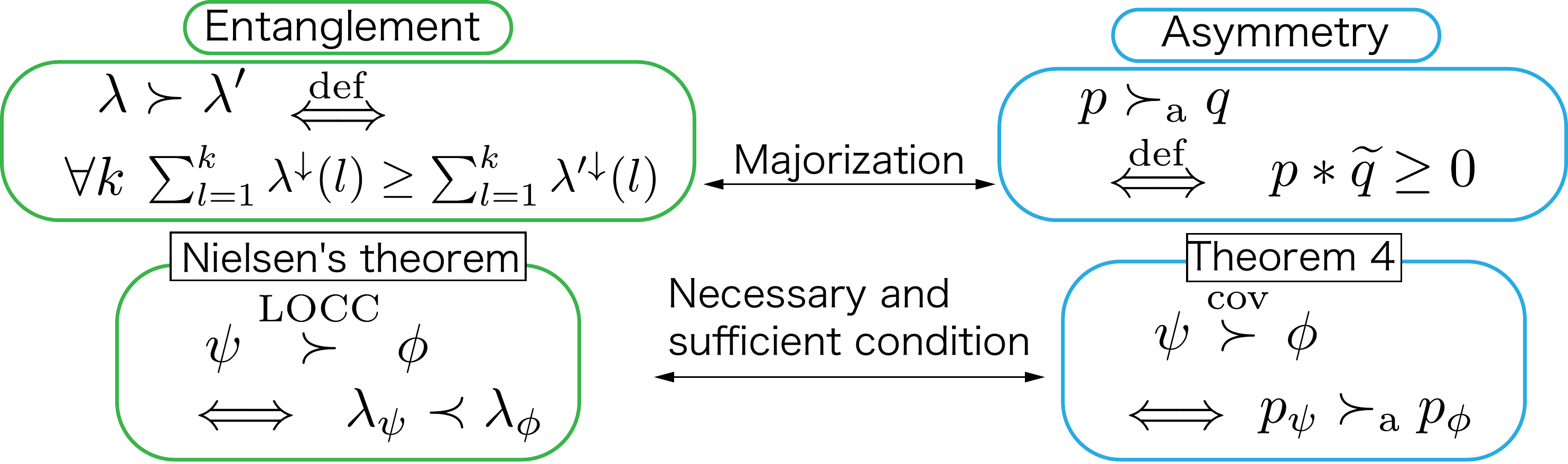}
    \caption{Comparison of entanglement theory and RTA on the one-shot convertibility. If a pure state $\psi$ is exactly convertible to another pure state $\phi$ by LOCC (resp. covariant operations), we denote $\psi \locc \phi$ (resp. $\psi\cov\phi$). }
    \label{fig:rtoa_rtoe}
\end{figure}

\textit{\textbf{Proof of Theorem~\ref{thm:cost_dist_Finf_F0}.}}---
For a Poisson distribution $\mathrm{P}_{\lambda}$, we denote $\chi_{\lambda}\coloneqq \sum_{n,n'=0}^\infty\sqrt{\mathrm{P}_\lambda(n)\mathrm{P}_\lambda(n')}\ket{n}\bra{n'}$ and $\widehat{\chi}_{\lambda}\coloneqq \{\chi_{\lambda m}\}_m$. This sequence is interconvertible with $\widehat{\phi_{\mathrm{coh}}}(R)$ by covariant operations, i.e., $\widehat{\phi_{\mathrm{coh}}}(R)\cov\widehat{\chi}_{R/4}$ and $\widehat{\chi}_{R/4}\cov\widehat{\phi_{\mathrm{coh}}}(R)$ \cite{SM}. 

The followings are key lemmas \cite{SM}:
\begin{lemma}\label{lem:purification_majorization}
Let $\mathcal{E}$ be a covariant channel. A state $\mathcal{E}(\chi_\lambda)$ has a purification $\Psi$ such that $p_{\Psi}=\mathrm{P}_\lambda$, where the Hamiltonian of the ancilla added to purify $\mathcal{E}(\chi_\lambda)$ has integer eigenvalues. 
\end{lemma}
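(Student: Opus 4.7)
The strategy is to construct the desired purification by applying a covariant Stinespring isometry of $\mathcal{E}$ to the pure vector $\ket{\chi_\lambda}\coloneqq \sum_{n=0}^\infty \sqrt{\mathrm{P}_\lambda(n)}\ket{n}$ (note that $\chi_\lambda=\ket{\chi_\lambda}\bra{\chi_\lambda}$ is already pure), and then to verify the energy distribution of the output by a characteristic-function calculation.

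First, I would invoke a covariant Stinespring dilation of $\mathcal{E}$: there exist an ancillary system $E$ with Hamiltonian $H_E$ having integer eigenvalues and an isometry $V:\mathcal{H}_S\to \mathcal{H}_{S}\otimes \mathcal{H}_E$ satisfying $\mathcal{E}(\rho)=\mathrm{Tr}_E[V\rho V^\dagger]$ together with the intertwining relation
\begin{align}
V\, e^{-\ii H t}=e^{-\ii (H+H_E)t}\,V,\quad \forall t\in \mathbb{R}.
\end{align}
That $H_E$ can be chosen with integer spectrum is a standard feature of $U(1)$-covariant channels whose input Hamiltonian $H$ is already integer-valued (Eq.~\eqref{eq:Hamiltonian}): each covariant Kraus operator decomposes into components that shift charge by a fixed integer amount, and those integer shifts are exactly the eigenvalues of $H_E$.

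Second, set $\ket{\Psi}\coloneqq V\ket{\chi_\lambda}$; this is manifestly a purification of $\mathcal{E}(\chi_\lambda)$ on $\mathcal{H}_{S}\otimes \mathcal{H}_E$. To identify its energy distribution with respect to $H+H_E$, I would use $V^\dagger V=\mathbbm{1}$ together with the intertwining relation to compute
\begin{align}
\braket{\Psi|e^{-\ii(H+H_E)t}|\Psi}
&=\braket{\chi_\lambda|V^\dagger e^{-\ii(H+H_E)t}V|\chi_\lambda}\nonumber\\
&=\braket{\chi_\lambda|e^{-\ii H t}|\chi_\lambda}\nonumber\\
&=\sum_{n=0}^\infty \mathrm{P}_\lambda(n)\, e^{-\ii n t}.
\end{align}
Since $H+H_E$ has integer spectrum, Fourier inversion on $[0,2\pi)$ identifies this characteristic function uniquely with $p_\Psi$, giving $p_\Psi=\mathrm{P}_\lambda$.

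The main obstacle is the first step: establishing a covariant Stinespring dilation in which the ancilla Hamiltonian is integer-valued. Once this structural ingredient is in place---a standard consequence of the charge-sector decomposition of $U(1)$-covariant channels---the rest is the short calculation above, relying only on the isometry property $V^\dagger V=\mathbbm{1}$ and covariance of $V$.
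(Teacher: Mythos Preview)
Your proof is correct and follows essentially the same approach as the paper: both invoke the covariant Stinespring dilation (with integer-spectrum ancilla Hamiltonian) and use covariance to conclude that the total energy distribution of the purified output equals $\mathrm{P}_\lambda$. The only cosmetic difference is that the paper uses the unitary-plus-ancilla form $U(\chi_\lambda\otimes\eta_A)U^\dagger$ with $H_A\ket{\eta_A}=0$ and appeals directly to energy conservation, whereas you use the isometric form and verify the energy distribution via the characteristic function---but these are equivalent presentations of the same argument.
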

\begin{lemma}\label{lem:purification_smooth}
Let $\psi$ and $\phi$ be pure states. Assume that a covariant channel satisfies $\mathcal{E}(\psi)\approx_\epsilon\phi$. Then there exists a pure state $\psi'$ such that $\psi'\in B^{2\epsilon^{1/4}}_{\mathrm{pure}}(\psi)$ and $\quad p_{\psi'}\asucc p_\phi$.
\end{lemma}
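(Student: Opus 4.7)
The plan is to construct $|\psi'\rangle$ by reshaping the energy-basis amplitudes of $|\psi\rangle$ so that its energy distribution becomes an explicit convolution $p_\phi*p_v$, where $|v\rangle_B$ is an Uhlmann partner obtained from a covariant Stinespring dilation of $\mathcal{E}$. The a-majorization $p_{\psi'}\asucc p_\phi$ then reduces to non-negativity of the probability distribution $p_v$.

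First, using covariance of $\mathcal{E}$, write $\mathcal{E}(\rho)=\mathrm{Tr}_B[U(\rho\otimes|0\rangle\langle 0|_B)U^\dagger]$ with $U$ an energy-conserving unitary on $S\otimes B$ and $|0\rangle_B$ an energy-$0$ eigenstate of an integer-eigenvalue ancilla Hamiltonian $H_B$ (the same covariant Stinespring dilation that underlies Lemma~\ref{lem:purification_majorization}). Let $|\Psi\rangle:=U(|\psi\rangle\otimes|0\rangle_B)$; by energy conservation $p_\Psi=p_\psi$ and $\mathrm{Tr}_B|\Psi\rangle\langle\Psi|=\mathcal{E}(\psi)$. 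Since $\phi$ is pure and $D(\mathcal{E}(\psi),\phi)\le\epsilon$, the fidelity satisfies $F(\mathcal{E}(\psi),\phi)\ge 1-\epsilon$, so by Uhlmann's theorem there exists $|v\rangle_B$ with $|\langle\Psi|\phi,v\rangle|\ge 1-\epsilon$, where $|\phi,v\rangle:=|\phi\rangle_S\otimes|v\rangle_B$; its energy distribution on $S\otimes B$ is $p_{|\phi,v\rangle}=p_\phi*p_v$.

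The key technical step is a classical-fidelity bound for close pure states: for any $|a\rangle,|b\rangle$ on a system whose Hamiltonian has energy eigenbasis $\{|n,\alpha\rangle\}$ (with $\alpha$ a degeneracy label within energy $n$), grouping $\langle a|b\rangle$ by energy sector and applying Cauchy--Schwarz within each sector gives $|\langle a|b\rangle|\le\sum_n|\sum_\alpha a_{n,\alpha}^* b_{n,\alpha}|\le\sum_n\sqrt{p_a(n)p_b(n)}=:F_{\mathrm{cl}}(p_a,p_b)$. Applied to $|\Psi\rangle$ and $|\phi,v\rangle$, this yields $F_{\mathrm{cl}}(p_\psi,\,p_\phi*p_v)\ge 1-\epsilon$.

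Finally, $|\psi'\rangle$ is defined by amplitude reshaping. Writing $|\psi\rangle=\sum_n\sqrt{p_\psi(n)}\,e^{i\theta_n}|n\rangle_S$ in the energy basis of $S$ (with $\theta_n$ arbitrary whenever $p_\psi(n)=0$), set $|\psi'\rangle:=\sum_n\sqrt{(p_\phi*p_v)(n)}\,e^{i\theta_n}|n\rangle_S$, a pure state with $p_{\psi'}=p_\phi*p_v$. Then $\langle\psi|\psi'\rangle=F_{\mathrm{cl}}(p_\psi,p_\phi*p_v)\ge 1-\epsilon$, so $D(\psi',\psi)=\sqrt{1-|\langle\psi|\psi'\rangle|^2}\le\sqrt{2\epsilon}\le 2\epsilon^{1/4}$ for $\epsilon\in(0,1]$; and using $p_\phi*\widetilde{p_\phi}(n)=\delta_{0,n}$ from Eq.~\eqref{eq:delta_tildeq_q}, $p_{\psi'}*\widetilde{p_\phi}=p_v*(p_\phi*\widetilde{p_\phi})=p_v\ge 0$ since $p_v$ is a probability distribution, so $p_{\psi'}\asucc p_\phi$. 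I expect the main obstacle to be the classical-fidelity inequality: quantum closeness of pure states does not in general transfer into closeness of their energy distributions, and it is exactly this Cauchy--Schwarz step that lets us convert the $1-\epsilon$ Uhlmann overlap into a clean inner-product bound between $|\psi'\rangle$ and $|\psi\rangle$.
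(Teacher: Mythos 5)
Your proof is correct, and it takes a genuinely different route from the paper's. The paper works at the level of Kraus operators: it invokes the explicit form $K_{k,u}=\sum_n c_n^{(k,u)}\ket{n-k}\bra{n}$ of a covariant channel, decomposes $\mathcal{E}(\psi)$ into the pure-state ensemble $\{q^{(k,u)},\phi^{(k,u)}\}$, uses concavity of fidelity over that ensemble to obtain $\dtv\bigl(p_\psi,\,w*p_\phi\bigr)\leq\sqrt{1-(1-\epsilon)^2}$ with $w(k)=\sum_u q^{(k,u)}$ (Lemma~\ref{lem:tr_to_dTV}), and then converts this total-variation closeness of energy distributions back into trace-distance closeness of states via Lemma~\ref{lem:covariant_dTV}, which costs an extra square root and yields the bound $\sqrt{2\sqrt{1-(1-\epsilon)^2}}$. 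You instead purify: the covariant Stinespring dilation (the same ingredient the paper uses for Lemma~\ref{lem:purification_majorization}) plus Uhlmann gives a product purification $\ket{\phi}\otimes\ket{v}$ with overlap at least $1-\epsilon$ against $U(\ket{\psi}\otimes\ket{0})$; the sector-wise Cauchy--Schwarz step transfers this to the classical fidelity of the total-energy distributions; and the phase-matched reshaping makes $\braket{\psi|\psi'}$ literally equal to that classical fidelity, so no $\dtv$-to-trace-distance conversion is needed. Your resulting bound $D(\psi,\psi')\leq\sqrt{2\epsilon}$ is strictly tighter than the paper's, both being $\leq 2\epsilon^{1/4}$, and your route has the aesthetic advantage of deriving Lemmas~\ref{lem:purification_majorization} and \ref{lem:purification_smooth} from one and the same dilation picture. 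The closing step $p_{\psi'}*\widetilde{p_\phi}=p_v\geq 0$ is the exact analogue of the paper's $w=p_{\psi'}*\widetilde{p_\phi}\geq 0$.

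One caveat, which you share with the paper rather than introduce: the ancilla Hamiltonian $H_B$ in general has negative integer eigenvalues (they are needed whenever $\mathcal{E}$ has energy-raising Kraus operators), so $p_\phi*p_v$ may carry weight on negative integers --- bounded by $1-(1-\epsilon)^2$ via your classical-fidelity inequality, but generically nonzero. In that case $\sum_n\sqrt{(p_\phi*p_v)(n)}\,e^{\ii\theta_n}\ket{n}_S$ is not literally a state of the system with Hamiltonian $\sum_{n\geq 0}n\ket{n}\bra{n}$, and truncating to $n\geq 0$ would destroy the clean identity $p_{\psi'}*\widetilde{p_\phi}=p_v$. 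The paper's $q=w*p_\phi$ has exactly the same feature, so both proofs implicitly require $\psi'$ (and the states entering $\mathcal{F}^{\epsilon}_{\mathrm{min}}$) to live on a ladder extended to finitely many negative energy levels; this is consistent with the paper's definition of a-majorization over distributions on $\mathbb{Z}$, but it is worth flagging if you want your write-up to be airtight.
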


To show $C_{\mathrm{cost}}(\widehat{\psi})=\overline{\mathcal{F}}(\widehat{\psi})$, we introduce $C_{\mathrm{cost}}^\epsilon\left(\widehat{\rho}\right)\coloneqq \inf\left\{R\middle| \widehat{\phi_{\mathrm{coh}}}(R)\cov_{\epsilon}\widehat{\rho}\right\}$. 
Defining $4\lambda^{\epsilon/2}\coloneqq C_{\mathrm{cost}}^{\epsilon/2}(\widehat{\psi})$, for any $\delta>0$, there exists $\delta'$ such that $\delta> \delta'\geq 0$ and $\widehat{\phi_{\mathrm{coh}}}(4(\lambda^{\epsilon/2}+\delta'))\cov_{\epsilon/2}\widehat{\psi}$. Since $\widehat{\chi}_{(\lambda^{\epsilon/2}+\delta')}\cov_{\epsilon/2}\widehat{\phi_{\mathrm{coh}}}(4(\lambda^{\epsilon/2}+\delta'))$, we have $\widehat{\chi}_{(\lambda^{\epsilon/2}+\delta)}\cov_{\epsilon}\widehat{\psi}$, where we have used the fact that $\mathrm{P}_{\lambda}\asucc \mathrm{P}_{\lambda'}$ holds for any $\lambda\geq \lambda'$ \cite{SM}. 
From Lemma~\ref{lem:purification_majorization}, for all sufficiently large $m$, there exists a state $\rho_m\in B^\epsilon(\psi_m)$ whose purification $\Phi_m$ satisfies $\mathrm{P}_{(\lambda^{\epsilon/2}+\delta)m}= p_{\Phi_m}$. Therefore, $4(\lambda^{\epsilon/2}+\delta)m\geq \mathcal{F}_{\mathrm{max}}^\epsilon\left(\psi_m\right)$ for sufficiently large $m$, which implies
\begin{align}
    \forall \delta>0, \quad C_{\mathrm{cost}}^{\epsilon/2}(\widehat{\psi})+4\delta\geq \limsup_{m\to\infty}\frac{1}{m}\mathcal{F}_{\mathrm{max}}^\epsilon(\psi_m).
\end{align}
As $\epsilon\to +0$, we get $C_{\mathrm{cost}}(\widehat{\psi})\geq \overline{\mathcal{F}}(\widehat{\psi})$.

To show the opposite inequality, 
we define $4\lambda_m^{\epsilon/2}\coloneqq \mathcal{F}^{\epsilon/2}_{\mathrm{max}}(\psi_m)$. For any $\delta>0$, there exist $\delta_m'$, satisfying $\delta>\delta_m'\geq0$, such that there exist a state $\rho_m\in B^{\epsilon/2}(\psi_m)$ and its purification $\Phi_m$ satisfying $\mathrm{P}_{(\lambda^{\epsilon/2}_m+\delta_m')}\asucc p_{\Phi_m}$. 
Note that for all sufficiently large $m$, $m(\lambda^{\epsilon/2}+\delta)\geq \lambda_m^{\epsilon/2}$ holds for $\lambda^{\epsilon/2}\coloneqq \limsup_{m\to\infty}\frac{1}{m}\lambda_m^{\epsilon/2}$. Therefore, we get $\mathrm{P}_{(\lambda^{\epsilon/2}+2\delta)m}\asucc p_{\Phi_m}$, where we have used $m\delta > \delta_m'$.
Since $\widehat{\phi_{\mathrm{coh}}}(4(\lambda^{\epsilon/2}+2\delta))\cov_{\epsilon/2}\widehat{\chi}_{(\lambda^{\epsilon/2}+2\delta)}$, we have $\widehat{\phi_{\mathrm{coh}}}(4(\lambda^{\epsilon/2}+2\delta))\cov_{\epsilon/2}\{\Phi_m\}_m$. 
Since the partial trace is a covariant operation, we have $\widehat{\phi_{\mathrm{coh}}}(4(\lambda^{\epsilon/2}+2\delta))\cov_{\epsilon}\widehat{\psi}$. Therefore, 
\begin{align}
     \forall \delta>0, \quad 
     C_{\mathrm{cost}}^\epsilon (\widehat{\psi})\leq \limsup_{m\to\infty}\frac{1}{m}\mathcal{F}_{\mathrm{max}}^{\epsilon/2}(\psi_m)+8\delta.
\end{align}
As $\epsilon\to +0$, we get $ C_{\mathrm{cost}} (\widehat{\psi})\leq \overline{\mathcal{F}}(\widehat{\psi})$. Therefore, $C_{\mathrm{cost}} (\widehat{\psi})=\overline{\mathcal{F}}(\widehat{\psi})$.

To show $C_{\mathrm{dist}}(\widehat{\psi})=\underline{\mathcal{F}}(\widehat{\psi})$, we introduce $C_{\mathrm{dist}}^\epsilon(\widehat{\psi})\coloneqq \sup\{R\mid \widehat{\psi}\cov_{\epsilon} \widehat{\phi_{\mathrm{coh}}}(R)\}$.
Defining $4\lambda^{\epsilon/2}\coloneqq C_{\mathrm{dist}}^{\epsilon/2}(\widehat{\psi})$, for any $\delta>0$, there exists $\delta'$ such that $\delta>\delta'\geq 0$ and $\widehat{\psi}\cov_{\epsilon/2}\widehat{\phi_{\mathrm{coh}}}(4(\lambda^{\epsilon/2}-\delta'))$. Since $\widehat{\phi_{\mathrm{coh}}}(4(\lambda^{\epsilon/2}-\delta'))\cov_{\epsilon/2}\widehat{\chi}_{\lambda^{\epsilon/2}-\delta'}$, we get $\widehat{\psi}\cov_{\epsilon}\widehat{\chi}_{\lambda^{\epsilon/2}-\delta'}$. 
From Lemma~\ref{lem:purification_smooth}, for all sufficiently large $m$, there exist pure states $\psi_m'\in B_{\mathrm{pure}}^{2\epsilon^{1/4}}(\psi_m)$ such that $\psi_m'\asucc \mathrm{P}_{(\lambda^{\epsilon/2}-\delta)m}$, where we used $\delta>\delta'$. Therefore, 
\begin{align}
     \forall \delta>0, \quad C_{\mathrm{dist}}^{\epsilon/2}(\widehat{\psi}) -4\delta\leq \liminf_{m\to\infty}\frac{1}{m}\mathcal{F}_{\mathrm{min}}^{2\epsilon^{1/4}}(\widehat{\psi}).
\end{align}
As $\epsilon\to +0$, we get $C_{\mathrm{dist}}(\widehat{\psi})\leq \underline{\mathcal{F}}(\widehat{\psi})$.

To show the opposite inequality,
we define $4\lambda_m^{\epsilon/2}\coloneqq \mathcal{F}_{\mathrm{min}}(\psi_m)$. 
For any $\delta>0$, there exists $\delta'_m$, satisfying $\delta>\delta'_m\geq0$, such that there exists a pure state $\psi_m'\in B_{\mathrm{pure}}^{\epsilon/2} (\psi_m)$ satisfying $p_{\psi'_m}\asucc\mathrm{P}_{\lambda_m^{\epsilon/2}-\delta'_m}$. 
For all sufficiently large $m$, $\lambda_m^{\epsilon/2}\geq m (\lambda^{\epsilon/2}-\delta)$, where $\lambda^{\epsilon/2}\coloneqq \liminf_{m\to\infty}\frac{1}{m}\lambda_m^{\epsilon/2}$. Since $m\delta>\delta_m'$, we have $p_{\psi'_m}\asucc\mathrm{P}_{(\lambda^{\epsilon/2}-2\delta)m}$. By using
$\widehat{\chi}_{\lambda^{\epsilon/2}-2\delta}\cov_{\epsilon/2}\widehat{\phi_{\mathrm{coh}}}(4(\lambda^{\epsilon/2}-2\delta))$ and $\psi_m'\in B^{\epsilon/2}(\psi_m)$, we get $\widehat{\psi}\cov_\epsilon\widehat{\phi_{\mathrm{coh}}}(4(\lambda^{\epsilon/2}-2\delta))$, which implies
\begin{align}
    \forall \delta>0,\quad C_{\mathrm{dist}}^{\epsilon}(\widehat{\psi})\geq \liminf_{m\to\infty}\frac{1}{m}\mathcal{F}_{\mathrm{min}}^{\epsilon/2}(\psi_m)-8\delta.
\end{align}
As $\epsilon \to +0$, we get $C_{\mathrm{dist}}(\widehat{\psi})\geq \underline{\mathcal{F}}(\widehat{\psi})$. Therefore, $C_{\mathrm{dist}}(\widehat{\psi})= \underline{\mathcal{F}}(\widehat{\psi})$.

\textit{\textbf{Conclusion and Discussions.}}---
In this Letter, we established the pure-state conversion theory in RTA in the asymptotic non-i.i.d. regime. Unlike entanglement theory, the traditional information-spectrum method for entropy cannot be applied to RTA since its standard measure, QFI, is quite different from entropy. 
To overcome this issue, we constructed an information-spectrum approach for QFI by carefully analyzing the correspondence between RTA and entanglement theory. It opens the possibility of exploring a unified understanding of asymptotic conversion theory in each branch of quantum resource theories by extending the information-spectrum method for its resource measure. 
%Such an extension will initiate studies on problems in the non-i.i.d regime in research areas that have been out of the scope of the traditional information-spectrum method.
Such an extension may trigger research that has been out of the scope of the information-spectrum method.
We speculate that the information-spectrum approach for QFI can be helpful in research areas where QFI plays an essential role, such as in non-equilibrium thermodynamics \cite{ito_stochastic_2020} and general resource theories \cite{tan_fisher_2021}. 
%Our result shows that the information-spectrum approach for QFI can be constructed in a very similar way to the traditional information-spectrum method by carefully analyzing the correspondence between RTA and entanglement theory. 

%In this Letter, we established the asymptotic conversion theory for pure states in RTA in the non-i.i.d. regime. To obtain this result, we extended the information-spectrum method to QFI. We first define the max- and min-QFI, which are the counterparts in RTA to the max- and min-entropies. By smoothing the max- and min-QFI, we obtain the spectral sup- and inf-QFI rates as counterparts in RTA to the spectral sup- and inf-entropy rates. The spectral QFI rates are precisely equal to the coherence cost and the distillable coherence in RTA in the non-i.i.d. regime. Our result shows that the correspondence between QFI in RTA and entanglement entropy in entanglement theory is valid even in the non-i.i.d. regime. 

\let\oldaddcontentsline\addcontentsline
\renewcommand{\addcontentsline}[3]{}

\begin{acknowledgments}
The authors thank Achim Kempf for a fruitful discussion. 
KY acknowledges support from the JSPS Overseas Research Fellowships. 
HT acknowledges supports from JSPS Grants-in-Aid for Scientific Research (JP19K14610), JST PRESTO (JPMJPR2014), and JST MOONSHOT (JPMJMS2061).
\end{acknowledgments}

%\bibliographystyle{apsrev4-1}
%\bibliography{references,sm}

%merlin.mbs apsrev4-1.bst 2010-07-25 4.21a (PWD, AO, DPC) hacked
%Control: key (0)
%Control: author (72) initials jnrlst
%Control: editor formatted (1) identically to author
%Control: production of article title (-1) disabled
%Control: page (0) single
%Control: year (1) truncated
%Control: production of eprint (0) enabled
%

\let\addcontentsline\oldaddcontentsline

\appendix
\widetext

\clearpage

\setcounter{page}{1}
\begin{center}
{\large \bf Supplemental Material for \protect \\ 
``\titleofpaper''}\\
\vspace*{0.3cm}
Koji Yamaguchi$^{1}$ and Hiroyasu Tajima$^{2,3}$\\
\vspace*{0.1cm}
$^{1}${\small \it Department of Applied Mathematics, University of Waterloo, Waterloo, Ontario, N2L
3G1, Canada}
\\
$^{2}${\small \it Department of Communication Engineering and Informatics, University of Electro-Communications, 1-5-1 Chofugaoka, Chofu, Tokyo, 182-8585, Japan}
\\
$^{3}${\small \it JST, PRESTO, 4-1-8 Honcho, Kawaguchi, Saitama, 332-0012, Japan}
\end{center}

\renewcommand{\theequation}{S.\arabic{equation}}
\setcounter{equation}{0}

\tableofcontents

\vspace{1cm}
In this Supplemental Material, we first complete the proof of theorems in the main text. We then relate the asymptotic convertibility of pure states in harmonic oscillator systems to that of pure states having a finite period. By using these results, we extend Theorem~\ref{thm:cost_dist_Finf_F0} to an arbitrary sequence of systems with any Hamiltonian in pure states having a finite period. Finally, we show that the information-spectral QFI rates are equal to the Fisher information for i.i.d. sequence of pure states with a finite period, which is consistent with the results in the i.i.d. setting \cite{gour_resource_2008,marvian_operational_2022}.

\section{The generating function of the Poisson distribution and its reciprocal}
For a sequence $q=\{q(n)\}_{n\in\mathbb{Z}}$, we have defined $\widetilde{q}$ as a sequence that satisfies Eq.~\eqref{eq:delta_tildeq_q}. Although it can be constructed recursively by Eq.~\eqref{eq:definition_tildeq}, the method of the generating function \cite{wilf_generatingfunctionology_2011} is sometimes useful. For simplicity, we assume that $n_\star=0$. That is, $q(n)=0$ for $n<0$ and $q(0)\neq 0$. 
A generating function of $q=\{q(n)\}_{n=0}^\infty$ is defined as a formal series given by
\begin{align}
    f(z)\coloneqq \sum_{n=0}^\infty q(n)z^n. 
\end{align}
Its reciprocal $1/f(z)$ that satisfies
\begin{align}
    1=f(z)\times 1/f(z)\label{eq:reciprocal}
\end{align}
for all $z$ exists if and only if $q(0)\neq 0$ \cite{wilf_generatingfunctionology_2011}. Let $a=\{a(n)\}_{n=0}^\infty$ be a sequence generated by $1/f(z)$. From Eq.~\eqref{eq:reciprocal}, it satisfies
\begin{align}
    \delta_0=a*q,
\end{align}
where we have defined $a(n)=0$ for $n<0$ and $\delta_0$ as a sequence defined by $\delta_0=\{\delta_{n,0}\}_{n\in\mathbb{Z}}$. In other words, $a$ is the same as the sequence $\widetilde{q}$ defined in Eq.~\eqref{eq:definition_tildeq}. 
We remark that a similar technique, based on characteristic functions instead of generating functions, has been used in \cite{marvian_theory_2013,marvian_mashhad_symmetry_2012} to derive a necessary and sufficient condition on the one-shot convertibility in RTA.

Let us apply this method to the Poisson distribution.
For a sequence $\{\mathrm{P}_{\lambda}(n)\}_{n=0}^\infty$, its generating function is calculated as
\begin{align}
    f(z)=\sum_{n=0}^\infty \mathrm{P}_{\lambda}(n)z^n=e^{-\lambda}e^{\lambda z}.
\end{align}
Therefore, its reciprocal is given by
\begin{align}
    1/f(z)=e^{-(-\lambda)}e^{(-\lambda) z}.
\end{align}
Since
\begin{align}
    \sum_{n=0}^\infty e^{-(-\lambda)}\frac{(-\lambda)^n}{n!}z^n=e^{-(-\lambda)}e^{(-\lambda) z}=1/f(z),
\end{align}
the sequence generated by $1/f(z)$ is given by $\{\mathrm{P}_{-\lambda}(n)\}_{n=0}^\infty$. That is, $\widetilde{\mathrm{P}_{\lambda}}=\mathrm{P}_{-\lambda}$.

This result can also be checked directly: Let $\lambda$ and $\lambda'$ be real parameters. A straightforward calculation shows that
\begin{align}
    \sum_{k\in\mathbb{Z}}\mathrm{P}_{\lambda}(k)\mathrm{P}_{\lambda'}(n-k)
    &=\begin{cases}
    \sum_{k=0}^n\frac{\lambda^k}{k!}\frac{\lambda^{\prime (n-k)}}{(n-k)!}&\quad (n\geq 0)\\
    0&\quad (n<0)
    \end{cases}\\
    &=\mathrm{P}_{\lambda+\lambda'}(n).\label{eq:poisson_convolution}
\end{align}
Since $\mathrm{P}_0=\delta_{0}$, we get $\widetilde{\mathrm{P}_{\lambda}}=\mathrm{P}_{-\lambda}$. 

Another immediate consequence of Eq.~\eqref{eq:poisson_convolution} is the fact that
\begin{align}
    \lambda\geq \lambda'\geq 0\Longleftrightarrow\mathrm{P}_{\lambda}\asucc\mathrm{P}_{\lambda'}.\label{eq:poisson_NScond}
\end{align}
This is because $\mathrm{P}_{\sigma}(n)$ is non-negative for all $n$ if and only if $\sigma\geq 0$.

\section{Proof of Theorem~\ref{thm:convertibility_a_majorization}}
To prove Theorem~\ref{thm:convertibility_a_majorization}, a key theorem is the following:
\begin{theorem}\label{thm:c-majorization}
For two probability distributions $\{p(n)\}_{n=0}^\infty$ and $\{q(n)\}_{n=0}^\infty$, conditions (i) and (ii) are equivalent:
\begin{enumerate}[(i)]
    \item There exists a probability distribution $\{w(k)\}_{k\in\mathbb{Z}}$ such that $p=\sum_{k\in\mathbb{Z}}w(k)\Upsilon_k q$, where $\Upsilon_k$ is a shift operator on probability distribution such that $(\Upsilon_kp)(n)=p(n-k)$.
    \item $p\asucc q$.
\end{enumerate}
\end{theorem}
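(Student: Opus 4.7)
The plan is to recognize the decomposition in (i) as a convolution, $p = w \ast q$, and then to exploit the defining identity $\widetilde{q}\ast q = \delta_0$ together with associativity of convolution. Unrolling the sum gives $p(n) = \sum_k w(k) q(n-k) = (w\ast q)(n)$, so (i) is equivalent to the existence of a probability distribution $w$ on $\mathbb{Z}$ with $p = w \ast q$.

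For the forward direction (i)$\Rightarrow$(ii), I would simply convolve $p = w \ast q$ on the right by $\widetilde{q}$ and apply associativity:
\begin{equation}
p \ast \widetilde{q} \;=\; w \ast (q \ast \widetilde{q}) \;=\; w \ast \delta_0 \;=\; w.
\end{equation}
Since $w$ is a probability distribution it is non-negative, and therefore $p \ast \widetilde{q} \geq 0$, which is the definition of $p \asucc q$.

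For the converse (ii)$\Rightarrow$(i), the natural candidate is $w \coloneqq p \ast \widetilde{q}$; non-negativity $w \geq 0$ is precisely the hypothesis $p \asucc q$. Convolving once more yields $w \ast q = p \ast (\widetilde{q} \ast q) = p \ast \delta_0 = p$, which unpacked is exactly $p = \sum_k w(k)\Upsilon_k q$. Hence only the normalization $\sum_n w(n) = 1$ remains to be checked before $w$ qualifies as a probability distribution.

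The main technical step is this normalization, and it is where I expect the only real care to be needed. Because $w \geq 0$, Tonelli's theorem legitimizes the interchange of summations, and
\begin{equation}
\sum_n w(n) \;=\; \sum_n \sum_k p(k)\widetilde{q}(n-k) \;=\; \Bigl(\sum_k p(k)\Bigr)\Bigl(\sum_m \widetilde{q}(m)\Bigr) \;=\; 1 \cdot 1 \;=\; 1,
\end{equation}
using $\sum_k p(k) = 1$ together with the identity $\sum_m \widetilde{q}(m) = 1$ recorded earlier (a consequence of evaluating $\widetilde{q}\ast q = \delta_0$ via generating functions at $z=1$). Because $p$ is supported on $[0,\infty)$ and $\widetilde{q}$ on $[-n_\star,\infty)$ with $n_\star \geq 0$, each individual value $w(n)$ is a finite sum, so no delicate convergence issue intrudes on the convolutions themselves. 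Overall, the argument is essentially an algebraic manipulation of the convolution identity $\widetilde{q}\ast q = \delta_0$, and I do not anticipate a substantive obstacle beyond tracking non-negativity and normalization of $w$.
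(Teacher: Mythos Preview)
Your argument is correct and mirrors the paper's proof almost exactly: both set $w \coloneqq p * \widetilde{q}$, verify $w * q = p$ via associativity and $\widetilde{q} * q = \delta_0$, and check $\sum_n w(n) = 1$ using $\sum_m \widetilde{q}(m) = 1$. One small correction: Tonelli requires the \emph{summand} $p(k)\widetilde{q}(n-k)$ to be non-negative, which fails since $\widetilde{q}$ may have negative entries; the clean fix is to apply Tonelli instead to the already-established identity $p = w * q$ (where $w \geq 0$ and $q \geq 0$), giving $1 = \sum_n p(n) = \sum_k w(k) \sum_n q(n-k) = \sum_k w(k)$ directly.
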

This theorem corresponds to the Hardy-Littlewood-P\'olya theorem \cite{hardy_simple_1929} in the theory of majorization, which states that the following conditions (a) and (b) are equivalent: (a) There exists a doubly stochastic matrix $D$ such that $p=Dq$. (b) $p$ majorizes $q$, i.e., $p\succ q$. The correspondence becomes more clear by using Birkhoff's theorem \cite{birkhoff_tres_1946}, which states that any doubly stochastic matrix $D$ can be written as $D=\sum_{k}r(k)P_k$, where $\{r(k)\}$ is a probability distribution, $P_k$ are the permutation matrices, and the sum is taken over the set of all permutation matrices. For details, see, e.g., \cite{marshall_inequalities_2011}.
\begin{proof}[Proof of Theorem~\ref{thm:c-majorization}]
For a probability distribution $\{q(n)\}_{n\in\mathbb{Z}}$, the sequence $\widetilde{q}$ defined in Eq.\eqref{eq:definition_tildeq} satisfies Eq.~\eqref{eq:delta_tildeq_q}.
Therefore,
\begin{align}
    \sum_{n\in\mathbb{Z}}\delta_0(n)=\sum_{n\in\mathbb{Z}}\sum_{k\in\mathbb{Z}}\widetilde{q}(k)q(n-k)=\left(\sum_{k\in\mathbb{Z}}\widetilde{q}(k)\right)\left(\sum_{n\in\mathbb{Z}}q(n)\right),
\end{align}
which implies
\begin{align}
    \sum_{k\in\mathbb{Z}}\widetilde{q}(k)=\frac{1}{\sum_{n\in\mathbb{Z}}q(n)}=1.
\end{align}

For a probability distribution $p=\{p(n)\}_{n\in\mathbb{Z}}$, from Eq.~\eqref{eq:delta_tildeq_q}, we have
\begin{align}
    p(n)&=\sum_{l\in\mathbb{Z}}\delta_{0}(n-l)p(l)\\
    &=\sum_{l\in\mathbb{Z}}\sum_{k\in\mathbb{Z}}\widetilde{q}(k)q(n-l-k)p(l)\\
    &=\sum_{k\in\mathbb{Z}}\left(\sum_{l\in\mathbb{Z}}p(l)\widetilde{q}(k-l)\right)q(n-k)\\
    &=\sum_{k\in\mathbb{Z}}\left(p*\widetilde{q}\right)(k)q(n-k).
\end{align}
In addition,
\begin{align}
    \sum_{k\in\mathbb{Z}}\left(p*\widetilde{q}\right)(k)=\sum_{l\in\mathbb{Z}}p(l)\sum_{k\in\mathbb{Z}}\widetilde{q}(k-l)=1
\end{align}
holds since $\sum_{k\in\mathbb{Z}}p(k)=\sum_{k\in\mathbb{Z}}\widetilde{q}(k)=1$. 

Suppose that 
\begin{align}
    p(n)=\sum_{k\in\mathbb{Z}}w(k)q(n-k)\label{eq:weighted_sum}
\end{align}
holds for some $w=\{w(k)\}_{k\in\mathbb{Z}}$. Then we have
\begin{align}
    \sum_{l\in\mathbb{Z}}p(l)\widetilde{q}(k-l)=\sum_{l\in\mathbb{Z}}\sum_{n\in\mathbb{Z}}w(n)q(l-n)\widetilde{q}(k-l)=\sum_{n\in\mathbb{Z}}w(n)\delta_{k-n,0}=w(k)\label{eq:w},
\end{align}
i.e., $w=p*\widetilde{q}$. 
Thus, the sequence $w$ satisfying Eq.~\eqref{eq:weighted_sum} is unique and given by $w=p*\widetilde{q}$. 

Since $w=p*\widetilde{q}$ satisfies $\sum_{k\in\mathbb{Z}}w(k)=1$, there exists a sequence $\{w(k)\}_{k\in\mathbb{Z}}$ that satisfies Eq.~\eqref{eq:weighted_sum}, $w(k)\geq 0$ and $\sum_{k\in\mathbb{Z}}w(k)=1$ if and only if
\begin{align}
    p*\widetilde{q}\geq 0.
\end{align}
\end{proof}

Theorem~\ref{thm:convertibility_a_majorization} is obtained as a corollary of Theorem~\ref{thm:c-majorization} and a theorem in \cite{gour_resource_2008} on the convertibility:
\begin{theorem}[Theorem~3 in \cite{gour_resource_2008}]\label{thm:Gour_Spekkens}
A pure state $\psi$ is convertible to a pure state $\phi$ by a covariant operation if and only if there exists a probability distribution on integers $\{w(k)\}_{k\in\mathbb{Z}}$ such that $p_{\psi}=\sum_{k\in\mathbb{Z}}w(k)\Upsilon_kp_\phi$.
\end{theorem}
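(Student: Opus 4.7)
The plan is to reduce the equivalence to a single uniqueness statement for convolution equations. Condition (i) says exactly that $p = w * q$ for some probability distribution $w$ on $\mathbb{Z}$, since $\sum_k w(k)\Upsilon_k q$ evaluated at $n$ gives $\sum_k w(k) q(n-k) = (w*q)(n)$. Condition (ii) says $p * \widetilde q \geq 0$, where $\widetilde q$ is the unique sequence with $\widetilde q * q = \delta_0$. So the entire theorem boils down to the claim: there exists a nonnegative, normalized $w$ solving $p = w*q$ if and only if the candidate $w \coloneqq p * \widetilde q$ is itself nonnegative.

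First I would establish a normalization lemma: if $q$ is a probability distribution then $\sum_{k}\widetilde q(k) = 1$. This follows by summing both sides of $\widetilde q * q = \delta_0$ over all indices and using Fubini to factor the double sum as $\bigl(\sum_k \widetilde q(k)\bigr)\bigl(\sum_n q(n)\bigr) = 1$, which forces $\sum_k \widetilde q(k) = 1$ because $\sum_n q(n) = 1$. (The sums are finite at each index because $q$ has support in $\mathbb{Z}_{\geq 0}$ and $\widetilde q$ has support bounded below by $-n_\star = 0$.)

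Next I would prove uniqueness of the coefficient sequence: if $p = w*q$ for any sequence $w$ (not assumed nonnegative), then convolving both sides on the right with $\widetilde q$ gives $p*\widetilde q = w * q * \widetilde q = w * \delta_0 = w$. Thus the only possible candidate is $w = p*\widetilde q$. The direction (i)$\Rightarrow$(ii) is then immediate: if (i) holds with some probability distribution $w$, uniqueness forces $w = p*\widetilde q$, hence $p*\widetilde q = w \geq 0$, which is (ii).

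For (ii)$\Rightarrow$(i), I would define $w \coloneqq p*\widetilde q$. By hypothesis $w \geq 0$. Normalization $\sum_k w(k) = 1$ follows by the same factoring argument as above, using $\sum_n p(n) = 1$ and the normalization lemma for $\widetilde q$. Finally, $w * q = p * \widetilde q * q = p * \delta_0 = p$, which is exactly (i). The only subtlety worth checking carefully — and what I expect to be the main technical point — is the associativity and summability of these convolutions: because $q$ is supported on $\mathbb{Z}_{\geq 0}$ and $\widetilde q$ on $\{-n_\star, -n_\star+1,\ldots\}$ (with $n_\star = 0$ here), each convolution evaluated at a fixed $n$ is a finite sum, so associativity $(p*\widetilde q)*q = p*(\widetilde q * q)$ and the interchanges of summation used above are all elementary to justify. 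Once this is noted, the argument is complete.
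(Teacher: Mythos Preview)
Your proposal does not address the stated theorem. The statement asserts an equivalence between \emph{convertibility of pure states under covariant operations} and the existence of a probability distribution $w$ with $p_\psi = \sum_k w(k)\Upsilon_k p_\phi$. Your argument instead proves the equivalence between (i) the existence of such a $w$ and (ii) the condition $p * \widetilde{q} \geq 0$; that is exactly Theorem~\ref{thm:c-majorization} in the paper, not Theorem~\ref{thm:Gour_Spekkens}. There are no ``Condition (i)'' and ``Condition (ii)'' in the statement you were given, and nowhere in your write-up do covariant channels, Kraus operators, or any quantum-mechanical structure appear: it is a purely combinatorial argument about convolution inverses.

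A proof of the stated theorem requires quantum input on both sides. For the forward direction one must exploit the structure of covariant channels --- for instance, that they admit Kraus operators of the form $K_{k,u} = \sum_n c_n^{(k,u)}\ket{n-k}\bra{n}$, each shifting energy by a fixed integer --- and from this deduce the relation between the energy distributions of input and output. For the converse one must explicitly construct a covariant CPTP map realizing the transformation whenever the distributional condition holds. The paper itself does not supply a proof: the result is quoted verbatim as Theorem~3 of Gour--Spekkens~\cite{gour_resource_2008} and used as a black box, which together with Theorem~\ref{thm:c-majorization} yields Theorem~\ref{thm:convertibility_a_majorization}. Your argument is a correct proof of Theorem~\ref{thm:c-majorization}, and in fact matches the paper's own proof of that result almost line for line, but it does not touch the content of the theorem you were asked to prove.
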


\section{The properties of a-majorization $\asucc$}
We first show that the binary relation $\asucc$ is a preorder. 
For any probability distribution $p=\{p(n)\}_{n=0}^\infty$, it holds $p\asucc p$ since $p*\widetilde{p}=\delta_0\geq 0$. For probability distributions $p=\{p(n)\}_{n=0}^\infty$, $q=\{q(n)\}_{n=0}^\infty$ and $r=\{r(n)\}_{n=0}^\infty$ such that $p\asucc q$ and $q\asucc r$, we have
\begin{align}
    \sum_{l\in\mathbb{Z}}p(l)\widetilde{r}(k-l)&=\sum_{l,n\in\mathbb{Z}}p(l)\widetilde{r}(k-n)\delta_{n-l,0}\\
    &=\sum_{l,n\in\mathbb{Z}}p(l)\widetilde{r}(k-n)\sum_{m\in\mathbb{Z}}q(n-m)\widetilde{q}(m-l)\\
    &=\sum_{m\in\mathbb{Z}}\left(\sum_{l\in\mathbb{Z}}p(l)\widetilde{q}(m-l)\right)\left(\sum_{n\in\mathbb{Z}}q(n-m)\widetilde{r}(k-n)\right)\geq 0
\end{align}
for all $k\in\mathbb{Z}$, which implies that $p\asucc r$ holds. Therefore, the binary relation $\asucc$ is a preorder. It should be noted that the majorization relation $\succ$ is also a preorder.

We further show the following proposition:
\begin{proposition}\label{prop:a_majorization_preorder}
For probability distributions $p$ and $q$, $p\asucc q$ and $q\asucc p$ hold if and only if there exists a shift operator $\Upsilon_k$ with an integer $k\in\mathbb{Z}$ such that $p=\Upsilon_k q$. 
\end{proposition}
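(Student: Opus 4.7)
The ``if'' direction is immediate from Theorem~\ref{thm:c-majorization}: if $p = \Upsilon_k q$, then $p$ is a (trivial) convex combination of integer shifts of $q$ with all the weight at $k$, and symmetrically for $q$ in terms of $p$, so both a-majorizations hold. For the converse, my plan is to compose the two a-majorizations and analyze the resulting self-convolution. Applying Theorem~\ref{thm:c-majorization} in both directions produces probability distributions $w$ and $v$ on $\mathbb{Z}$ with $p = w * q$ and $q = v * p$. Substituting the second into the first yields $p = u * p$, where $u \coloneqq w * v$ is itself a probability distribution on $\mathbb{Z}$. The argument then splits into two steps: (i) prove $u = \delta_0$, and (ii) deduce from $w * v = \delta_0$ that each of $w$ and $v$ is a single point mass.

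For step (i) I would exploit that summability of $p$ forces $p(n) \to 0$ as $|n| \to \infty$, so the maximum $M \coloneqq \sup_n p(n) > 0$ is attained on a nonempty \emph{finite} set $S \subset \mathbb{Z}$. Evaluating the fixed-point identity at $n^\star \coloneqq \max S$ gives $M = p(n^\star) = \sum_k u(k)\, p(n^\star - k)$; since each $p(n^\star - k) \leq M$ and the weights $u(k)$ sum to one, equality forces $p(n^\star - k) = M$ for every $k \in \mathrm{supp}(u)$. Hence $\mathrm{supp}(u) \subseteq n^\star - S \subseteq \mathbb{Z}_{\geq 0}$, and the symmetric argument at $\min S$ yields $\mathrm{supp}(u) \subseteq \mathbb{Z}_{\leq 0}$, giving $u = \delta_0$.

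For step (ii), the idea is that if $\mathrm{supp}(w)$ contained two distinct integers $k_1 \neq k_2$, then for any $j \in \mathrm{supp}(v)$ both $(w*v)(k_1 + j) \geq w(k_1) v(j)$ and $(w*v)(k_2 + j) \geq w(k_2) v(j)$ would be strictly positive; but $w*v = \delta_0$ then forces $k_1 + j = k_2 + j = 0$, contradicting $k_1 \neq k_2$. Hence $w = \delta_{k_0}$ for some $k_0 \in \mathbb{Z}$, and consequently $p = w * q = \Upsilon_{k_0} q$. The step I expect to require the most care is the attainment argument in (i): one must ensure the maximum of $p$ is realised on a finite nonempty set rather than merely approached, which is what converts the ``weighted average equals max'' equality into a rigid support constraint on $u$. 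This uses essentially that $p$ is a genuine probability distribution (summable, with total mass one) and would fail for general nonnegative sequences; apart from this, both steps are short bookkeeping arguments.
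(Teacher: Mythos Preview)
Your argument is correct and takes a genuinely different route from the paper's. The paper first shifts both distributions so that $n_\star^{(p)}=n_\star^{(q)}=0$ and then proves $p(n)=q(n)$ for all $n$ by induction, working directly with the recursion defining $\widetilde{p}$ and $\widetilde{q}$: from $(p*\widetilde{q})(n+1)\geq 0$ together with $\sum_{k=0}^{n+1}q(k)\widetilde{q}(n+1-k)=0$ one extracts $p(n+1)\geq q(n+1)$, and the symmetric hypothesis gives the reverse inequality. Your approach instead passes through the structural consequence $p=u*p$ with $u=w*v$ a probability distribution, pins down $u=\delta_0$ by a maximum-principle argument on the finite set where $p$ attains its supremum, and finally observes that a factorisation $\delta_0=w*v$ of probability measures forces each factor to be a point mass.

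What your route buys is independence from the explicit recursive formula for $\widetilde{q}$: it works purely at the level of Theorem~\ref{thm:c-majorization} (existence of a mixing distribution) and is essentially the probabilistic statement that if $X$ and $X+Z$ have the same law for independent integer-valued $Z$ and the law of $X$ is summable, then $Z=0$ almost surely; this would transfer verbatim to any resource theory where two-sided convertibility is witnessed by convolution with probability measures. The paper's inductive approach, by contrast, stays closer to the concrete objects and never needs the attainment-of-supremum step you flagged as the delicate point. Both arguments are short once the key idea is identified.
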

This corresponds to the fact that $\lambda\succ \lambda'$ and $\lambda'\succ\lambda$ hold if and only if there exists a permutation matrix $P$ such that $\lambda=P\lambda' $ in ordinary majorization theory. 

\begin{proof}[Proof of Proposition~\ref{prop:a_majorization_preorder}]
Let us first show the claim for probability distributions $p$ and $q$ such that $n_\star^{(p)}=n_\star^{(q)}=0$, where $n_\star^{(p)}\coloneqq \min\{n\mid p(n)\neq 0\}$ and $n_\star^{(q)}\coloneqq \min\{n\mid q(n)\neq 0\}$. 

Assume that $p*\widetilde{q}\geq 0$ and $q*\widetilde{p}\geq 0$ hold, that is,
\begin{align}
    \sum_{k=0}^np(k)\widetilde{q}(n-k)\geq 0,\quad \sum_{k=0}^nq(k)\widetilde{p}(n-k)\geq 0
\end{align}
for all $n\geq 0$.

For $n=0$, we have
\begin{align}
    p(0)\widetilde{q}(0)\geq 0,\quad q(0)\widetilde{p}(0)\geq 0.
\end{align}
Using $\widetilde{p}(0)=1/p(0)$ and $\widetilde{q}(0)=1/q(0)$, we get
\begin{align}
    p(0)\geq q(0),\quad q(0)\geq p(0).
\end{align}
Therefore, $p(0)=q(0)$. 

Suppose that $p(l)=q(l)$ holds for all $0\leq l\leq n$.
For $n+1$, from the assumption, we have
\begin{align}
    p(n+1)\widetilde{q}(0)+\sum_{k=0}^np(k)\widetilde{q}(n+1-k)\geq 0,\quad  q(n+1)\widetilde{p}(0)+\sum_{k=0}^nq(k)\widetilde{p}(n+1-k)\geq 0.
\end{align}
By using $p(l)=q(l)$ ($0\leq \forall l\leq n$), 
they imply
\begin{align}
    p(n+1)\widetilde{q}(0)+\sum_{k=0}^nq(k)\widetilde{q}(n+1-k)\geq 0,\quad  q(n+1)\widetilde{p}(0)+\sum_{k=0}^np(k)\widetilde{p}(n+1-k)\geq 0.
\end{align}
By construction, $\widetilde{p}$ and $\widetilde{q}$ satisfy
\begin{align}
    \sum_{k=0}^{n+1}p(k)\widetilde{p}(n+1-k)=\sum_{k=0}^{n+1}q(k)\widetilde{q}(n+1-k)=0
\end{align}
for $n\geq 0$. Therefore,
\begin{align}
    p(n+1)\widetilde{q}(0)-q(n+1)\widetilde{q}(0)\geq 0,\quad  q(n+1)\widetilde{p}(0)-p(n+1)\widetilde{p}(0)\geq 0,
\end{align}
which implies $p(n+1)=q(n+1)$, where we have used $\widetilde{p}(0)\neq 0$ and $\widetilde{q}(0)\neq 0$. Therefore, $p=q$. 

On the other hand, if $p=q$, then $p*\widetilde{q}=q*\widetilde{p}=\delta_0\geq 0$. Therefore, we have proved the claim under the assumption that $n_\star^{(p)}=n_\star^{(q)} =0$. 

To generalize this result for general probability distributions $p$ and $q$, consider shifted distributions
\begin{align}
    p'\coloneqq \Upsilon_{n_\star^{(p)}}p,\quad  q'\coloneqq \Upsilon_{n_\star^{(q)}}q.
\end{align}
They satisfy $n_{\star}^{(p')}=n_\star^{(q')}=0$. Note that Eq.\eqref{eq:delta_tildeq_q} implies 
\begin{align}
    \widetilde{p'}=\Upsilon_{-n_\star^{(p)}}\widetilde{p},\quad \widetilde{q'}=\Upsilon_{-n_\star^{(q)}}\widetilde{q}. 
\end{align}
Since conditions $p*\widetilde{q}\geq 0$ and $q*\widetilde{p}\geq 0$ are equivalent to $p'*\widetilde{q'}\geq 0$ and $q'*\widetilde{p'}\geq 0$, they hold if and only if
\begin{align}
    p'=q'
\end{align}
or equivalently, 
\begin{align}
    p=\Upsilon_{k}q,\quad k\coloneqq n_\star^{(q)}-n_\star^{(p)}.
\end{align}

\end{proof}

\section{Properties of $\mathcal{F}_{\mathrm{max}}$ and $\mathcal{F}_{\mathrm{min}}$ for pure states}
Let us prove the monotonicity of the variance under $\asucc$. Let $p$ and $q$ be probability distributions such that $p\asucc q$. Since $p=\sum_{k\in\mathbb{Z}}w(k)\Upsilon_k  q$ for a probability distribution $w=p*\widetilde{q}\geq 0$, we have
\begin{align}
    \mu_p&\coloneqq\sum_{n\in\mathbb{Z}} np(n)=\sum_{k\in\mathbb{Z}}n \sum_{k\in\mathbb{Z}}w(k)q(n-k)\\
    &=\sum_{k,n\in\mathbb{Z}}\left(k+(n-k)\right) w(k)q(n-k)\\
    &=\mu_w+\mu_q,
\end{align}
Similarly, it holds that
\begin{align}
    \mathrm{Var}(p)&=\sum_{n\in\mathbb{Z}}(n-\mu_p)^2 p(n)\\
    &=\sum_{k,n\in\mathbb{Z}}\left(k-\mu_w+(n-k)-\mu_q\right)^2w(k)q(n-k)\\
    &=\sum_{k\in\mathbb{Z}}(k-\mu_w)^2w(k)+\mathrm{Var}(q)\\
    &\geq \mathrm{Var}(q),
\end{align}
where in the last line, we have used $w(k)\geq0$ for all $k$. 
Of course, this monotonicity is expected from the fact that QFI monotonically decreases under a covariant operation and that QFI is four times the variance for pure states. 

Equation~\eqref{eq:Finf_F_F0} is a consequence of this monotonicity. 
To prove it, let us first show an easy but useful lemma:
\begin{lemma}\label{lem:Finf_F0_achievability}
Let $\psi$ be a pure state. 
If $\mathcal{F}_{\mathrm{max}}(\psi)<+\infty$, then for any $\lambda$ such that $4\lambda>\mathcal{F}_{\mathrm{max}}(\psi)$, it holds $\mathrm{P}_\lambda\asucc p_\psi$. Similarly, if $\mathcal{F}_{\mathrm{min}}(\psi)>0$, then for any $ \sigma$ such that $\mathcal{F}_{\mathrm{min}}(\psi)>4\sigma\geq 0$, it holds $p_\psi\asucc \mathrm{P}_{\sigma}$. 
\end{lemma}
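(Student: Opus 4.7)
The plan is to reduce both halves of the lemma to the Poisson semigroup identity $\mathrm{P}_a * \mathrm{P}_b = \mathrm{P}_{a+b}$ established in Eq.~\eqref{eq:poisson_convolution}, together with the elementary facts that the convolution of two nonnegative sequences is nonnegative and that $\mathrm{P}_\mu \geq 0$ whenever $\mu \geq 0$. Intuitively, the defining sets in Eq.~\eqref{eq:max_F_pure} and in the definition of $\mathcal{F}_{\mathrm{min}}$ are closed under convolving with an extra genuine Poisson factor, so the infimum (respectively supremum) is inherited by any strictly larger (respectively smaller) parameter.

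For the first claim, fix $\lambda$ with $4\lambda > \mathcal{F}_{\mathrm{max}}(\psi)$. Since $\mathcal{F}_{\mathrm{max}}(\psi) < +\infty$, the defining set is nonempty, and by the definition of infimum there exists $\lambda_0$ with $\mathcal{F}_{\mathrm{max}}(\psi) \leq 4\lambda_0 < 4\lambda$ and $\mathrm{P}_{\lambda_0} * \widetilde{p_\psi} \geq 0$. Using Eq.~\eqref{eq:poisson_convolution} to factor $\mathrm{P}_\lambda = \mathrm{P}_{\lambda - \lambda_0} * \mathrm{P}_{\lambda_0}$ and the associativity of convolution, I would obtain $\mathrm{P}_\lambda * \widetilde{p_\psi} = \mathrm{P}_{\lambda - \lambda_0} * (\mathrm{P}_{\lambda_0} * \widetilde{p_\psi})$. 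Since $\lambda - \lambda_0 > 0$, the sequence $\mathrm{P}_{\lambda - \lambda_0}$ is a genuine Poisson distribution and hence nonnegative, so convolving it with the nonnegative sequence $\mathrm{P}_{\lambda_0} * \widetilde{p_\psi}$ yields a nonnegative sequence. This is exactly $\mathrm{P}_\lambda \asucc p_\psi$. The second claim is proved by the dual argument: pick $\lambda_0$ with $4\sigma < 4\lambda_0 \leq \mathcal{F}_{\mathrm{min}}(\psi)$ and $p_\psi * \mathrm{P}_{-\lambda_0} \geq 0$, then use $\widetilde{\mathrm{P}_\sigma} = \mathrm{P}_{-\sigma}$ and $\mathrm{P}_{-\sigma} = \mathrm{P}_{-\lambda_0} * \mathrm{P}_{\lambda_0 - \sigma}$ to write $p_\psi * \widetilde{\mathrm{P}_\sigma} = (p_\psi * \mathrm{P}_{-\lambda_0}) * \mathrm{P}_{\lambda_0 - \sigma} \geq 0$, since $\lambda_0 - \sigma > 0$ makes the second factor nonnegative.

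There is no real obstacle here: the lemma is essentially a monotonicity statement, asserting that the families $\{\mathrm{P}_\lambda * \widetilde{p_\psi}\}_\lambda$ and $\{p_\psi * \mathrm{P}_{-\lambda}\}_\lambda$ become ``more nonnegative'' as $\lambda$ increases (respectively decreases), a property that follows immediately from the Poisson semigroup structure. The only minor point requiring care is the use of the strict inequalities $4\lambda > \mathcal{F}_{\mathrm{max}}(\psi)$ and $\mathcal{F}_{\mathrm{min}}(\psi) > 4\sigma$, together with the hypotheses $\mathcal{F}_{\mathrm{max}}(\psi) < +\infty$ and $\mathcal{F}_{\mathrm{min}}(\psi) > 0$, to guarantee the existence of a strictly intermediate parameter $\lambda_0$ from the definitions of infimum and supremum; without the strict inequality one could not produce such a $\lambda_0$ from the definitions alone.
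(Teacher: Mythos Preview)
Your proof is correct and follows essentially the same approach as the paper: pick an intermediate parameter from the definition of infimum/supremum and then use the Poisson convolution identity $\mathrm{P}_a*\mathrm{P}_b=\mathrm{P}_{a+b}$ to upgrade to the given $\lambda$ (respectively $\sigma$). The only cosmetic difference is that the paper packages the second step as ``$\mathrm{P}_\lambda\asucc\mathrm{P}_{\lambda'}$ together with transitivity of $\asucc$'' (via Eq.~\eqref{eq:poisson_NScond}), whereas you unfold that transitivity directly at the level of convolutions; the underlying computation is identical.
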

\begin{proof}
Fix any $\lambda\in\mathbb{R}$ such that $4\lambda>\mathcal{F}_{\mathrm{max}}(\psi)$. From the definition of $\mathcal{F}_{\mathrm{max}}(\psi)$, there exists a real parameter $\lambda'$ such that $4\lambda\geq4\lambda'\geq\mathcal{F}_{\mathrm{max}}(\psi)$ and $\mathrm{P}_{\lambda'}\asucc p_\psi$. From Eq.~\eqref{eq:poisson_NScond}, $4\lambda\geq 4\lambda'$ implies $\mathrm{P}_{\lambda}\asucc \mathrm{P}_{\lambda'}$. Therefore, we get $\mathrm{P}_{\lambda}\asucc p_\psi$. 

Fix $\sigma$ such that $\mathcal{F}_{\mathrm{min}}(\psi)>4\sigma\geq 0$. Then there exists $\sigma'$ such that $\mathcal{F}_{\mathrm{min}}(\psi)\geq4 \sigma'\geq 4\sigma$ and $p_\psi\asucc \mathrm{P}_{\sigma'}$. Again, from Eq.~\eqref{eq:poisson_NScond}, we have $\mathrm{P}_{\sigma'}\asucc \mathrm{P}_{\sigma}$, which implies $p_\psi\asucc\mathrm{P}_{\sigma}$.
\end{proof}

Now note that $\mathcal{F}(\chi_\lambda)=4\lambda$ holds for a pure state $\chi_\lambda=\sum_{n,n'=0}^\infty \sqrt{\mathrm{P}_\lambda(n)\mathrm{P}_\lambda(n')}\ket{n}\bra{n'}$. Therefore, $\mathcal{F}_{\mathrm{max}}(\psi)$ and $\mathcal{F}_{\mathrm{min}}(\psi)$ are the amount of coherence in $\psi$ that can be converted from and to the pure state $\chi_\lambda$ whose energy distribution is given by a Poisson distribution $\mathrm{P}_\lambda$. 

\begin{proposition}[Equation~\eqref{eq:Finf_F_F0}]
For any pure state $\psi$, it holds
\begin{align}
    \mathcal{F}_{\mathrm{max}}(\psi)\geq \mathcal{F}(\psi)\geq \mathcal{F}_{\mathrm{min}}(\psi).\label{eq:f_0_f_f_inf} 
\end{align}
\end{proposition}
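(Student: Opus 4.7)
The plan is to chain together two facts that have already been established above the statement: (i) the monotonicity $\mathrm{Var}(p)\geq \mathrm{Var}(q)$ whenever $p\asucc q$, which was proved just before the proposition by writing $p=\sum_k w(k)\Upsilon_k q$ for $w=p*\widetilde q\geq 0$; and (ii) Lemma~\ref{lem:Finf_F0_achievability}, which says that $\mathcal{F}_{\max}$ and $\mathcal{F}_{\min}$ are achieved in the limit by Poisson distributions under a-majorization. The bridging numerical input is simply $\mathrm{Var}(\mathrm{P}_\lambda)=\lambda$ for $\lambda\geq 0$, together with the identity $\mathcal{F}(\psi)=4\,\mathrm{Var}(p_\psi)$ for pure states quoted in the main text.

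For the upper bound $\mathcal{F}_{\max}(\psi)\geq \mathcal{F}(\psi)$: if $\mathcal{F}_{\max}(\psi)=+\infty$ there is nothing to prove, so assume it is finite. Fix any $\lambda$ with $4\lambda>\mathcal{F}_{\max}(\psi)$. Lemma~\ref{lem:Finf_F0_achievability} gives $\mathrm{P}_\lambda\asucc p_\psi$, and monotonicity of the variance under $\asucc$ yields
\begin{equation*}
\lambda \;=\; \mathrm{Var}(\mathrm{P}_\lambda) \;\geq\; \mathrm{Var}(p_\psi) \;=\; \tfrac{1}{4}\mathcal{F}(\psi).
\end{equation*}
Taking the infimum over all such $\lambda$ gives $\mathcal{F}_{\max}(\psi)\geq \mathcal{F}(\psi)$.

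For the lower bound $\mathcal{F}(\psi)\geq \mathcal{F}_{\min}(\psi)$: if $\mathcal{F}_{\min}(\psi)=0$ the inequality is trivial, so assume $\mathcal{F}_{\min}(\psi)>0$. Fix any $\sigma\geq 0$ with $4\sigma<\mathcal{F}_{\min}(\psi)$. The second half of Lemma~\ref{lem:Finf_F0_achievability} gives $p_\psi\asucc \mathrm{P}_\sigma$, and variance monotonicity again yields
\begin{equation*}
\tfrac{1}{4}\mathcal{F}(\psi) \;=\; \mathrm{Var}(p_\psi) \;\geq\; \mathrm{Var}(\mathrm{P}_\sigma) \;=\; \sigma.
\end{equation*}
Taking the supremum over all admissible $\sigma$ gives $\mathcal{F}(\psi)\geq \mathcal{F}_{\min}(\psi)$, completing the proof.

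There is no real obstacle here; the nontrivial content is the monotonicity lemma and Lemma~\ref{lem:Finf_F0_achievability}, both already in hand. The only minor bookkeeping is verifying $\mathrm{Var}(\mathrm{P}_\lambda)=\lambda$ for $\lambda\geq 0$, which is a standard calculation from the definition $\mathrm{P}_\lambda(n)=e^{-\lambda}\lambda^n/n!$ on $n\geq 0$, and handling the boundary cases $\mathcal{F}_{\max}(\psi)=+\infty$ and $\mathcal{F}_{\min}(\psi)=0$ as explicit trivial cases as above.
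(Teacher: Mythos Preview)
Your proof is correct and follows essentially the same approach as the paper's own proof: handle the trivial boundary cases, invoke Lemma~\ref{lem:Finf_F0_achievability} to obtain the a-majorization relations with Poisson distributions, and then apply the monotonicity of the variance under $\asucc$ together with $\mathcal{F}(\psi)=4\,\mathrm{Var}(p_\psi)$ and $\mathrm{Var}(\mathrm{P}_\lambda)=\lambda$. The only cosmetic difference is that you make the identity $\mathrm{Var}(\mathrm{P}_\lambda)=\lambda$ explicit, whereas the paper absorbs it into the statement $4\lambda\geq\mathcal{F}(\psi)$.
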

\begin{proof}
If $\mathcal{F}_{\mathrm{max}}(\psi)=+\infty$, then $\mathcal{F}_{\mathrm{max}}(\psi)\geq \mathcal{F}(\psi)$ holds. Suppose that $\mathcal{F}_{\mathrm{max}}(\psi)<+\infty$. From Lemma~\ref{lem:Finf_F0_achievability}, $\mathrm{P}_{\lambda}\asucc p_\psi$ holds for any $\lambda$ such that $4\lambda>\mathcal{F}_{\mathrm{max}}(\psi)$. From the monotonicity of the variance, we have $4\lambda\geq \mathcal{F}(\psi)$. Since $\lambda$ is arbitrary as long as $4\lambda\geq \mathcal{F}_{\mathrm{max}}(\psi)$ is satisfied, we have $\mathcal{F}_{\mathrm{max}}(\psi)\geq \mathcal{F}(\psi)$. 

If $\mathcal{F}_{\mathrm{min}}(\psi)=0$, then $\mathcal{F}(\psi)\geq\mathcal{F}_{\mathrm{min}}(\psi) $ trivially holds. Suppose that $\mathcal{F}_{\mathrm{min}}(\psi)\neq 0$. From Lemma~\ref{lem:Finf_F0_achievability}, $p_\psi\asucc\mathrm{P}_\sigma$ holds for any $\sigma$ such that $\mathcal{F}_{\mathrm{min}}(\psi)>4\sigma$. From the monotonicity of the variance, we have $\mathcal{F}(\psi)\geq 4\sigma$. Since $\sigma$ is arbitrary as long as $\mathcal{F}_{\mathrm{min}}(\psi)>4\sigma$ is satisfied, we have $\mathcal{F}(\psi)\geq \mathcal{F}_{\mathrm{min}}(\psi)$. 
\end{proof}
Furthermore, it should be noted that when the energy distribution of a pure state $\psi$ is given by a Poisson distribution, it holds
\begin{align}
    \mathcal{F}_{\mathrm{max}}(\psi)=\mathcal{F}(\psi)=\mathcal{F}_{\mathrm{min}}(\psi). \label{eq:f_0_f_f_inf_equality}
\end{align}

We also prove the following proposition, which corresponds to Eqs.~\eqref{eq:asymptotic_nec} and \eqref{eq:asymptotic_suf} in the one-shot regime:
\begin{proposition}\label{prop:conversion_oneshot_N_and_S}
For any pure states $\psi$ and $\phi$, we have
\begin{enumerate}[(1)]
    \item If $\psi\cov\phi$, then $\mathcal{F}_{\mathrm{max}} (\psi)\geq \mathcal{F}_{\mathrm{max}} (\phi)$ and $\mathcal{F}_{\mathrm{min}}(\psi)\geq \mathcal{F}_{\mathrm{min}}(\phi)$.
    \item If $\mathcal{F}_{\mathrm{min}} (\psi)> \mathcal{F}_{\mathrm{max}}(\phi)$, then $\psi\cov \phi$. 
\end{enumerate}
\end{proposition}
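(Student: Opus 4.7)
My plan is to reduce both statements to manipulations with the a-majorization preorder $\asucc$, using Theorem~\ref{thm:convertibility_a_majorization} (which says $\psi\cov\phi$ iff $p_\psi\asucc p_\phi$) together with Lemma~\ref{lem:Finf_F0_achievability} (which lets one sandwich $p_\psi$ between Poisson distributions whose parameters approximate $\mathcal{F}_{\mathrm{max}}$ and $\mathcal{F}_{\mathrm{min}}$). The preorder properties of $\asucc$ already established earlier in the Supplemental Material (in particular transitivity) will do most of the work.

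For part (1), assume $\psi\cov\phi$, so $p_\psi\asucc p_\phi$ by Theorem~\ref{thm:convertibility_a_majorization}. If $\mathcal{F}_{\mathrm{max}}(\psi)=+\infty$ the first inequality is trivial, so suppose $\mathcal{F}_{\mathrm{max}}(\psi)<+\infty$. For any $\lambda$ with $4\lambda>\mathcal{F}_{\mathrm{max}}(\psi)$, Lemma~\ref{lem:Finf_F0_achievability} gives $\mathrm{P}_{\lambda}\asucc p_\psi$, hence by transitivity $\mathrm{P}_{\lambda}\asucc p_\phi$, which means $4\lambda\geq \mathcal{F}_{\mathrm{max}}(\phi)$. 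Taking the infimum over such $\lambda$ yields $\mathcal{F}_{\mathrm{max}}(\psi)\geq \mathcal{F}_{\mathrm{max}}(\phi)$. For the second inequality, if $\mathcal{F}_{\mathrm{min}}(\phi)=0$ it is trivial; otherwise, for any $\sigma$ with $\mathcal{F}_{\mathrm{min}}(\phi)>4\sigma\geq 0$, Lemma~\ref{lem:Finf_F0_achievability} gives $p_\phi\asucc \mathrm{P}_{\sigma}$, so by transitivity $p_\psi\asucc \mathrm{P}_{\sigma}$. Recalling that $\widetilde{\mathrm{P}_{\sigma}}=\mathrm{P}_{-\sigma}$ so that $p_\psi\asucc \mathrm{P}_{\sigma}$ is exactly the condition $p_\psi*\mathrm{P}_{-\sigma}\geq 0$ appearing in the definition of $\mathcal{F}_{\mathrm{min}}$, this gives $4\sigma\leq \mathcal{F}_{\mathrm{min}}(\psi)$. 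Taking the supremum over such $\sigma$ yields $\mathcal{F}_{\mathrm{min}}(\psi)\geq \mathcal{F}_{\mathrm{min}}(\phi)$.

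For part (2), suppose $\mathcal{F}_{\mathrm{min}}(\psi)>\mathcal{F}_{\mathrm{max}}(\phi)$. Choose a real $\lambda$ with $\mathcal{F}_{\mathrm{min}}(\psi)>4\lambda>\mathcal{F}_{\mathrm{max}}(\phi)$. Note that $\mathcal{F}_{\mathrm{max}}(\phi)\geq \mathcal{F}(\phi)\geq 0$ by Eq.~\eqref{eq:f_0_f_f_inf}, so $\lambda>0$ and the hypotheses of Lemma~\ref{lem:Finf_F0_achievability} are satisfied on both sides: we get $p_\psi\asucc \mathrm{P}_\lambda$ and $\mathrm{P}_\lambda\asucc p_\phi$. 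Transitivity of $\asucc$ then gives $p_\psi\asucc p_\phi$, which by Theorem~\ref{thm:convertibility_a_majorization} means $\psi\cov\phi$.

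There is no substantial obstacle; the only point that needs any care is the sign-of-$\lambda$ check in part (2), which is immediate from Eq.~\eqref{eq:f_0_f_f_inf}. The strict inequality in the hypothesis of (2) is essential only to guarantee that an admissible intermediate $\lambda$ exists, mirroring the role of the strict inequality in Eq.~\eqref{eq:asymptotic_suf}.
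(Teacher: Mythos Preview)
Your proof is correct and follows essentially the same route as the paper: both reduce everything to a-majorization via Theorem~\ref{thm:convertibility_a_majorization}, invoke Lemma~\ref{lem:Finf_F0_achievability} to pass to Poisson distributions, and then use transitivity of $\asucc$. The only difference is that you make explicit the small check $\lambda>0$ in part~(2), which the paper leaves implicit.
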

\begin{proof}
(1):
Assume that $\psi\cov\phi$ holds. If $\mathcal{F}_{\mathrm{max}}(\psi)=+\infty$, then $\mathcal{F}_{\mathrm{max}}(\psi)\geq \mathcal{F}_{\mathrm{max}}(\phi)$ holds.  Suppose that $\mathcal{F}_{\mathrm{max}}(\psi)<\infty$. 
Let $\lambda$ be any real number such that $4\lambda> \mathcal{F}_{\mathrm{max}}(\psi)$. From Lemma~\ref{lem:Finf_F0_achievability}, this means $\mathrm{P}_{\lambda}\asucc p_{\psi}$. Since $\asucc$ is a preorder, it implies that $\mathrm{P}_{\lambda}\asucc p_{\phi}$ and therefore $4\lambda \geq \mathcal{F}_{\mathrm{max}}(\phi)$. Since $\lambda$ is an arbitrary real number such that $4\lambda> \mathcal{F}_{\mathrm{max}}(\psi)$, we have $\mathcal{F}_{\mathrm{max}}(\psi)\geq \mathcal{F}_{\mathrm{max}}(\phi)$. 

If $\mathcal{F}_{\mathrm{min}}(\phi)=0$, $ \mathcal{F}_{\mathrm{min}}(\psi)\geq \mathcal{F}_{\mathrm{min}}(\phi)$ trivially holds since $\mathcal{F}_{\mathrm{min}}(\psi)$ is nonnegative for any pure state. Suppose that $\mathcal{F}_{\mathrm{min}}(\phi)>0$. Let $\sigma$ be an arbitrary real number such that $\mathcal{F}_{\mathrm{min}}(\phi)>4\sigma\geq 0$. From Lemma~\ref{lem:Finf_F0_achievability}, $p_\phi\asucc \mathrm{P}_{\sigma}$.
Since $\asucc$ is preorder, we have $p_\psi\asucc \mathrm{P}_{\sigma}$ and therefore $\mathcal{F}_{\mathrm{min}}(\psi)\geq 4\sigma$. Since $\sigma$ is an arbitrary number such that $\mathcal{F}_{\mathrm{min}}(\phi)>4\sigma>0$, we get $\mathcal{F}_{\mathrm{min}}(\psi)\geq \mathcal{F}_{\mathrm{min}}(\phi)$. 

(2): Assume that pure states $\psi$, $\phi$ satisfy $\mathcal{F}_{\mathrm{min}}(\psi)>\mathcal{F}_{\mathrm{max}}(\phi)$. Fix a real number $\lambda$ such that $\mathcal{F}_{\mathrm{min}}(\psi)>4\lambda>\mathcal{F}_{\mathrm{max}}(\phi)$. Lemma~\ref{lem:Finf_F0_achievability} shows that $p_\psi\asucc \mathrm{P}_{\lambda}$ and $\mathrm{P}_\lambda \asucc p_\phi$. Since $\asucc$ is a preorder, we have $p_\psi\asucc p_\phi$, or equivalently, $\psi\cov\phi$. 
\end{proof}

For comparison, here we summarize the properties of the max- and min-entropies without proof. For a review and details, see, e.g., \cite{sagawa_entropy_2022} and the references therein.  
Hereafter, the base of the logarithm is set to $2$. 
For a state $\rho$, the $\alpha$-R\'enyi entropy is defined by
\begin{align}
    S_\alpha(\rho)\coloneqq \frac{1}{1-\alpha}\log\mathrm{Tr}\left(\rho^\alpha\right).
\end{align}
The limits $\alpha\to 0$ and $\alpha\to\infty$ yield the max-entropy $S_{\mathrm{max}}(\rho)$ and the min-entropy $S_{\mathrm{min}}(\rho)$, which are given by
\begin{align}
    S_{\mathrm{max}}(\rho)\coloneqq \log \left(\mathrm{rank}\left(\rho\right)\right),\quad S_{\mathrm{min}}(\rho)\coloneqq -\log\left(\|\rho\|_\infty\right),
\end{align}
where $\mathrm{rank}\left(\rho\right)$ and $\|\rho\|_\infty$ denote the rank of $\rho$ and the maximum eigenvalue of $\rho$, respectively. 
These entropies satisfy
\begin{align}
    S_{\mathrm{max}}(\rho)\geq S(\rho)\geq S_{\mathrm{min}}(\rho), 
\end{align}
where $S(\rho)\coloneqq -\mathrm{Tr}(\rho\log\rho)$ is the von Neumann entropy.
This corresponds to Eq.~\eqref{eq:f_0_f_f_inf}. 

In entanglement theory, a maximally entangled state
\begin{align}
    \ket{\Phi_d}_{AB}=\frac{1}{\sqrt{d}}\sum_{i=1}^d\ket{i}_A\ket{i}_B,\label{eq:maximally_entangled_state}
\end{align}
is a reference state adopted in the literature, where $d$ denotes the dimension of the Hilbert spaces for each subsystem $A$ and $B$, while $\{\ket{i}_{Z}\}_{i=1}^d$ is an orthonormal basis for each subsystem $Z=A,B$. 
The reduced state is the maximally mixed state
\begin{align}
    \rho_d\coloneqq \mathrm{Tr}_B\left(\Phi_d\right)=\frac{1}{d}\mathbb{I}_d,\quad \Phi_d\coloneqq \ket{\Phi_d}\bra{\Psi_d}
\end{align}
where $\mathbb{I}_d$ denotes the identity operator. In this case, we have
\begin{align}
    S_{\mathrm{max}}(\rho_d)= S(\rho_d)= S_{\mathrm{min}}(\rho_d).
\end{align}
This corresponds to Eq.~\eqref{eq:f_0_f_f_inf_equality}. 

If a state $\rho$ is convertible to another state $\sigma$ by LOCC, we denote $\rho\locc \sigma$.
An important theorem on the one-shot convertibility between pure states by LOCC is the following:
\begin{theorem}
Let $\psi_{AB}$ and $\phi_{AB}$ be bipartite pure states. Define the reduced states $\rho_A\coloneqq \mathrm{Tr}_B(\psi_{AB})$ and $\sigma_A\coloneqq \mathrm{Tr}(\phi_{AB})$. The following two hold:
\begin{enumerate}[(1)]
    \item If $\psi_{AB}\locc \phi_{AB}$, then $S_{\mathrm{max}}(\rho_A)\geq S_{\mathrm{max}}(\sigma_A)$ and $S_{\mathrm{min}}(\rho_A)\geq S_{\mathrm{min}}(\sigma_A)$.
    \item If $S_{\mathrm{min}}(\rho_A)\geq S_{\mathrm{max}}(\sigma_A)$, then $\psi_{AB}\locc \phi_{AB}$. 
\end{enumerate}
\end{theorem}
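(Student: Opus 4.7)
The strategy is to reduce both parts to Nielsen's theorem, $\psi_{AB}\locc\phi_{AB}\iff\lambda_\psi\prec\lambda_\phi$, where $\lambda_\psi$ and $\lambda_\phi$ are the probability distributions of squared Schmidt coefficients and coincide with the eigenvalue distributions of $\rho_A$ and $\sigma_A$. Writing $p_i$ and $q_i$ for the decreasing rearrangements of these two distributions, the hypothesis $\lambda_\psi\prec\lambda_\phi$ reads $\sum_{i=1}^{k}q_i\ge\sum_{i=1}^{k}p_i$ for every $k$. Set $r\coloneqq\mathrm{rank}(\sigma_A)$ throughout.

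For~(1), assume $\psi_{AB}\locc\phi_{AB}$, so $\{q_i\}$ majorizes $\{p_i\}$. The $k=1$ inequality is $q_1\ge p_1$, i.e.\ $\|\sigma_A\|_\infty\ge\|\rho_A\|_\infty$, which gives $S_{\mathrm{min}}(\sigma_A)\le S_{\mathrm{min}}(\rho_A)$. For the max-entropy I will take $k\coloneqq\mathrm{rank}(\rho_A)$; then $\sum_{i=1}^{k}p_i=1$, and the majorization inequality at this $k$ combined with $\sum_i q_i=1$ and $q_i\ge 0$ forces $q_{k+1}=0$, hence $r\le k$, yielding $S_{\mathrm{max}}(\sigma_A)\le S_{\mathrm{max}}(\rho_A)$.

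For~(2), assume $S_{\mathrm{min}}(\rho_A)\ge S_{\mathrm{max}}(\sigma_A)$, equivalently $p_1\le 1/r$. I will verify $\lambda_\psi\prec\lambda_\phi$ directly and then invoke Nielsen. For $k\ge r$ the partial sum of $q$ already equals $1$, so the majorization inequality is trivial. For $k<r$, the bound $p_1\le 1/r$ gives $\sum_{i=1}^{k}p_i\le k\,p_1\le k/r$; conversely, the uniform distribution on the $r$-element support of $q$ is majorized by $q$ itself, hence $\sum_{i=1}^{k}q_i\ge k/r$. Chaining the two bounds gives the desired majorization, and Nielsen's theorem then yields $\psi_{AB}\locc\phi_{AB}$.

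The nontrivial inputs are Nielsen's theorem (used as a black box), the rank-comparison consequence of majorization in~(1), and the uniform-majorization inequality in~(2). Both majorization facts are elementary---for the latter, if $\sum_{i=1}^{k}q_i<k/r$ then $q_k<1/r$ by decreasingness, forcing $q_{k+1},\dots,q_r<1/r$ and the total mass below~$1$---so I do not expect a serious obstacle. The only subtlety worth flagging is the convention of Nielsen's theorem, in which the \emph{target} state of an LOCC conversion carries the more concentrated (majorizing) Schmidt distribution.
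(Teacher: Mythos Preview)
Your proof is correct. Both parts reduce cleanly to Nielsen's theorem together with elementary majorization facts, and the argument you give for $\sum_{i=1}^{k}q_i\ge k/r$ (the uniform distribution on the support of $q$ is majorized by $q$) is sound.

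A small remark on comparison with the paper: the paper does not actually prove this theorem. It is stated in the Supplemental Material as a known result from entanglement theory, prefaced by ``here we summarize the properties of the max- and min-entropies without proof,'' with a reference to the literature. The theorem is included only as the entanglement-theoretic counterpart of Proposition~\ref{prop:conversion_oneshot_N_and_S} (the analogous statement for $\mathcal{F}_{\mathrm{max}}$ and $\mathcal{F}_{\mathrm{min}}$ in RTA), so your independent proof via Nielsen's theorem supplies exactly the kind of argument the paper omits.
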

This is the counterpart of Proposition~\ref{prop:conversion_oneshot_N_and_S}. 

\section{$\mathcal{F}_{\mathrm{max}}$ for general states}
In the main text, we have defined
\begin{align}
    \mathcal{F}_{\mathrm{max}}(\rho)\coloneqq \inf_{\Psi_\rho,H_A}\mathcal{F}_{\mathrm{max}}(\Psi_\rho),\label{eq:max_F_purification}
\end{align}
where the infimum is taken over the sets of all purifications $\Psi_\rho$ of $\rho$ and Hamiltonians $H_A$ with integer eigenvalues of the auxiliary system $A$ that is added to purify $\rho$. 

For any pure state $\psi$, its purification is given by $\psi\otimes\xi$ for some pure state $\xi$. Since the partial trace is a covariant operation, we have $p_{\psi\otimes\xi}\asucc p_{\psi}$. By using Proposition~\ref{prop:conversion_oneshot_N_and_S}, we get $\mathcal{F}_{\mathrm{max}}(\psi\otimes\xi)\geq \mathcal{F}_{\mathrm{max}}(\psi)$. Therefore, Eq.~\eqref{eq:max_F_purification} is consistent with Eq.~\eqref{eq:max_F_pure}.

\section{Proof of Eqs.~\eqref{eq:asymptotic_nec} and \eqref{eq:asymptotic_suf}}
Equations~\eqref{eq:asymptotic_nec} and \eqref{eq:asymptotic_suf} are obtained as a corollary of Theorem~\ref{thm:cost_dist_Finf_F0} and the following proposition:
\begin{proposition}
For any sequences of pure states $\widehat{\psi}=\{\psi_m\}_m$ and $\widehat{\phi}=\{\phi_m\}_m$, the followings hold:
\begin{enumerate}[(1)]
    \item $\widehat{\psi}\cov\widehat{\phi}\implies C_{\mathrm{cost}}(\widehat{\psi})\geq C_{\mathrm{cost}}(\widehat{\phi}),\quad C_{\mathrm{dist}}(\widehat{\psi})\geq C_{\mathrm{dist}}(\widehat{\phi})$.
    \item $C_{\mathrm{dist}}(\widehat{\psi})> C_{\mathrm{cost}}(\widehat{\phi})\implies \widehat{\psi}\cov\widehat{\phi}$.
\end{enumerate}
\end{proposition}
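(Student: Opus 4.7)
The plan is to deduce both parts from a single auxiliary fact: the asymptotic convertibility relation $\cov$ is transitive, i.e., $\widehat{\rho}\cov\widehat{\sigma}$ and $\widehat{\sigma}\cov\widehat{\tau}$ jointly imply $\widehat{\rho}\cov\widehat{\tau}$. Once this is in hand, (1) and (2) follow by chaining conversions through the reference sequence $\widehat{\phi_{\mathrm{coh}}}(R)$ for a suitable $R$, together with two trivial monotonicity facts about how $\widehat{\phi_{\mathrm{coh}}}(R)$ depends on $R$.

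To prove transitivity, fix $\epsilon\in(0,1]$ and apply the $\epsilon/2$-approximate versions of the two hypotheses: there exist sequences of covariant channels $\{\mathcal{E}_m\}$ and $\{\mathcal{F}_m\}$ with $D(\mathcal{E}_m(\rho_m),\sigma_m)\leq\epsilon/2$ and $D(\mathcal{F}_m(\sigma_m),\tau_m)\leq\epsilon/2$ for all sufficiently large $m$. The composition $\mathcal{F}_m\circ\mathcal{E}_m$ is again covariant, since Eq.~\eqref{eq:covariant_channel_definition} is manifestly preserved under composition. Contractivity of the trace distance under CPTP maps combined with the triangle inequality then gives
\begin{align*}
D\bigl(\mathcal{F}_m\circ\mathcal{E}_m(\rho_m),\tau_m\bigr)&\leq D\bigl(\mathcal{F}_m(\mathcal{E}_m(\rho_m)),\mathcal{F}_m(\sigma_m)\bigr)+D\bigl(\mathcal{F}_m(\sigma_m),\tau_m\bigr)\\
&\leq \epsilon
\end{align*}
for all $m$ large enough that both bounds hold simultaneously. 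Hence $\widehat{\rho}\cov_\epsilon\widehat{\tau}$, and since $\epsilon$ is arbitrary, $\widehat{\rho}\cov\widehat{\tau}$.

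Next I use that the set $\{R:\widehat{\phi_{\mathrm{coh}}}(R)\cov\widehat{\rho}\}$ is upward closed and $\{R:\widehat{\rho}\cov\widehat{\phi_{\mathrm{coh}}}(R)\}$ is downward closed: tracing out extra coherence bits realizes a covariant conversion between $\widehat{\phi_{\mathrm{coh}}}$ sequences of different rates (the Hamiltonian of the multi-coherence-bit system is a sum over subsystems, so the partial trace commutes with time translation), and pre- or post-composing with the original conversion yields the claim. Given these facts, part (1) is immediate: for any $R>C_{\mathrm{cost}}(\widehat{\psi})$, upward closedness gives $\widehat{\phi_{\mathrm{coh}}}(R)\cov\widehat{\psi}$; transitivity with $\widehat{\psi}\cov\widehat{\phi}$ yields $\widehat{\phi_{\mathrm{coh}}}(R)\cov\widehat{\phi}$, so $C_{\mathrm{cost}}(\widehat{\phi})\leq R$, and sending $R\downarrow C_{\mathrm{cost}}(\widehat{\psi})$ finishes the cost bound. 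The distillable bound is the mirror argument using downward closedness and the supremum definition. For (2), the strict inequality $C_{\mathrm{dist}}(\widehat{\psi})>C_{\mathrm{cost}}(\widehat{\phi})$ lets me choose $R$ strictly between the two values; then $\widehat{\psi}\cov\widehat{\phi_{\mathrm{coh}}}(R)$ and $\widehat{\phi_{\mathrm{coh}}}(R)\cov\widehat{\phi}$ hold by the same monotonicity facts, and transitivity concludes. I anticipate no serious obstacle: the argument is the textbook ``free operations compose, errors add'' pattern, the only care being to split the error budget as $\epsilon/2+\epsilon/2$ when composing and to note that appending or tracing out coherence bits preserves covariance.
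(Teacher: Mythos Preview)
Your proof is correct and follows essentially the same route as the paper's: both arguments chain conversions through the reference sequence $\widehat{\phi_{\mathrm{coh}}}(R)$, using transitivity of $\cov$ together with the monotonicity $\widehat{\phi_{\mathrm{coh}}}(R')\cov\widehat{\phi_{\mathrm{coh}}}(R'')$ for $R'>R''$ (realized by tracing out extra coherence bits). The only difference is that you spell out the transitivity step and the upward/downward closedness explicitly, whereas the paper uses them tacitly and phrases the infimum/supremum arguments via auxiliary $\delta,\delta'$ and $R',R''$ rather than a single $R$.
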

\begin{proof}
(1): For $R\coloneqq C_{\mathrm{cost}}(\widehat{\psi})$ and any positive number $\delta>0$, there exists $\delta'$ such that $\delta>\delta'\geq 0$ and
\begin{align}
    \widehat{\phi_{\mathrm{coh}}}(R+\delta')\cov\widehat{\psi}. 
\end{align}
Since $\widehat{\psi}\cov \widehat{\phi}$ holds from the assumption, we have $\widehat{\phi_{\mathrm{coh}}}(R+\delta')\cov\widehat{\phi}$. Therefore, 
\begin{align}
    C_{\mathrm{cost}}(\widehat{\psi})+\delta'\geq C_{\mathrm{cost}}(\widehat{\phi}).
\end{align}
Since $\delta>0$ is arbitrary and $\delta>\delta'$, we have $C_{\mathrm{cost}}(\widehat{\psi})\geq C_{\mathrm{cost}}(\widehat{\phi})$. In a similar way, $C_{\mathrm{dist}}(\widehat{\psi})\geq C_{\mathrm{dist}}(\widehat{\phi})$ is proven. 

(2): Fix $R$ such that $C_{\mathrm{dist}}(\widehat{\psi})>R> C_{\mathrm{cost}}(\widehat{\phi})$. There exists a real number $R'$ such that $C_{\mathrm{dist}}(\widehat{\psi})\geq R'>R$ and
\begin{align}
    \widehat{\psi}\cov \widehat{\phi_{\mathrm{coh}}}(R').
\end{align}
Similarly, there exists a real number $R''$ such that $R>R''\geq C_{\mathrm{cost}}(\widehat{\phi})$ and
\begin{align}
    \widehat{\phi_{\mathrm{coh}}}(R'')\cov\widehat{\phi}.
\end{align}
Since $R'>R''$ implies $\widehat{\phi_{\mathrm{coh}}}(R')\cov\widehat{\phi_{\mathrm{coh}}}(R'')$, we get $\widehat{\psi}\cov\widehat{\phi}$. 
\end{proof}

\section{Facts on entanglement theory and spectral entropy rates}
We here provide results in entanglement theory in the literature without proof. 

Let $\psi_{AB}$ and $\phi_{AB}$ be bipartite pure states. We define the density operators for the subsystem $A$ as $\rho_A\coloneqq \mathrm{Tr}\left(\psi_{AB}\right)$ and $\sigma_A\coloneqq\mathrm{Tr}\left(\phi_{AB}\right)$. The entanglement entropy $S_{\mathrm{EE}}$ of $\psi_{AB}$ is given by the von Neumann entropy $S(\rho_A)$ of the reduced state $\rho_A$, i.e.,
\begin{align}
    S_{\mathrm{EE}}(\psi_{AB})\coloneqq S\left(\rho_A\right)\coloneqq -\mathrm{Tr}_{A}\left(\rho_{A}\log \rho_A\right),
\end{align}
where the base of the logarithm is 2. 
Consider sequences of i.i.d. pure states $\widehat{\psi_{AB}}\coloneqq \{\psi_{AB}^{\otimes m}\}_m$ and $\widehat{\phi_{AB}}\coloneqq \{\phi_{AB}^{\otimes\ceil{R m}}\}_m$, where $R>0$. 
We say that $\widehat{\psi_{AB}}\coloneqq \{\psi_{AB}^{\otimes m}\}_m$ can be asymptotically converted to $\widehat{\phi_{AB}}(R)\coloneqq \{\phi_{AB}^{\otimes\ceil{R m}}\}_m$ if and only if there exists a sequence of local operations and classical communications (LOCC) $\widehat{\mathcal{E}}=\{\mathcal{E}_m\}_m$ such that
\begin{align}
    \lim_{m\to\infty}D\left(\mathcal{E}_m\left(\psi_{AB}^{\otimes m}\right),\phi_{AB}^{\otimes\ceil{R m}}\right)=0.
\end{align}
In this case, we denote $\widehat{\psi_{AB}}\locc\widehat{\phi_{AB}}(R)$. 
It is known \cite{bennett_concentrating_1996} that the conversion from $\widehat{\psi_{AB}}$ to $\widehat{\phi_{AB}}(R)$ by LOCC is possible if $R\leq S_{\mathrm{EE}}(\psi_{AB})/S_{\mathrm{EE}}(\phi_{AB})$ and impossible if $R> S_{\mathrm{EE}}(\psi_{AB})/S_{\mathrm{EE}}(\phi_{AB})$. In other words, pure states are interconvertible and the optimal rate is given by the ratio of the entanglement entropies in the i.i.d. regime. 

To analyze the asymptotic convertibility in a more general setup, it is common to adopt a maximally entangled state as a reference. 
Let us first define the entanglement cost. 
For a given sequence of pure states $\widehat{\psi}=\{\psi_m\}_m$, we say that a rate $R$ is achievable in a dilution process if and only if there exists a sequence of nonnegative numbers $\{N_m\}_m$ such that $\widehat{\Phi}(\{N_m\})\locc \widehat{\psi}$ and
\begin{align}
    \limsup_{m\to\infty}\frac{1}{m}\log N_m\leq R,
\end{align}
where we have defined $\widehat{\Phi}(\{N_m\})\coloneqq \{\Phi_{N_m}\}_m$ for the maximally entangled state defined in Eq.~\eqref{eq:maximally_entangled_state}. 
The entanglement cost of $\widehat{\psi}$ is defined by
\begin{align}
    E_{\mathrm{cost}}(\widehat{\psi})\coloneqq \inf\left\{R\mid R\text{ is achievable in a  dilution process}\right\}.
\end{align}
In a similar way, we can define the distillable entanglement. 
We say that a rate $R$ is achievable in a distillation process if and only if there exists a sequence of nonnegative numbers $\{N_m\}_m$ such that $\widehat{\psi}\locc\widehat{\Phi}(\{N_m\})$ and
\begin{align}
    \liminf_{m\to\infty}\frac{1}{m}\log N_m\geq R.
\end{align}
The distillable entanglement is defined as
\begin{align}
    E_{\mathrm{dist}}(\widehat{\psi})\coloneqq \sup\left\{R\mid R\text{ is achievable in a distillation process}\right\}.
\end{align}

The spectral sup- and inf-entropy rates in the quantum case have been developed in different contexts, e.g., in \cite{nagaoka_information-spectrum_2007,ogawa_strong_2000,hayashi_general_2003,hayashi_general_2006,bowen_asymptotic_2008,datta_smooth_2009}. 
Here we provide one of the alternative but equivalent definitions, which is based on the smoothing technique \cite{datta_smooth_2009,renner_security_2005}. 
For a given sequence of states $\widehat{\rho}=\{\rho_m\}_m$, its spectral sup- and inf-entropy rates are defined by
\begin{align}
    \overline{S}(\widehat{\rho})\coloneqq\lim_{\epsilon\to 0} \limsup_{m\to\infty}\frac{1}{m}S_{\mathrm{max}}^\epsilon(\rho_m),\quad \underline{S}(\widehat{\rho})\coloneqq\lim_{\epsilon\to 0} \liminf_{m\to\infty}\frac{1}{m}S_{\mathrm{min}}^\epsilon(\rho_m),
\end{align}
where the smooth max- and min-entroies are defined by
\begin{align}
    S_{\mathrm{max}}^\epsilon(\rho)\coloneqq \inf_{\sigma\in B^\epsilon(\rho)}S_{\mathrm{max}}(\sigma),\quad S_{\mathrm{min}}^\epsilon(\rho)\coloneqq  \sup_{\sigma\in B^\epsilon(\rho)}S_{\mathrm{min}}(\sigma)
\end{align}
for $B^\epsilon(\rho)\coloneqq \left\{\sigma\text{: quantum states} \middle| D(\sigma,\rho)\leq \epsilon\right\}$.

For a sequence of general pure states $\widehat{\psi_{AB}}=\{\psi_{AB,m}\}_m$, let us define a sequence of reduced states by $\widehat{\rho_A}=\{\rho_{A,m}\}_m$, where $\rho_{A,m}\coloneqq \mathrm{Tr}_B(\psi_{AB,m})$. 
It is shown \cite{hayashi_general_2006,bowen_asymptotic_2008} that
\begin{align}
    E_{\mathrm{cost}}(\widehat{\psi_{AB}})=\overline{S}(\widehat{\rho_A}),\quad E_{\mathrm{dist}}(\widehat{\psi_{AB}})=\underline{S}(\widehat{\rho_A}).
\end{align}
For the convertibility between sequences of pure states $\widehat{\psi_{AB}}=\{\psi_{AB,m}\}_m$ and $\widehat{\phi_{AB}}=\{\phi_{AB,m}\}_m$, the following two hold \cite{jiao_asymptotic_2018}:
\begin{align}
    \widehat{\psi_{AB}}\locc \widehat{\phi_{AB}}&\implies \overline{S}(\widehat{\rho_A})\geq \overline{S}(\widehat{\sigma_A}),\quad \underline{S}(\widehat{\rho_A})\geq \underline{S}(\widehat{\sigma_A})\\
   \underline{S}(\widehat{\rho_A})> \overline{S}(\widehat{\sigma_A}) &\implies \widehat{\psi_{AB}}\locc \widehat{\phi_{AB}},
\end{align}
where we have defined $\widehat{\sigma_A}\coloneqq \{\sigma_{A,m}\}_m$ and $\sigma_{A,m}\coloneqq \mathrm{Tr}_B(\phi_{AB,m})$. 
They are the counterparts of Eqs.~\eqref{eq:asymptotic_nec} and \eqref{eq:asymptotic_suf} in entanglement theory. 

In particular, for an i.i.d. sequence of pure states $\widehat{\psi_{AB}}=\{\psi_{AB}^{\otimes m}\}_m$, the spectral entropy rates are equal to the entanglement entropy:
\begin{align}
    \overline{S}(\widehat{\rho_A})=S_{\mathrm{EE}}(\psi_{AB})=\underline{S}(\widehat{\rho_A}).
\end{align}

\section{Interconversion between $\widehat{\phi}_{\mathrm{coh}}(R)$ and $\widehat{\chi}_\lambda$}
We here show the following:
\begin{lemma}\label{lem:cbit_poisson}
A sequence $\widehat{\phi}_{\mathrm{coh}}(R)$ is interconvertible to $\widehat{\chi}_{R/4}$. That is, for any $\epsilon>0$, $\widehat{\phi}_{\mathrm{coh}}(R)\cov_\epsilon\widehat{\chi}_{R/4}$ and $\widehat{\chi}_{R/4}\cov_\epsilon\widehat{\phi}_{\mathrm{coh}}(R)$. 
\end{lemma}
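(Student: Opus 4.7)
The plan is to reduce each direction of interconvertibility to a classical statement about a-majorization between the energy distributions $\text{Bin}(\lceil Rm\rceil, 1/2)$ (of $\phi_{\mathrm{coh}}^{\otimes \lceil Rm\rceil}$) and $\mathrm{P}_{Rm/4}$ (of $\chi_{Rm/4}$) via Theorem~\ref{thm:convertibility_a_majorization}, and then establish those relations up to vanishing total-variation error using Poisson-approximation tools. Both states have QFI asymptotic to $Rm$ and identical variance $Rm/4$, so interconvertibility at the claimed rate is consistent with QFI monotonicity and provides the correct matching condition.

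For $\widehat{\phi_{\mathrm{coh}}}(R) \cov_\epsilon \widehat{\chi}_{R/4}$, it suffices by Theorem~\ref{thm:convertibility_a_majorization} (equivalently Theorem~\ref{thm:c-majorization}) to construct, for each large $m$, a probability distribution $q_m$ on $\mathbb{Z}_{\geq 0}$ with $\dtv(q_m, \mathrm{P}_{Rm/4}) \to 0$ together with a probability distribution $w_m$ on $\mathbb{Z}$ satisfying $\text{Bin}(\lceil Rm\rceil, 1/2) = w_m * q_m$. Since every pure state is equivalent under an energy-conserving (hence covariant) unitary to its canonical representative with real non-negative amplitudes in the energy basis, this yields a covariant channel sending $\phi_{\mathrm{coh}}^{\otimes \lceil Rm\rceil}$ to the pure state $\psi_m$ with $p_{\psi_m} = q_m$. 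The fidelity $|\langle \psi_m | \chi_{Rm/4}\rangle|^2 = \bigl(\sum_n \sqrt{q_m(n)\,\mathrm{P}_{Rm/4}(n)}\bigr)^2$ tends to $1$ as $\dtv(q_m, \mathrm{P}_{Rm/4}) \to 0$, and the Fuchs-van de Graaf inequality converts this into vanishing trace distance. The direction $\widehat{\chi}_{R/4} \cov_\epsilon \widehat{\phi_{\mathrm{coh}}}(R)$ reduces symmetrically, now constructing $(w'_m, q'_m)$ with $\mathrm{P}_{Rm/4} = w'_m * q'_m$ and $\dtv(q'_m, \text{Bin}(\lceil Rm\rceil, 1/2)) \to 0$.

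The main obstacle is building the pairs $(w_m, q_m)$ (and the primed analogue). The naive choice $q_m = \mathrm{P}_{Rm/4}$ fails: the only candidate for $w_m$ is $\text{Bin}(\lceil Rm\rceil, 1/2) * \mathrm{P}_{-Rm/4}$, whose generating function $\bigl((1+z)/2\bigr)^n e^{-(n/4)(z-1)}$ develops negative coefficients already at $n = 4$ (for instance the $z^3$ coefficient equals $-e/96$). I would therefore allow $q_m$ to drift by $o(\sqrt{m})$ from $\mathrm{P}_{Rm/4}$ and absorb the third- and higher-cumulant mismatch between Binomial and Poisson into $w_m$. The refined Poisson- and signed-Poisson-approximation inequalities of Roos~\cite{roos_improvements_2003}, Barbour-Čekanavićius~\cite{barbour_total_2002}, and Adell-Jodrá~\cite{adell_exact_2006} supply the quantitative bounds needed to make $w_m$ a genuine probability distribution while keeping $\dtv(q_m, \mathrm{P}_{Rm/4}) \to 0$; local central limit theorems for both $\text{Bin}(n,1/2)$ (centered) and $\mathrm{P}_{n/4}$ (centered), which approach a common Gaussian of variance $n/4$ at rate $O(n^{-1/2})$, provide the underlying intuition that the required approximation is quantitatively feasible.
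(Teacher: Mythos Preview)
Your proposal has a genuine gap in the direction $\widehat{\phi_{\mathrm{coh}}}(R)\cov_\epsilon\widehat{\chi}_{R/4}$. You correctly observe that the naive factorization $\mathrm{Bin}(\lceil Rm\rceil,1/2)=w_m*\mathrm{P}_{Rm/4}$ fails because $w_m=\mathrm{Bin}*\mathrm{P}_{-Rm/4}$ acquires negative entries, but you never actually construct an alternative pair $(w_m,q_m)$ with $w_m\geq 0$ and $\dtv(q_m,\mathrm{P}_{Rm/4})\to 0$. The references you cite furnish total-variation bounds, not non-negative convolution factorizations; the passage from one to the other is precisely the hard part, and your proposal leaves it as a hope. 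In fact, no Poisson target works at all: for any $\lambda>0$ the sequence $\mathrm{Bin}(n,1/2)*\mathrm{P}_{-\lambda}$ has infinitely many sign changes (its generating function $((1+z)/2)^n e^{-\lambda(z-1)}$ is a polynomial times $e^{-\lambda z}$), so you would need a non-Poisson $q_m$ whose ``inverse'' $\widetilde{q_m}$ interacts well with the binomial, and you give no candidate.

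The paper bypasses this difficulty by using a different tool for the approximate step. Lemma~\ref{lem:covariant_dTV} (from \cite{marvian_operational_2022}) says that for any pure states $\psi,\phi$ there is a covariant \emph{unitary} $U$ with $D(U\psi U^\dagger,\phi)\leq\sqrt{2\dtv(p_\psi,p_\phi)}$. With this in hand the argument is short. Introduce the intermediate pure state $\kappa_m$ with energy distribution $q_m\coloneqq\Upsilon_{s_m}\mathrm{P}_{Rm/4}$, $s_m=\lfloor\lceil Rm\rceil/4\rfloor$. The translated-Poisson approximation (Theorem~\ref{TP_distr_convergence}) combined with Theorem~\ref{thm:comparison_Poisson} gives $\dtv(p_{\phi_{\mathrm{coh}}^{\otimes\lceil Rm\rceil}},q_m)\to 0$, so Lemma~\ref{lem:covariant_dTV} yields covariant unitaries achieving $\phi_{\mathrm{coh}}^{\otimes\lceil Rm\rceil}\approx\kappa_m$ in both directions. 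Since $q_m$ and $\mathrm{P}_{Rm/4}$ differ only by an integer shift, both $q_m\asucc\mathrm{P}_{Rm/4}$ and $\mathrm{P}_{Rm/4}\asucc q_m$ hold exactly (the weight is a single Kronecker delta), giving exact interconversion $\kappa_m\leftrightarrow\chi_{Rm/4}$ via Theorem~\ref{thm:convertibility_a_majorization}. Composing finishes both directions. The key point you missed is that a-majorization is needed only for the trivial shift; the nontrivial binomial-to-Poisson comparison is handled by a covariant unitary, which demands merely TV closeness and not a positive deconvolution.
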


Lemma~\ref{lem:cbit_poisson} is proved based on the arguments in \cite{marvian_operational_2022}. The following lemma connects the closeness of energy distributions and the convertibility of pure states. 
\begin{lemma}[\cite{marvian_operational_2022}]\label{lem:covariant_dTV}
Let $\psi$ and $\phi$ be pure states. There exists a covariant unitary operator $U$ such that
\begin{align}
    D\left(U\psi U^\dag,\phi\right)\leq \sqrt{2d_{\mathrm{TV}}\left(p_\psi,p_\phi\right)}.
\end{align}
\end{lemma}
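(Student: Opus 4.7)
The plan is to exploit the fact that covariance under the one-parameter group $e^{-\ii H t}$ with $H=\sum_n n\ket{n}\bra{n}$ pins down the form of admissible unitaries to diagonal ones in the energy eigenbasis, and then to reduce the trace distance to a classical Bhattacharyya-type quantity.

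First, I would observe that a unitary $U$ satisfies the covariance condition \eqref{eq:covariant_channel_definition} iff $[U,H]=0$, so (since $H$ is non-degenerate on its eigenbasis) $U=\sum_n e^{\ii \theta_n}\ket{n}\bra{n}$ for some real phases $\{\theta_n\}$. Writing the given pure states in the energy eigenbasis as $\ket{\psi}=\sum_n\sqrt{p_\psi(n)}\,e^{\ii\alpha_n}\ket{n}$ and $\ket{\phi}=\sum_n\sqrt{p_\phi(n)}\,e^{\ii\beta_n}\ket{n}$, I would choose $\theta_n\coloneqq \beta_n-\alpha_n$ (the choice is immaterial when $p_\psi(n)=0$). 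With this choice,
\begin{align}
\braket{\phi|U|\psi}=\sum_n\sqrt{p_\psi(n)\,p_\phi(n)}=:F(p_\psi,p_\phi),
\end{align}
the classical fidelity (Bhattacharyya coefficient) between the two energy distributions.

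Next, for pure states the trace distance admits the closed form $D(U\psi U^\dag,\phi)=\sqrt{1-|\braket{\phi|U|\psi}|^2}$, so I would obtain
\begin{align}
D(U\psi U^\dag,\phi)^2 = 1-F(p_\psi,p_\phi)^2 = \bigl(1-F(p_\psi,p_\phi)\bigr)\bigl(1+F(p_\psi,p_\phi)\bigr) \leq 2\bigl(1-F(p_\psi,p_\phi)\bigr),
\end{align}
using $F\leq 1$. To finish, I would invoke the classical Fuchs--van de Graaf inequality $1-F(p,q)\leq \dtv(p,q)$, which follows from the elementary bound $1-\sum_n\sqrt{p(n)q(n)}\leq \tfrac12\sum_n|\sqrt{p(n)}-\sqrt{q(n)}|\cdot|\sqrt{p(n)}+\sqrt{q(n)}|=\tfrac12\sum_n|p(n)-q(n)|$. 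Combining these two steps yields $D(U\psi U^\dag,\phi)^2\leq 2\dtv(p_\psi,p_\phi)$, i.e.\ the claimed bound.

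I do not expect any real obstacle: the structural point (covariant unitaries are phases in the energy basis) is immediate from $[U,H]=0$, and the rest is the standard passage from trace distance of aligned pure states to classical fidelity, followed by the classical Fuchs--van de Graaf inequality. The only mild care needed is to define the phases $\theta_n$ arbitrarily on the zero-amplitude modes so that $U$ is a well-defined unitary on the whole Hilbert space.
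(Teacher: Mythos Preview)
Your argument is correct. The paper does not supply its own proof of this lemma---it is quoted from \cite{marvian_operational_2022}---so there is nothing to compare against; the route you take (choose a diagonal unitary to align phases, reduce to the Bhattacharyya coefficient, then apply $1-F(p,q)\leq\dtv(p,q)$) is precisely the standard argument behind the cited result. One cosmetic point: your claim that covariance forces $[U,H]=0$ is slightly stronger than what you actually use (a covariant unitary channel only determines $U$ up to a global phase, so strictly one gets $e^{-\ii Ht}Ue^{\ii Ht}=e^{\ii\theta(t)}U$), but since you only need the ``if'' direction---a diagonal $U$ commutes with $H$ and is therefore covariant---this does not affect the proof.
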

Here, the total variation distance between two probability distributions is defined by
\begin{align}
    d_{\mathrm{TV}}\left(p,q\right)\coloneqq \frac{1}{2}\sum_{n\in\mathbb{Z}}\left|p(n)-q(n)\right|.
\end{align}

In the i.i.d. regime, the translated Poisson distribution plays an important role, which is defined as follows:
\begin{definition}[The translated Poisson distribution]
For $\mu\in\mathbb{R}$ and $\sigma^2\geq0$, the translated Poisson distribution is defined by
\begin{align}
     \mathrm{TP}_{\mu,\sigma^2}(n)\coloneqq\mathrm{P}_{\sigma^2+\gamma}(n-s)=
    \begin{cases}
    e^{-(\sigma^2+\gamma)}\frac{(\sigma^2+\gamma)^{n-s}}{(n-s)!}\quad &(\mathrm{For }\,n=s,s+1,s+2,s+3,\cdots)\\
    0 &(\mathrm{Otherwise}),
    \end{cases}
\end{align}
where $s\coloneqq \floor{\mu-\sigma^2}$ is an integer and $\gamma\coloneqq \mu-\sigma^2- \floor{\mu-\sigma^2}$ is a parameter satisfying $0\leq \gamma<1$. 
The mean and variance are given by $\mu$ and $\sigma^2+\gamma$, respectively. Alternatively, the translated Poisson distribution is written as $\mathrm{TP}_{\mu,\sigma^2}=\Upsilon_{\floor{\mu-\sigma^2}}\mathrm{P}_{\sigma^2+\gamma}$. 
\end{definition}

It is known that the sum of integer-valued random variables converges to the translated Poisson distribution \cite{barbour_total_2002,marvian_operational_2022}. 
Let $\{Z_i\}_{i=1}^m$ be a set of independent random variables with mean $\mu_i\coloneqq \mathbb{E}Z_i$ and variance $\sigma^2_i\coloneqq \mathbb{E}((Z_i-\mu_i)^2)$. 
Assume that its absolute third moment $\mathbb{E}|Z_i^3|$ is finite. 
Let $\mathcal{L}(Z)$ denote the probability distribution of a random variable $Z$. 
We define
\begin{align}
    v_i&\coloneqq\min\left\{\frac{1}{2},1- d_{\mathrm{TV}}\left(\mathcal{L}(Z_i),\mathcal{L}(Z_i+1)\right)\right\},\\
    \psi_i&\coloneqq \sigma_i^2 \mathbb{E}\left(Z_i(Z_i-1)\right)+|\mu_i-\sigma^2_i|\mathbb{E}\left((Z_i-1)(Z_i-2)\right)+\mathbb{E}\left(|Z_i(Z_i-1)(Z_i-2)|\right).
\end{align}
For the sum of the random variables $W\coloneqq \sum_{i=1}^mX_i$ with mean $\tilde{\mu}\coloneqq \mathbb{E}(W)=\sum_{i=1}^m\mu_i$ and variance $\tilde{\sigma}^2\coloneqq \mathbb{E}((W-\tilde{\mu})^2)=\sum_{i=1}^m\sigma^2_i$, 
the following theorem holds:
\begin{theorem}[Corollary~3.2 in \cite{barbour_total_2002}, Theorem~7 in \cite{marvian_operational_2022}]\label{TP_distr_convergence}
Suppose that the random variable $Z_i$ satisfies $\sigma_i^2\geq a>0$, $v_i\geq b >0$ and $\sigma_i^{-2}\psi_i\leq c<\infty$ for any $1\leq i\leq m$ with some parameters $a,b$ and $c$. Then 
\begin{align}
    \dtv\left(\mathcal{L}(W),\mathrm{TP}_{\tilde{\mu},\tilde{\sigma}^2}\right)\leq \frac{c}{\sqrt{mb-\frac{1}{2}}}+\frac{2}{m a}.
\end{align}
\end{theorem}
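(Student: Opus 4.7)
The plan is to prove this via the Stein--Chen method specialized to the translated Poisson law, following the strategy of Barbour--Ćekanavićius. First, I would introduce the Stein operator $(\mathcal{A}g)(n)\coloneqq(\tilde\sigma^2+\tilde\gamma)g(n+1)-(n-\tilde s)g(n)$, where $\tilde s\coloneqq\floor{\tilde\mu-\tilde\sigma^2}$ and $\tilde\gamma\coloneqq\tilde\mu-\tilde\sigma^2-\tilde s$, which characterizes $\mathrm{TP}_{\tilde\mu,\tilde\sigma^2}$ as the unique integer-valued law with $\mathbb{E}[(\mathcal{A}g)(X)]=0$ for all bounded $g\colon\mathbb{Z}\to\mathbb{R}$. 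For each $A\subseteq\mathbb{Z}$ the Stein equation $(\mathcal{A}g_A)(n)=\mathbf{1}_A(n)-\mathrm{TP}_{\tilde\mu,\tilde\sigma^2}(A)$ admits an explicit solution $g_A$ with standard bounds $\|g_A\|_\infty,\|\Delta g_A\|_\infty=O(1/\tilde\sigma)$.

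Second, I would write $\dtv(\mathcal{L}(W),\mathrm{TP}_{\tilde\mu,\tilde\sigma^2})=\sup_A|\mathbb{E}[(\mathcal{A}g_A)(W)]|$ and decompose the expectation using the independence of the $Z_i$. Setting $W_i'\coloneqq\sum_{j\neq i}Z_j$ (independent of $Z_i$) and Taylor-expanding $g_A(W_i'+Z_i+1)-g_A(W_i'+Z_i)$ about $W_i'$ through second order, the sum $\mathbb{E}[(\mathcal{A}g_A)(W)]$ reduces to terms involving first and second forward differences of $g_A$ at $W_i'$, weighted by the moments $\sigma_i^2$, $\mathbb{E}[Z_i(Z_i-1)]$, $|\mu_i-\sigma_i^2|\,\mathbb{E}[(Z_i-1)(Z_i-2)]$, and $\mathbb{E}|Z_i(Z_i-1)(Z_i-2)|$---precisely the three pieces making up $\psi_i$.

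Third, the key quantitative input is a sharpened Stein factor for $g_A$ when evaluated at the convolution $W_i'$: using the smoothness of each summand encoded by $v_j$, a Mattner--Roos-type convolution-smoothness lemma yields $\|\mathcal{L}(W_i'+1)-\mathcal{L}(W_i')\|_{\mathrm{TV}}=O\bigl(1/\sqrt{\sum_{j\neq i}v_j}\bigr)$. Combined with the hypothesis $v_j\geq b$ this gives a factor of order $1/\sqrt{(m-1)b}$, producing the $1/\sqrt{mb-1/2}$ denominator after absorbing the $-1$. Summing the second-order remainders over $i$ and normalizing through $\sigma_i^{-2}\psi_i\leq c$ yields the first term $c/\sqrt{mb-1/2}$. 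The second term $2/(ma)$ comes from the built-in $O(1/\tilde\sigma^2)$ correction between the variance $\tilde\sigma^2+\tilde\gamma$ of $\mathrm{TP}_{\tilde\mu,\tilde\sigma^2}$ and the empirical variance $\tilde\sigma^2=\sum_i\sigma_i^2\geq ma$.

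The main obstacle is step three: upgrading the naive Stein factor $1/\tilde\sigma$ to the collective smoothness factor $1/\sqrt{mb}$ requires the precise convolution-smoothness estimate for lattice-valued laws, which is where the assumption $v_i\geq b$ plays an essential role and where the delicate analysis of Barbour--Ćekanavićius enters. Without this sharpening, one would only recover a much weaker bound that would be useless for the i.i.d.\ regime in our application, since the main term needs to vanish as $m\to\infty$ uniformly in the scaling.
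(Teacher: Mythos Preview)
The paper does not prove this theorem at all: it is quoted verbatim as Corollary~3.2 of Barbour--\'Cekanavi\'cius and Theorem~7 of Marvian, and then immediately applied to the energy distribution of $\phi_{\mathrm{coh}}^{\otimes\lceil Rm\rceil}$ without any argument. So there is no ``paper's own proof'' to compare your proposal against; the authors simply import the result from the literature.

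That said, your sketch is faithful to the Barbour--\'Cekanavi\'cius method. The Stein operator you wrote down, the decomposition into leave-one-out sums $W_i'$, and the identification of the three moment terms assembling into $\psi_i$ are all correct ingredients of their proof. You are also right that the crux is step three: the naive Stein factor $1/\tilde\sigma$ has to be replaced by the smoothness factor coming from $\sum_{j\neq i}v_j$, and this is exactly where the overlap quantity $v_i=\min\{1/2,\,1-\dtv(\mathcal{L}(Z_i),\mathcal{L}(Z_i+1))\}$ enters. One small correction: the $2/(ma)$ term does not arise solely from the $\tilde\gamma$ discrepancy between $\tilde\sigma^2$ and the variance $\tilde\sigma^2+\tilde\gamma$ of the translated Poisson; in Barbour--\'Cekanavi\'cius it collects several boundary and rounding corrections bounded by $O(1/\tilde\sigma^2)$, and then $\tilde\sigma^2\geq ma$ turns that into $2/(ma)$. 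But this is a matter of attribution, not a gap in the logic. If you intend to include a self-contained proof, be aware that the convolution-smoothness lemma is itself nontrivial and is the real workhorse; everything else is bookkeeping.
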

Applying this theorem to the energy distribution of $\phi^{\otimes \ceil{Rm}}_{\mathrm{coh}}$, we get
\begin{align}
    \lim_{m\to\infty}d_{\mathrm{TV}}\left(p_{\phi^{\otimes \ceil{Rm}}_{\mathrm{coh}}},\mathrm{TP}_{\frac{1}{2}\ceil{Rm},\frac{1}{4}\ceil{Rm}}\right)=0. \label{eq:coh_TP}
\end{align}

For Poisson distributions with different variance, the upper bound on the total variation distance is provided in the following theorem:
\begin{theorem}[Equation~(5) in Ref.~\cite{roos_improvements_2003}, Equation~(2.2) in Ref.~\cite{adell_exact_2006}, Lemma~8 in Ref.~\cite{marvian_operational_2022}]~\label{thm:comparison_Poisson}\\
\begin{align}
    \dtv\left(\mathrm{P}_{\sigma^2},\mathrm{P}_{\sigma^{\prime2}}\right)\leq \min\left\{x,\sqrt{\frac{2}{e}}\left(\sqrt{\sigma^2+x}-\sigma\right)\right\},
\end{align}
where $x\coloneqq |\sigma^2-\sigma^{\prime 2}|$.
\end{theorem}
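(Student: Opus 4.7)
Without loss of generality, assume $\sigma'^2 = \sigma^2 + x$ with $x \geq 0$, and I would prove the two upper bounds inside the minimum separately.

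For the first bound $x$, I would use the Poisson convolution identity $\mathrm{P}_{\sigma^2} * \mathrm{P}_x = \mathrm{P}_{\sigma^2 + x}$ of Eq.~\eqref{eq:poisson_convolution} to construct a natural coupling: take $X \sim \mathrm{P}_{\sigma^2}$ and $Y \sim \mathrm{P}_x$ independent, so that $X + Y \sim \mathrm{P}_{\sigma^2+x}$. The coupling inequality then yields
\begin{align}
\dtv(\mathrm{P}_{\sigma^2}, \mathrm{P}_{\sigma^2+x}) \leq \Pr(X \neq X + Y) = \Pr(Y \geq 1) = 1 - e^{-x} \leq x.
\end{align}

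For the second, sharper bound, I would use a parameter--derivative representation. A direct computation shows $\partial_s \mathrm{P}_s(n) = \mathrm{P}_s(n-1) - \mathrm{P}_s(n)$ for every $n \geq 0$, with the convention $\mathrm{P}_s(-1) \coloneqq 0$. Integrating from $\sigma^2$ to $\sigma^2 + x$, swapping sum and integral by Fubini, and using that $\mathrm{P}_s$ is unimodal with mode at $\lfloor s \rfloor$ (from the ratio $\mathrm{P}_s(k)/\mathrm{P}_s(k-1) = s/k$), the telescoping identity $\sum_n |\mathrm{P}_s(n-1) - \mathrm{P}_s(n)| = 2\,\max_n \mathrm{P}_s(n)$ gives
\begin{align}
\dtv(\mathrm{P}_{\sigma^2}, \mathrm{P}_{\sigma^2+x}) \leq \int_{\sigma^2}^{\sigma^2 + x} \max_n \mathrm{P}_s(n)\, ds.
\end{align}
Inserting the uniform modal bound $\max_n \mathrm{P}_s(n) \leq 1/\sqrt{2es}$ and integrating then produces exactly $\sqrt{2/e}\,(\sqrt{\sigma^2+x} - \sigma)$.

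The main obstacle is establishing this sharp modal bound $\max_n \mathrm{P}_s(n) \leq 1/\sqrt{2es}$ with the precise constant. Standard Stirling estimates of the form $k! \geq (k/e)^k \sqrt{2\pi k}$ handle the integer-$s$ case and actually yield an even better constant there, but obtaining a clean uniform bound valid for all real $s > 0$ requires a more delicate analysis of $\mathrm{P}_s(\lfloor s \rfloor)$ as a function of $s$; this is presumably where the refined estimates of Roos~\cite{roos_improvements_2003} and Adell--Jodr\'a~\cite{adell_exact_2006} cited in the theorem enter. The remaining steps (coupling, swapping sum and integral, telescoping) are essentially mechanical.
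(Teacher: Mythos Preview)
The paper does not supply its own proof of this statement; it is quoted verbatim from the cited references and used as a black box in the subsequent estimate for $d_{\mathrm{TV}}\bigl(p_{\phi_{\mathrm{coh}}^{\otimes \lceil Rm\rceil}},\Upsilon_{s_m}\mathrm{P}_{Rm/4}\bigr)$. Your sketch is correct and is essentially the argument behind the cited bounds: the coupling via $\mathrm{P}_{\sigma^2}*\mathrm{P}_x=\mathrm{P}_{\sigma^2+x}$ gives the first term, and the derivative identity $\partial_s\mathrm{P}_s(n)=\mathrm{P}_s(n-1)-\mathrm{P}_s(n)$ combined with unimodality reduces the second to the modal estimate $\max_n\mathrm{P}_s(n)\le (2es)^{-1/2}$, which is precisely the nontrivial input from~\cite{adell_exact_2006} you flag; the integral $\int_{\sigma^2}^{\sigma^2+x}(2es)^{-1/2}\,ds$ then evaluates exactly to $\sqrt{2/e}\,(\sqrt{\sigma^2+x}-\sigma)$.
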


Let us define $s_m\coloneqq \floor{\frac{1}{2}\ceil{Rm}-\frac{1}{4}\ceil{Rm}}$ and $\gamma_m\coloneqq\frac{1}{2}\ceil{Rm}-\frac{1}{4}\ceil{Rm}-\floor{\frac{1}{2}\ceil{Rm}-\frac{1}{4}\ceil{Rm}}$.
From Eq.~\eqref{eq:coh_TP} and Theorem~\ref{thm:comparison_Poisson}, we have
\begin{align}
    &d_{\mathrm{TV}}\left(p_{\phi^{\otimes \ceil{Rm}}_{\mathrm{coh}}},\Upsilon_{s_m}\mathrm{P}_{\frac{1}{4}Rm}\right)\\
    &\leq d_{\mathrm{TV}}\left(p_{\phi^{\otimes \ceil{Rm}}_{\mathrm{coh}}},\mathrm{TP}_{\frac{1}{2}\ceil{Rm},\frac{1}{4}\ceil{Rm}}\right)+d_{\mathrm{TV}}\left(\mathrm{TP}_{\frac{1}{2}\ceil{Rm},\frac{1}{4}\ceil{Rm}},\Upsilon_{s_m}\mathrm{P}_{\frac{1}{4}Rm}\right)\\
    &=d_{\mathrm{TV}}\left(p_{\phi^{\otimes \ceil{Rm}}_{\mathrm{coh}}},\mathrm{TP}_{\frac{1}{2}\ceil{Rm},\frac{1}{4}\ceil{Rm}}\right)+d_{\mathrm{TV}}\left(\mathrm{P}_{\frac{1}{4}\ceil{Rm}+\gamma_m},\mathrm{P}_{\frac{1}{4}Rm}\right)\\
    &\to 0\quad \label{eq:convergence_iid_energy_distr}
\end{align}
as $m\to\infty$. 

Let us define $q_m\coloneqq \Upsilon_{s_m}\mathrm{P}_{\frac{1}{4}Rm}$ and pure states $\kappa_m\coloneqq \sum_{n,n'\in\mathbb{Z}}q_m(n)q_m(n')\ket{n}\bra{n'}$. Since $\mathrm{P}_{\frac{1}{4}Rm}\asucc q_m$ and $q_m\asucc \mathrm{P}_{\frac{1}{4}Rm}$ hold, there exist covariant channels $\Lambda_m$ and $\Lambda_m'$ such that $\chi_{\frac{1}{4}Rm}=\Lambda_m(\kappa_m)$ and $\kappa_m=\Lambda_m'(\chi_{\frac{1}{4}Rm})$. 
On the other hand, according to Lemma~\ref{lem:covariant_dTV}, there exist covariant channels $\Theta_m$ and $\Theta_m'$ such that 
\begin{align}
    D\left(\Theta_m\left(\phi^{\otimes \ceil{Rm}}_{\mathrm{coh}}\right),\kappa_m\right)&\leq \sqrt{2d_{\mathrm{TV}}\left(p_{\phi^{\otimes \ceil{Rm}}_{\mathrm{coh}}},\Upsilon_{\frac{1}{2}\ceil{Rm}-\frac{1}{4}\ceil{Rm}}\mathrm{P}_{\frac{1}{4}Rm}\right)}\\
    D\left(\Theta_m'\left(\kappa_m\right),\phi^{\otimes \ceil{Rm}}_{\mathrm{coh}}\right)&\leq \sqrt{2d_{\mathrm{TV}}\left(p_{\phi^{\otimes \ceil{Rm}}_{\mathrm{coh}}},\Upsilon_{\frac{1}{2}\ceil{Rm}-\frac{1}{4}\ceil{Rm}}\mathrm{P}_{\frac{1}{4}Rm}\right)}.
\end{align}
Defining $\mathcal{E}_m\coloneqq \Lambda_m\circ \Theta_m$ and $\mathcal{E}_m'\coloneqq \Theta_m'\circ\Lambda_m'$, we have
\begin{align}
    \lim_{m\to\infty}D\left(\mathcal{E}_m\left(\phi^{\otimes \ceil{Rm}}\right),\chi_{\frac{1}{4}Rm}\right)=\lim_{m\to\infty}D\left(\phi^{\otimes \ceil{Rm}},\mathcal{E}_m'\left(\chi_{\frac{1}{4}Rm}\right)\right)=0,
\end{align}
which concludes the proof of Lemma~\ref{lem:cbit_poisson}. 

\section{Proof of Lemma~\ref{lem:purification_majorization}} 
\begin{proof}
Let $\mathcal{E}$ be a covariant channel with respect to the Hamiltonian given by Eq.~\eqref{eq:Hamiltonian}. From the covariant Stinespring dilation theorem \cite{keyl_optimal_1999,marvian_mashhad_symmetry_2012}, there exists an ancillary system $A$, a symmetric pure state $\eta_A$ of $A$, the Hamiltonian $H_A$ and a covariant unitary operator $U$ such that
\begin{align}
    \mathcal{E}\left(\cdots\right)=\mathrm{Tr}_A\left(U\left(\cdots\otimes \eta_A\right)U^\dag\right).
\end{align}
In our setup, the Hamiltonian of the original system is assumed to be given by Eq.~\eqref{eq:Hamiltonian}. In this case, the Hamiltonian $H_A$ of the auxiliary system has integer eigenvalues. Furthermore, without loss of generality, it is possible to assume that $\eta_A$ is an energy eigenstate with a vanishing eigenvalue of $H_A$, i.e., $H_A\ket{\eta_A}=0$. 

For $\chi_{\lambda}=\ket{\chi_\lambda}\bra{\chi_\lambda}$ and $\ket{\chi_\lambda}\coloneqq\sum_{n=0}^\infty\mathrm{P}_\lambda(n)\ket{n}$, we define $\Phi\coloneqq U\left(\chi_\lambda\otimes \eta_A\right)U^\dag$. Since $U$ is covariant and $H_A\ket{\eta_A}=0$ holds, we have $p_\Phi=p_{\chi_\lambda}=\mathrm{P}_\lambda$. In addition, $\Phi$ is a purification of $\mathcal{E}(\chi_\lambda)$, which concludes the proof of Lemma~\ref{lem:purification_majorization}. 
\end{proof}

Lemma~\ref{lem:purification_majorization} shows that $\mathcal{F}_\infty(\rho)$ for a general state $\rho$ quantifies the minimum amount of coherence in $\chi_\lambda$ required to generate $\rho$.

\section{Poof of Lemma~\ref{lem:purification_smooth}}

Instead of Lemma~\ref{lem:purification_smooth}, we here prove a slightly improved lemma: 
\begin{lemma}\label{lem:purifcation_smooth_tighter}
Let $\psi$ and $\phi$ be pure states. 
If there exists a covariant channel $\mathcal{E}$ such that $\mathcal{E}(\psi)\approx_\epsilon\phi$ for $0\leq \epsilon\leq 1$, then there exists a pure state $\psi'$ such that $\psi'\in B^{f(\epsilon)}_{\mathrm{pure}}(\psi)$ and $\quad p_{\psi'}\asucc p_\phi$ for $f(\epsilon)\coloneqq  \sqrt{2\sqrt{1-(1-\epsilon)^2}}$. 
\end{lemma}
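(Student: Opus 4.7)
The plan is to combine the covariant Stinespring dilation with Uhlmann's theorem, and then exploit the fact that a covariant unitary conserves total energy, so that the quantum closeness of $\mathcal{E}(\psi)$ and $\phi$ translates into a TV-closeness of classical energy distributions from which $\psi'$ can be read off directly.

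First I would invoke the covariant Stinespring dilation to write $\mathcal{E}(\rho)=\mathrm{Tr}_A(U(\rho\otimes\eta_A)U^\dagger)$, with $U$ a covariant unitary on $SA$ and $\eta_A$ a zero-energy eigenstate of $H_A$ (WLOG, as in Lemma~\ref{lem:purification_majorization}), and set $|\Psi\rangle:=U(|\psi\rangle|\eta_A\rangle)$, which purifies $\mathcal{E}(\psi)$. The Fuchs-van de Graaf inequality turns $D(\mathcal{E}(\psi),\phi)\leq\epsilon$ into $F(\mathcal{E}(\psi),\phi)\geq 1-\epsilon$, and Uhlmann's theorem, exploiting purity of $\phi$, yields a pure $|\xi\rangle$ on $A$ with $|\langle\Psi|\phi\xi\rangle|\geq 1-\epsilon$. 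Defining $|\Omega\rangle:=U^\dagger|\phi\xi\rangle$ then gives $D(|\Omega\rangle\langle\Omega|,|\psi\eta_A\rangle\langle\psi\eta_A|)\leq\tau:=\sqrt{1-(1-\epsilon)^2}$. Because $U$ commutes with $H_S+H_A$, the total-energy distribution of $|\Omega\rangle$ equals that of $|\phi\xi\rangle$, namely $p_\phi*p_\xi$, while $|\psi\eta_A\rangle$ has total-energy distribution $p_\psi$. Monotonicity of the TV distance under the projective measurement of $H_S+H_A$ yields $\dtv(p_\phi*p_\xi,p_\psi)\leq\tau$.

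Writing $|\psi\rangle=\sum_n\alpha_n|n\rangle$, I would define $|\psi'\rangle:=\sum_n\sqrt{(p_\phi*p_\xi)(n)}\,e^{\ii\arg\alpha_n}|n\rangle$ (with a default phase when $\alpha_n=0$), so that $p_{\psi'}=p_\phi*p_\xi=\sum_m p_\xi(m)\,\Upsilon_m p_\phi$. Since $p_\xi$ is itself a probability distribution, Theorem~\ref{thm:Gour_Spekkens} (Gour-Spekkens) immediately gives $p_{\psi'}\asucc p_\phi$. The distance bound then follows from the Bhattacharyya--TV inequality $\sum_n\sqrt{p(n)q(n)}\geq 1-\dtv(p,q)$:
\[ |\langle\psi|\psi'\rangle|=\sum_n|\alpha_n|\sqrt{p_{\psi'}(n)}=\sum_n\sqrt{p_\psi(n)p_{\psi'}(n)}\geq 1-\dtv(p_\psi,p_{\psi'})\geq 1-\tau, \]
whence $D(\psi',\psi)\leq\sqrt{1-(1-\tau)^2}\leq\sqrt{2\tau}=f(\epsilon)$.

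The hard part will be ensuring that $p_\phi*p_\xi$ is genuinely a probability distribution on the spectrum of $H_S$, i.e., on $\mathbb{Z}_{\geq 0}$. Since covariance forces $\eta_A$ at eigenvalue $0$, $H_A$ may carry negative eigenvalues, so $p_\xi$ (and hence $p_\phi*p_\xi$) can a priori have negative support, in which case the naive $|\psi'\rangle$ above is ill-defined on $S$. The negative mass is however bounded by $\tau$ through the TV-closeness to $p_\psi$, and the delicate step will be either to choose the Stinespring dilation so that $|\xi\rangle$ lies in the non-negative-energy sector of $A$, or to modify $|\psi'\rangle$ by a projection/truncation argument that preserves both $p_{\psi'}\asucc p_\phi$ and the $O(\sqrt{\tau})$ closeness to $\psi$.
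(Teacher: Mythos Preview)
Your approach is correct and takes a genuinely different route from the paper. The paper derives the key estimate
\[
\dtv\bigl(p_\psi,\; w*p_\phi\bigr)\;\leq\;\sqrt{1-(1-\epsilon)^2}
\]
via an explicit Kraus decomposition of the covariant channel (their Lemma~\ref{lem:tr_to_dTV}): writing $\mathcal{E}=\sum_{k,u}K_{k,u}(\cdot)K_{k,u}^\dag$ with $K_{k,u}$ shifting energy by $-k$, they set $w(k)=\sum_u\|K_{k,u}|\psi\rangle\|^2$ and bound the TV distance by a Fuchs--van de Graaf/concavity chain. You reach the \emph{same} estimate with $w=p_\xi$ by combining the covariant Stinespring dilation with Uhlmann's theorem and then applying data processing under the total-energy measurement. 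From that point on the two proofs coincide: one builds $\psi'$ with energy distribution $w*p_\phi$ and uses the Bhattacharyya--TV inequality (the paper packages this as Lemma~\ref{lem:covariant_dTV}). Your argument is shorter and more conceptual; the paper's Kraus computation is more hands-on but yields the same constants.

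On your ``hard part'': the negative-support issue is not peculiar to your route. In the paper's construction the weights $w(k)$ can also be nonzero for $k<0$ (the channel may raise energy), so their $q=w*p_\phi$ can leak below zero just as your $p_\phi*p_\xi$ can; the paper simply writes $|\psi''\rangle=\sum_n\sqrt{q(n)}\,|n\rangle$ without comment. It is a harmless technicality rather than a genuine obstruction: one may let $\psi'$ live in the doubly-infinite ladder with Hamiltonian $\sum_{n\in\mathbb{Z}}n|n\rangle\langle n|$ (embedding $\psi$ there as well), since a-majorization, $\mathcal{F}_{\min}$, and the trace-distance ball $B_{\mathrm{pure}}^\epsilon$ are all defined at the level of $\mathbb{Z}$-indexed energy distributions and are invariant under this extension; the subsequent bounds on $\mathcal{F}_{\min}^\epsilon$ in the proof of Theorem~\ref{thm:cost_dist_Finf_F0} go through unchanged. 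Neither of your proposed fixes (forcing $|\xi\rangle$ into the non-negative sector, or truncating) is actually required.
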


In the proof Lemma~\ref{lem:purifcation_smooth_tighter}, we use the following:
\begin{lemma}\label{lem:tr_to_dTV}
Let $\psi$ and $\phi$ be arbitrary states. For a covariant channel $\mathcal{E}$, if
\begin{align}
    \epsilon\geq  D\left(\mathcal{E}\left(\psi\right),\phi\right)
\end{align}
for some $1\geq \epsilon\geq 0$, 
then there exists a probability distribution $\{w(k)\}_{k\in\mathbb{Z}}$ such that
\begin{align}
    d_{\mathrm{TV}}\left(p_\psi,\sum_{k\in\mathbb{Z}}w(k)\Upsilon_k p_{\phi}\right)\leq \sqrt{1-(1-\epsilon)^2}.
\end{align}
\end{lemma}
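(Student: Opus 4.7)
The plan is to lift the approximate conversion $\mathcal{E}(\psi)\approx_\epsilon\phi$ to an approximate equality between two pure states on a common extended Hilbert space, using the covariant Stinespring dilation of $\mathcal{E}$ together with Uhlmann's theorem, and then read off the mixture-of-shifts representation directly from the total energy distributions of those pure states.

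The first step is to construct a convenient purification of $\mathcal{E}(\psi)$. Writing $\mathcal{E}(\cdot)=\mathrm{Tr}_A(V(\cdot\otimes\eta_A)V^\dagger)$ by covariant Stinespring (as in Refs.~\cite{keyl_optimal_1999,marvian_mashhad_symmetry_2012}), with $V$ covariant under $H_S+H_A$ for some integer-valued $H_A$ and $\eta_A$ an eigenstate of $H_A$ of zero eigenvalue, one picks a purification $\Psi$ of $\psi$ on $S\otimes E$ with $H_E=0$, so that $p_\Psi=p_\psi$. Then $\Psi'\coloneqq(V\otimes I_E)(\Psi\otimes\eta_A)(V\otimes I_E)^\dagger$ is a pure state on $S\otimes A\otimes E$ whose $S$-marginal is $\mathcal{E}(\psi)$; because $V$ commutes with the total-energy projectors and $\eta_A$ carries no energy, the total energy distribution of $\Psi'$ still equals $p_\psi$.

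The second step invokes Uhlmann's theorem to pick a purification $\Phi'$ of $\phi$ on the same extended space with $|\langle\Psi'|\Phi'\rangle|=F(\mathcal{E}(\psi),\phi)$. The Fuchs--van de Graaf inequalities give $F(\mathcal{E}(\psi),\phi)\ge 1-\epsilon$ and hence, for pure states, $D(\Psi',\Phi')=\sqrt{1-|\langle\Psi'|\Phi'\rangle|^2}\le\sqrt{1-(1-\epsilon)^2}$. When $\phi$ is pure (the setting relevant to the application in Lemma~\ref{lem:purifcation_smooth_tighter}), any purification on $S\otimes A\otimes E$ factorizes as $\Phi'=|\phi\rangle_S\otimes|\xi\rangle_{AE}$, so its total energy distribution is the convolution $p_{\Phi'}=p_\phi\ast p_\xi=\sum_k p_\xi(k)\,\Upsilon_k p_\phi$, a mixture of shifts of $p_\phi$ with weights $w(k)\coloneqq p_\xi(k)$. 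Monotonicity of the trace distance under the energy-dephasing CPTP map then yields
\begin{align*}
d_{\mathrm{TV}}\!\left(p_\psi,\sum_{k\in\mathbb{Z}}w(k)\Upsilon_k p_\phi\right)=d_{\mathrm{TV}}(p_{\Psi'},p_{\Phi'})\le D(\Psi',\Phi')\le\sqrt{1-(1-\epsilon)^2}.
\end{align*}

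The main subtlety is the factorization of $\Phi'$ in the last step: for a generic mixed $\phi$, an Uhlmann-optimal purification need not have product form, and its total energy distribution need not be a convolution of $p_\phi$ with any ancillary distribution, so the mixture-of-shifts conclusion can fail. In other words, the argument genuinely relies on $\phi$ being (effectively) pure, which is precisely the regime in which the lemma is subsequently used, and the challenge is to match the Uhlmann purification of $\phi$ with a purification of $\mathcal{E}(\psi)$ whose total energy distribution coincides with $p_\psi$, which is made possible by choosing $H_E=0$ and exploiting the covariance of $V$ together with the zero-energy character of $\eta_A$.
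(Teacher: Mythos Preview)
Your argument is correct (for pure $\phi$, which is all that is needed downstream) and follows a genuinely different route from the paper's proof. The paper works directly with the covariant Kraus representation $K_{k,u}=\sum_n c_n^{(k,u)}\ket{n{-}k}\bra{n}$: applying these Kraus operators to $\ket{\psi}$ produces normalized pure states $\ket{\phi^{(k,u)}}$ together with the exact identity $p_\psi=\sum_{k,u}q^{(k,u)}\Upsilon_k p_{\phi^{(k,u)}}$, sets $w(k)=\sum_u q^{(k,u)}$, and then controls $\dtv(p_{\phi^{(k,u)}},p_\phi)$ via Fuchs--van~de~Graaf and Jensen's inequality. Your approach instead lifts everything to pure states on $S\otimes A\otimes E$ via the covariant Stinespring dilation and Uhlmann's theorem, observes that the Uhlmann purification of a pure $\phi$ must factor as $\ket{\phi}_S\otimes\ket{\xi}_{AE}$, and reads off $w=p_\xi$ from the ancilla energy distribution; the bound then drops out of a single application of data processing under the total-energy measurement. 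The paper's route is more constructive (the weights are built explicitly from the Kraus data), while yours is more structural and, as you note, extends to mixed $\psi$ without change by purifying with $H_E=0$. Both arguments rely in the same essential way on $\phi$ being pure---the paper uses it to linearize $F(\mathcal{E}(\psi),\phi)$ over the Kraus ensemble, you use it to force the product form of the Uhlmann purification---so your caveat about mixed $\phi$ is not a defect relative to the paper.
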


\begin{proof}
It is known \cite{gour_resource_2008} that any covariant quantum channel has a Kraus representation with Kraus operators $\{K_{k,u}\}$ such that
\begin{align}
    K_{k,u}&\coloneqq \sum_{n=\max\{0,-k\}}^\infty c_n^{(k,u)}\ket{n-k}\bra{n}\\
    &=\sum_{n\in\mathbb{Z}}c_n^{(k,u)}\ket{n-k}\bra{n},
\end{align}
where in the second line, for notational simplicity, we defined $c_n^{(k,u)}=0$ if $n<\max\{0,-k\}$.
For the channel to be trace-preserving, the coefficients must satisfy
\begin{align}
    \sum_{k,u} |c_n^{(k,u)}|^2=1 
\end{align}
for all $n\geq 0$. 
Note that 
\begin{align}
    \sum_{k,u}K_{k,u}^\dag\ket{n-k}\bra{n-k}K_{k,u}&= \sum_{k,u} |c_n^{(k,u)}|^2\ket{n}\bra{n}=\ket{n}\bra{n}.
\end{align}

For a given pure state $\psi$, we define
\begin{align}
    q^{(k,u)}\coloneqq \|K_{k,u}\ket{\psi}\|^2
\end{align}
For each $(k,u)$ such that $q^{(k,u)}\neq 0$, we define a normalized pure state
\begin{align}
    \ket{\phi}^{(k,u)}\coloneqq \frac{1}{ \sqrt{q^{(k,u)}}}K_{k,u}\ket{\psi}. 
\end{align}
The energy distributions for pure states are related as
\begin{align}
    p_\psi(n)&\coloneqq \braket{\psi|\ket{n}\bra{n}|\psi}\\
    &=\sum_{k,u}\braket{\psi|K_{k,u}^\dag\ket{n-k}\bra{n-k}K_{k,u}|\psi}\\
    &=\sum_{k,u}q^{
    (k,u)}\braket{\phi^{(k,u)}|\ket{n-k}\bra{n-k}|\phi^{(k,u)}}\\
    &=\sum_{k,u}q^{
    (k,u)}p_{\phi^{(k,u)}}(n-k),
\end{align}
or equivalently, 
\begin{align}
    p_\psi=\sum_{k,u}q^{
    (k,u)}\Upsilon_kp_{\phi^{(k,u)}}=\sum_kw_k\Upsilon_kp_{\phi^{(k,u)}},
\end{align}
where we have defined 
\begin{align}
    w(k)\coloneqq \sum_uq^{
    (k,u)}.
\end{align}
For this probability distribution $\{w(k)\}_{k\in\mathbb{Z}}$, we have
\begin{align}
    \dtv\left(p_\psi,\sum_kw(k)\Upsilon_kp_\phi\right)
    &=\dtv\left(\sum_{k,u}q^{(k,u)}\Upsilon_kp_{\phi^{(k,u)}},\sum_{k,u}q^{(k,u)}\Upsilon_kp_\phi\right)\\
    &=\sum_{k,u}q^{(k,u)} \dtv\left(\Upsilon_kp_{\phi^{(k,u)}},\Upsilon_kp_\phi\right)\\
    &=\sum_{k,u}q^{(k,u)} \dtv\left(p_{\phi^{(k,u)}},p_\phi\right)\\
    &\leq \sqrt{\sum_{k,u}q^{(k,u)} \dtv^2\left(p_{\phi^{(k,u)}},p_\phi\right)}.
\end{align}

By using the the Fuchs–van de Graaf inequalities, we have
\begin{align}
    \epsilon&
    \geq D(\mathcal{E}(\psi),\phi)\\
    &\geq 1-\sqrt{F(\mathcal{E}(\psi),\phi)}\\
    &=1-\sqrt{\sum_{k,u}q^{(k,u)}F(\phi^{(k,u)},\phi)}\\
    &\geq 1-\sqrt{1-\sum_{k,u}q^{(k,u)}D^2(\phi^{(k,u)},\phi)}\\
    &\geq 1-\sqrt{1-\sum_{k,u}q^{(k,u)}\dtv^2(p_{\phi^{(k,u)}},p_{\phi})},
\end{align}
which implies that 
\begin{align}
    \sum_{k,u}q^{(k,u)}\dtv^2(p_{\phi^{(k,u)}},p_{\phi})\leq 1-(1-\epsilon)^2.
\end{align}

Therefore, we finally get
\begin{align}
    \dtv\left(p_\psi,\sum_kw(k)\Upsilon_kp_\phi\right)\leq \sqrt{1-(1-\epsilon)^2}.
\end{align}
\end{proof}

\begin{proof}[Proof of Lemma~\ref{lem:purifcation_smooth_tighter}]
From Lemma~\ref{lem:tr_to_dTV}, there exists a probability distirubiton $\{w(k)\}_{k\in\mathbb{Z}}$ such that 
\begin{align}
    d_{\mathrm{TV}}\left(p_{\psi},\sum_{k\in\mathbb{Z}}w(k)\Upsilon_k p_\phi\right)\leq \sqrt{1-(1-\epsilon)^2}. 
\end{align}
Defining $q\coloneqq w* p_{\phi}$ and $\ket{\psi''}\coloneqq \sum_{n}q(n)\ket{n}$, from Lemma~\ref{lem:covariant_dTV}, 
there exists a covariant unitary operator $U$ such that 
\begin{align}
    D\left(\psi,\psi'\right)\leq \sqrt{2d_{\mathrm{TV}}\left(p_{\psi},q\right)}\leq \sqrt{2\sqrt{1-(1-\epsilon)^2}},\quad \psi'\coloneqq U\psi''U^\dag.
\end{align}
Since $q\asucc p_\phi$ and $p_{\psi'}\asucc q$, we have $p_{\psi'}\asucc p_{\phi}$. 
\end{proof}
In the main text, we used a looser bound $2\epsilon^{1/4}$ instead of $\sqrt{2\sqrt{1-(1-\epsilon)^2}}$ to simplify the notation.

\section{Proof of Theorem~\ref{thm:cost_general}}
Theorem~\ref{thm:cost_general} can be proven in the same way as the first part of Theorem~\ref{thm:cost_dist_Finf_F0}. For completeness, we here repeat the proof.

\begin{proof}[Proof of Theorem~\ref{thm:cost_general}]

To show $C_{\mathrm{cost}}(\widehat{\rho})=\overline{\mathcal{F}}(\widehat{\rho})$, let us define $C_{\mathrm{cost}}^\epsilon\left(\widehat{\rho}\right)\coloneqq \inf\left\{R\middle| \widehat{\phi_{\mathrm{coh}}}(R)\cov_{\epsilon}\widehat{\rho}\right\}$. 
Defining $4\lambda^{\epsilon/2}\coloneqq C_{\mathrm{cost}}^{\epsilon/2}(\widehat{\rho})$, for any parameter $\delta>0$, there exists a real number $\delta'$ such that $\delta> \delta'\geq 0$ and $\widehat{\phi_{\mathrm{coh}}}(4(\lambda^{\epsilon/2}+\delta'))\cov_{\epsilon/2}\widehat{\rho}$. Since $\widehat{\chi}_{(\lambda^{\epsilon/2}+\delta')}\cov_{\epsilon/2}\widehat{\phi_{\mathrm{coh}}}(4(\lambda^{\epsilon/2}+\delta'))$, we have $\widehat{\chi}_{(\lambda^{\epsilon/2}+\delta)}\cov_{\epsilon}\widehat{\rho}$, where we have used Eq.~\eqref{eq:poisson_NScond}. 
From Lemma~\ref{lem:purification_majorization}, for all sufficiently large $m$, there exists a state $\sigma_m\in B^\epsilon(\rho_m)$ whose purification $\Phi_m$ satisfies $\mathrm{P}_{(\lambda^{\epsilon/2}+\delta)m}= p_{\Phi_m}$. Therefore, $4(\lambda^{\epsilon/2}+\delta)m\geq \mathcal{F}_{\mathrm{max}}^\epsilon\left(\rho_m\right)$ for sufficiently large $m$, which implies
\begin{align}
    C_{\mathrm{cost}}^{\epsilon/2}(\widehat{\psi})+4\delta\geq \limsup_{m\to\infty}\frac{1}{m}\mathcal{F}_{\mathrm{max}}^\epsilon(\rho_m).
\end{align}
Since $\delta>0$ is arbitrary, we have $C_{\mathrm{cost}}(\widehat{\rho})\geq \overline{\mathcal{F}}(\widehat{\rho})$ in the limit of $\epsilon\to +0$. 

To show the opposite inequality, 
let us define $4\lambda_m^{\epsilon/2}\coloneqq \mathcal{F}^{\epsilon/2}_{\mathrm{max}}(\rho_m)$. For any $\delta>0$, there exist $\delta_m'$, satisfying $\delta>\delta_m'\geq0$, such that there exist a state $\sigma_m\in B^{\epsilon/2}(\rho_m)$ and its purification $\Phi_m$ satisfying $\mathrm{P}_{(\lambda^{\epsilon/2}_m+\delta_m')}\asucc p_{\Phi_m}$. 
Note that for all sufficiently large $m$, $m(\lambda^{\epsilon/2}+\delta)\geq \lambda_m^{\epsilon/2}$ holds for $\lambda^{\epsilon/2}\coloneqq \limsup_{m\to\infty}\frac{1}{m}\lambda_m^{\epsilon/2}$. Therefore, we get $\mathrm{P}_{(\lambda^{\epsilon/2}+2\delta)m}\asucc p_{\Phi_m}$, where we have used $m\delta > \delta_m'$.
Since $\widehat{\phi_{\mathrm{coh}}}(4(\lambda^{\epsilon/2}+2\delta))\cov_{\epsilon/2}\widehat{\chi}_{(\lambda^{\epsilon/2}+2\delta)}$, we have $\widehat{\phi_{\mathrm{coh}}}(4(\lambda^{\epsilon/2}+2\delta))\cov_{\epsilon/2}\{\Phi_m\}_m$. 
Since the partial trace is a covariant operation, we have $\widehat{\phi_{\mathrm{coh}}}(4(\lambda^{\epsilon/2}+2\delta))\cov_{\epsilon}\widehat{\rho}$. Therefore, 
\begin{align}
    C_{\mathrm{cost}}^\epsilon (\widehat{\rho})\leq \limsup_{m\to\infty}\frac{1}{m}\mathcal{F}_{\mathrm{max}}^{\epsilon/2}(\rho_m)+8\delta.
\end{align}
Since $\delta>0$ is arbitrary, as $\epsilon\to +0$, we have $ C_{\mathrm{cost}} (\widehat{\rho})\leq \overline{\mathcal{F}}(\widehat{\rho})$, and therefore $C_{\mathrm{cost}} (\widehat{\rho})=\overline{\mathcal{F}}(\widehat{\rho})$.
\end{proof}

\section{Convertibility between pure states with finite periods}
So far, we have constructed asymptotic conversion theory in RTA in the non-i.i.d. regime under the assumption that the Hamiltonians are given by the one for a harmonic oscillator system. Here, improving the arguments in \cite{gour_resource_2008,marvian_operational_2022}, we show that the convertibility between pure states with finite periods can be analyzed with a harmonic oscillator system with the Hamiltonian in Eq.~\eqref{eq:Hamiltonian}. %As a result, we obtain a generalization of Theorem~\ref{thm:cost_dist_Finf_F0} to a sequence of pure states with an arbitrary Hamiltonians. 

To begin with, we analyze a one-shot setting. Let $A$ and $B$ be quantum systems associated with Hilbert spaces $\mathcal{H}_A$ and $\mathcal{H}_B$. We assume that their Hamiltonians $H_A$ and $H_B$ are bounded from below. Let $\mathcal{L}(\mathcal{H})$ denote the set of all linear operators on a Hilbert space $\mathcal{H}$. A quantum channel $\mathcal{E}_{A\to B}:\mathcal{L}(\mathcal{H}_A)\to \mathcal{L}(\mathcal{H}_B)$ is called covariant if and only if 
\begin{align}
    e^{-\ii H_Bt}\left(\mathcal{E}_{A\to B}\left(\rho_A\right)\right)e^{\ii H_Bt}=\mathcal{E}_{A\to B}\left(e^{-\ii H_A t}\rho_A e^{\ii H_A t}\right)
\end{align}
holds for all states $\rho_A$ of the system $A$ and for all $t\in\mathbb{R}$. Note that the covariance is defined with respect to the Hamiltonians of the input and output systems of the channel. Clarifying this point, we denote $(\rho_A,H_A)\cov (\sigma_B,H_B)$ if and only if there exists a covariant channel $\mathcal{E}_{A\to B}$ such that $\mathcal{E}_{A\to B}(\rho_A)=\sigma_B$. Similarly, we denote $(\rho_A,H_A)\cov_\epsilon (\sigma_B,H_B)$ for $\epsilon\in(0,1]$ if and only if there exists a covariant channel $\mathcal{E}_{A\to B}$ such that $D(\mathcal{E}_{A\to B}(\rho_A),\sigma_B)\leq \epsilon$. When no confusion arises, we simply write $\rho\cov \sigma$ and $\rho\cov_\epsilon\sigma$ by omitting the Hamiltonian as in the main text. 

In the following, we analyze the conversion from a pure state in $A$ to a pure state in $B$. For a pure state $\psi_A$, its period is defined by
\begin{align}
    \tau\coloneqq \inf_{t>0}\left\{t\midd \Braket{\psi_A|e^{-\ii H_A t}|\psi_A}|=1\right\}.
\end{align}
Assuming that $\tau$ is finite and non-zero, a state $\psi_A$ is mapped by a covariant operation to a state with a finite period $\tau/k$ for some positive integer $k$ or a symmetric state. Of particular interest here is the former case since the latter case is trivial as any state can be mapped to any symmetric state with a covariant operation. 
Therefore, we analyze the convertibility under the assumption that $\phi_{B}$ has a finite period $\tau' =\tau/k$ with some positive integer $k$. 

Let $H_A=\sum_{E\in\mathrm{Spec}(H_A)} E\Pi_E^{(A)}$ be the spectral decomposition of the Hamiltonian, where $\mathrm{Spec}(H_A)$ is the set of different eigenvalues of $H_A$ and $\Pi_E^{(A)}$ denotes the projector to the eigenspace with eigenvalue $E$. For a given pure state $\psi_A$, we define
\begin{align}
    \mathrm{Spec}_\psi(H_A)\coloneqq \left\{E\in \mathrm{Spec}(H_A)\middle| \Pi_E^{(A)}\psi_A\neq 0\right\}.
\end{align}
 When the pure state $\psi_A$ has a finite period $\tau$, the eigenvalues in $\mathrm{Spec}_\psi(H_A)$ are expressed as $\frac{2\pi}{\tau}n+E_0$ with some $n\in\mathbb{Z}_{\geq 0}$ and a constant $E_0$. Shifting the Hamiltonian by a constant, without loss of generality, we can set $E_0=0$. 
In the same way, we introduce
\begin{align}
    \mathrm{Spec}_\phi(H_B)\coloneqq \left\{E\in \mathrm{Spec}(H_B)\middle| \Pi_E^{(B)}\phi_B\neq 0\right\}
\end{align}
for a pure state $\phi_B$.
When the pure state $\phi_B$ has period $\tau/k$ for some positive integer $k$, shifting the Hamiltonian $H_B$ by a constant, without loss of generality, all eigenvalues in $\mathrm{Spec}_\phi(H_B)$ are expressed as $\frac{2\pi}{\tau}kn$ for some $n\in\mathbb{Z}_{\geq 0}$. 
Finally, by multiplying the Hamiltonians $H_A$ and $H_B$ by $\tau/(2\pi)$, we can assume that $\mathrm{Spec}_\psi(H_A)\subset\mathbb{Z}_{\geq 0}$ and $\mathrm{Spec}_\phi(H_B)\subset \mathbb{Z}_{\geq 0}$. 

For pure states $\psi_{A}$ and $\phi_{B}$, we define their energy distributions by
\begin{align}
   p_{\psi}(n)&\coloneqq \braket{\psi|\Pi_n^{(A)}|\psi},\quad n \in \mathrm{Spec}_\psi(H_{A}), \label{eq:energy_distr_psi}\\
    p_{\phi}(n)&\coloneqq \braket{\phi|\Pi_n^{(B)}|\phi} \quad n \in \mathrm{Spec}_\phi(H_{B}).\label{eq:energy_distr_phi}
\end{align}
By using the energy distributions, the pure states $\psi_{A}$ and $\phi_{B}$ are expanded as
\begin{align}
    \ket{\psi}_A=\sum_{n\in\mathrm{Spec}_\psi(H_{A})}\sqrt{p_{\psi}(n)}\ket{\psi_{n}}_A,\quad \ket{\phi}_{B}=\sum_{n\in\mathrm{Spec}_\phi(H_{B})}\sqrt{p_{\phi}(n)}\ket{\phi_{n}}_B,
\end{align}
where $\ket{\psi_{n}}_A$ and $\ket{\phi_{n}}_B,$ are eigenvectors of $H_A$ and $H_B$ with eigenvalue $n$, respectively. 

Let us introduce a harmonic oscillator system whose Hamiltonian is given by
\begin{align}
    H_{\mathrm{HO}}=\sum_{n=0}^\infty n\ket{n}\bra{n}_{\mathrm{HO}},
\end{align}
where $\{\ket{n}_{\mathrm{HO}}\}_{n=0}^\infty$ denotes the Fock basis of the Harmonic oscillator system. 
By using energy distributions defined in Eqs.~\eqref{eq:energy_distr_psi} and \eqref{eq:energy_distr_phi}, we define pure states
\begin{align}
    \ket{\psi'}_{\mathrm{HO}}&\coloneqq \sum_{n\in\mathrm{Spec}_\psi(H_A)}\sqrt{p_{\psi}(n)}\ket{n}_{\mathrm{HO}}\in\mathcal{H}_{\mathrm{HO}},\quad 
    \ket{\phi'}_{\mathrm{HO}}\coloneqq \sum_{n\in\mathrm{Spec}_\phi(H_B)}\sqrt{p_{\phi}(n)}\ket{n}_{\mathrm{HO}}\in\mathcal{H}_{\mathrm{HO}},
\label{eq:pure_state_in_HO}
\end{align}
where $\mathcal{H}_{\mathrm{HO}}=\mathrm{Span}\{\ket{n}_{\mathrm{HO}}\}_{n=0}^\infty$ denotes the Hilbert space for the harmonic oscillator system.

We now prove a lemma on one-shot convertibility among pure states with and without an error. 
\begin{lemma}\label{lem:one_shot_conversion}
    Let $\psi_A$ and $\phi_B$ be pure states on $A$ and $B$ with finite periods $\tau$ and $\tau'$, respectively. Assume that $\tau'=\tau/k$ for some positive integer $k$. Define pure states $\psi'_{\mathrm{HO}}$ and $\phi'_{\mathrm{HO}}$ in a harmonic oscillator system by Eq.~\eqref{eq:pure_state_in_HO}. Then, the following two hold:
\begin{enumerate}[(i)]
    \item $(\psi_A,H_A)\cov (\phi_B,H_B)\Longleftrightarrow(\psi'_{\mathrm{HO}},H_{\mathrm{HO}})\cov (\phi'_{\mathrm{HO}},H_{\mathrm{HO}})$.
    \item For any $\epsilon\in(0,1]$, $(\psi_A,H_A)\cov_\epsilon (\phi_B,H_B)\Longleftrightarrow(\psi'_{\mathrm{HO}},H_{\mathrm{HO}})\cov_\epsilon (\phi'_{\mathrm{HO}},H_{\mathrm{HO}})$
\end{enumerate}
\end{lemma}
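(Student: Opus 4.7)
The plan is to reduce the conversion problem on $(A,H_A)\to(B,H_B)$ to one on $(\mathrm{HO},H_{\mathrm{HO}})\to(\mathrm{HO},H_{\mathrm{HO}})$ by constructing covariant CPTP maps that shuttle the relevant pure states and their energy distributions between the physical systems and the harmonic oscillator. Concretely, I would build four covariant channels $\Lambda^{A\to\mathrm{HO}}_\psi$, $\Gamma^{\mathrm{HO}\to A}_\psi$, $\Lambda^{B\to\mathrm{HO}}_\phi$, and $\Gamma^{\mathrm{HO}\to B}_\phi$ satisfying $\Lambda^{A\to\mathrm{HO}}_\psi(\psi_A)=\psi'_{\mathrm{HO}}$, $\Gamma^{\mathrm{HO}\to A}_\psi(\psi'_{\mathrm{HO}})=\psi_A$, and analogously for $\phi$. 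Given these, both implications of (i) follow by sandwiching: if a covariant $\mathcal{E}$ witnesses $\psi_A\cov\phi_B$, then $\Lambda^{B\to\mathrm{HO}}_\phi\circ\mathcal{E}\circ\Gamma^{\mathrm{HO}\to A}_\psi$ is a covariant channel witnessing $\psi'_{\mathrm{HO}}\cov\phi'_{\mathrm{HO}}$, and the converse works by sandwiching with the other pair. Statement (ii) then follows from (i) by monotonicity of the trace distance under CPTP maps, which incurs no loss in $\epsilon$.

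To build $\Lambda^{A\to\mathrm{HO}}_\psi$, let $P_\psi$ be the projector onto the cyclic subspace $\mathcal{K}_\psi=\mathrm{Span}\{\ket{\psi_n}_A\}_{n\in\mathrm{Spec}_\psi(H_A)}$, define the energy-preserving partial isometry $W_\psi:\mathcal{K}_\psi\to\mathcal{H}_{\mathrm{HO}}$ by $W_\psi\ket{\psi_n}_A=\ket{n}_{\mathrm{HO}}$ (well-defined since $\mathrm{Spec}_\psi(H_A)\subset\mathbb{Z}_{\geq 0}$ after the rescaling performed above), pick a symmetric state $\tau_0^{\mathrm{HO}}$ of the harmonic oscillator such as $\ket{0}\!\bra{0}_{\mathrm{HO}}$, and set
\begin{align*}
\Lambda^{A\to\mathrm{HO}}_\psi(\rho) \;:=\; W_\psi P_\psi\rho P_\psi W_\psi^\dagger \;+\; \mathrm{Tr}\!\left[(\mathbb{I}_A-P_\psi)\rho\right]\tau_0^{\mathrm{HO}}.
\end{align*}
The other three channels are obtained by the obvious symmetric constructions, swapping source and target and/or replacing $(A,\psi)$ by $(B,\phi)$. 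A symmetric state on $A$ or $B$ always exists because $H_A$ and $H_B$ are bounded from below and therefore admit energy eigenstates, which are automatically symmetric.

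Covariance of these maps rests on three observations: $P_\psi$ commutes with $H_A$ because $\mathcal{K}_\psi$ is a direct sum of $H_A$-eigenspaces; $W_\psi$ intertwines $H_A|_{\mathcal{K}_\psi}$ with $H_{\mathrm{HO}}$ restricted to its image by construction; and $\mathrm{Tr}[(\mathbb{I}_A-P_\psi)\rho]$ is invariant under time evolution while $\tau_0^{\mathrm{HO}}$ is stationary. Combining these, $e^{-\ii H_{\mathrm{HO}}t}\Lambda^{A\to\mathrm{HO}}_\psi(\rho)e^{\ii H_{\mathrm{HO}}t}=\Lambda^{A\to\mathrm{HO}}_\psi(e^{-\ii H_A t}\rho e^{\ii H_A t})$. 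Finally, $\Lambda^{A\to\mathrm{HO}}_\psi(\psi_A)=W_\psi\psi_A W_\psi^\dagger=\psi'_{\mathrm{HO}}$ by direct computation, and analogously for the other three maps.

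The main subtlety, and therefore the step that requires the most care, is the covariance on the complement of $\mathcal{K}_\psi$: the spectrum of $H_A$ restricted to $(\mathbb{I}_A-P_\psi)\mathcal{H}_A$ can contain eigenvalues outside $\mathrm{Spec}_\psi(H_A)$ and even, after the rescaling by $\tau/(2\pi)$, outside $\mathbb{Z}_{\geq 0}$, so a naive attempt to extend $W_\psi$ to a global energy-preserving isometry into $\mathcal{H}_{\mathrm{HO}}$ would fail for spectral reasons. The device of collapsing the complement via a trace into a fixed symmetric state sidesteps this mismatch while preserving both complete positivity and the required covariance with respect to $H_A$. Once this design choice is in place, the remainder of the argument is essentially functional-analytic bookkeeping together with a standard application of CPTP monotonicity to deduce (ii) from (i).
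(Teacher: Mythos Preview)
Your proposal is correct and follows essentially the same approach as the paper: the paper constructs the very same ``project, shuttle via an energy-intertwining partial isometry, and dump the complement into a fixed symmetric state'' channels (denoted $\mathcal{M}_{A\to\mathrm{HO}}$, $\mathcal{M}_{\mathrm{HO}\to A}$, $\mathcal{M}_{B\to\mathrm{HO}}$, $\mathcal{M}_{\mathrm{HO}\to B}$), proves (i) by sandwiching, and obtains (ii) by the data-processing inequality. One minor phrasing point: (ii) does not literally follow \emph{from} (i) but rather from the same sandwich construction together with monotonicity of the trace distance; your text makes this clear in spirit, so no change of argument is needed.
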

\begin{proof}
Let us first introduce maps relating the original systems $A$ and $B$ to a harmonic oscillator system. 
Defining operators
\begin{align}
    \Pi^{(\psi)}&\coloneqq \sum_{n\in\mathrm{Spec}_\psi(H_A)}\ket{n}_{\mathrm{HO}}\bra{\psi_{n}}_A:\mathcal{H}_{A}\to\mathcal{H}_{\mathrm{HO}},\quad 
     \Pi^{(\phi)}\coloneqq \sum_{n\in\mathrm{Spec}_\phi(H_B)}\ket{n}_{\mathrm{HO}}\bra{\phi_{n}}_B:\mathcal{H}_B\to\mathcal{H}_{\mathrm{HO}},
\end{align}
we have
\begin{align}
    \Pi^{(\psi)}\ket{\psi}_A&=\ket{\psi'}_{\mathrm{HO}},\quad \Pi^{(\psi)\dag}\ket{\psi'}_{\mathrm{HO}}=\ket{\psi}_A,\quad
    \Pi^{(\phi)}\ket{\phi}_B=\ket{\phi'}_{\mathrm{HO}},\quad \Pi^{(\phi)\dag}\ket{\phi'}_{\mathrm{HO}}=\ket{\phi}_B.
\end{align}
Furthermore, they satisfy
\begin{align}
    \Pi^{(\psi)}H_A=H_{\mathrm{HO}}\Pi^{(\psi)},\quad \Pi^{(\phi)}H_B=H_{\mathrm{HO}}\Pi^{(\phi)}.\label{eq:pi_Hamiltonian_commute}
\end{align}
We introduce maps
\begin{align}
    \mathcal{M}_{A\to\mathrm{HO}}\left(\cdots\right) &\coloneqq \Pi^{(\psi)}\left(\cdots\right)\Pi^{(\psi)\dag}+\mathrm{Tr}\left(\left(\cdots\right)\left(\mathbb{I}_A-\sum_{n\in\mathrm{Spec}_\psi(H_A)}\ket{\psi_n}\bra{\psi_{n}}_A\right)\right)\rho_{\mathrm{HO}}^{(\mathrm{free})}:\mathcal{L}(\mathcal{H}_A)\to\mathcal{L}(\mathcal{H}_{\mathrm{HO}}),\\
    \mathcal{M}_{\mathrm{HO}\to A}\left(\cdots\right) &\coloneqq \Pi^{(\psi)\dag}\left(\cdots\right)\Pi^{(\psi)}+\mathrm{Tr}\left(\left(\cdots\right)\left(\mathbbm{I}_{\mathrm{HO}}-\sum_{n\in\mathrm{Spec}_\psi(H_A)}\ket{n}\bra{n}_{\mathrm{HO}}\right)\right)\rho_A^{(\mathrm{free})}:\mathcal{L}(\mathcal{H}_{\mathrm{HO}})\to\mathcal{L}(\mathcal{H}_A),\\
    \mathcal{M}_{B\to\mathrm{HO}}\left(\cdots\right) &\coloneqq \Pi^{(\phi)}\left(\cdots\right)\Pi^{(\phi)\dag}+\mathrm{Tr}\left(\left(\cdots\right)\left(\mathbb{I}_B-\sum_{n\in\mathrm{Spec}_\phi(H_B)}\ket{\phi_n}\bra{\phi_{n}}_B\right)\right)\rho_{\mathrm{HO}}^{(\mathrm{free})}:\mathcal{L}(\mathcal{H}_B)\to\mathcal{L}(\mathcal{H}_{\mathrm{HO}}),\\
    \mathcal{M}_{\mathrm{HO}\to B}\left(\cdots\right) &\coloneqq \Pi^{(\phi)\dag}\left(\cdots\right)\Pi^{(\phi)}+\mathrm{Tr}\left(\left(\cdots\right)\left(\mathbbm{I}_{\mathrm{HO}}-\sum_{n\in\mathrm{Spec}_\phi(H_B)}\ket{n}\bra{n}_{\mathrm{HO}}\right)\right)\rho_B^{(\mathrm{free})}:\mathcal{L}(\mathcal{H}_{\mathrm{HO}})\to\mathcal{L}(\mathcal{H}_B),
\end{align}
where $\rho_{\mathrm{HO}}^{(\mathrm{free})}$, $\rho_A^{(\mathrm{free})}$ and $\rho_B^{(\mathrm{free})}$ are any symmetric states, and $\mathbb{I}$ denotes the identity operator for each system. From Eq.~\eqref{eq:pi_Hamiltonian_commute}, it can be confirmed that these maps are covariant channels. In addition, they satisfy
\begin{align}
    \mathcal{M}_{A\to\mathrm{HO}}(\psi_A)=\psi'_{\mathrm{HO}},\quad \mathcal{M}_{\mathrm{HO}\to A}(\psi'_{\mathrm{HO}})=\psi_A,\quad \mathcal{M}_{B\to\mathrm{HO}}(\phi_B)=\phi'_{\mathrm{HO}},\quad \mathcal{M}_{\mathrm{HO}\to B}(\phi'_{\mathrm{HO}})=\phi_B.\label{eq:m_maps_purestates}
\end{align}

To prove the claim (i), let us first assume $(\psi_A,H_A)\cov (\phi_B,H_B)$. Then there exists a covariant channel $\mathcal{E}_{A\to B}:\mathcal{L}(\mathcal{H}_A)\to\mathcal{L}(\mathcal{H}_B)$ such that $\mathcal{E}_{A\to B}(\psi_A)=\phi_B$. Introducing a covariant channel
\begin{align}
    \mathcal{E}'_{\mathrm{HO}\to\mathrm{HO}}\coloneqq \mathcal{M}_{B\to\mathrm{HO}}\circ \mathcal{E}_{A\to B}\circ\mathcal{M}_{\mathrm{HO}\to A}:\mathcal{L}(\mathcal{H}_{\mathrm{HO}})\to \mathcal{L}(\mathcal{H}_{\mathrm{HO}}),\label{eq:cov_channel_HO_to_HO}
\end{align}
we get
\begin{align}
    \mathcal{E}'_{\mathrm{HO}\to\mathrm{HO}}\left(\psi'_{\mathrm{HO}}\right)&=\mathcal{M}_{B\to\mathrm{HO}}\circ \mathcal{E}_{A\to B}\circ\mathcal{M}_{\mathrm{HO}\to A}\left(\psi'_{\mathrm{HO}}\right)\\
    &=\mathcal{M}_{B\to\mathrm{HO}}\circ \mathcal{E}_{A\to B}(\psi_A)\\
    &=\mathcal{M}_{B\to\mathrm{HO}}(\phi_B)\\
    &=\phi'_{\mathrm{HO}},
\end{align}
where we used Eq.~\eqref{eq:m_maps_purestates} in the second and forth lines. Therefore, $(\psi'_{\mathrm{HO}},H_{\mathrm{HO}})\cov(\phi'_{\mathrm{HO}},H_{\mathrm{HO}})$.
The relations among these maps are summarized in the following diagram:
\begin{equation}
    \begin{gathered}
        \xymatrix@R=40pt@C=55pt{
    \ket{\psi}_{A}=\sum_{n}\sqrt{p_\psi(n)}\ket{\psi_n}\ar[r]^{\mathcal{E}_{A\to B}} & \ket{\phi}_B=\sum_{n}\sqrt{p_\phi(n)}\ket{\phi_n}\ar[d]^{\mathcal{M}_{B\to \mathrm{HO}}}\\
    \ket{\psi'}_{\mathrm{HO}}=\sum_{n}\sqrt{p_\psi(n)}\ket{n}_{\mathrm{HO}}\ar[u]^{\mathcal{M}_{\mathrm{HO} \to A}}\ar[r]_{\mathcal{E}'_{\mathrm{HO}\to\mathrm{HO}}}&\ket{\phi'}_{\mathrm{HO}}=\sum_{n}\sqrt{p_\phi(n)}\ket{n}_{\mathrm{HO}}
    }
    \end{gathered}
\end{equation}

Conversely, assume $(\psi'_{\mathrm{HO}},H_{\mathrm{HO}})\cov(\phi'_{\mathrm{HO}},H_{\mathrm{HO}})$. Then there exists a covariant channel $\Lambda'_{\mathrm{HO}\to\mathrm{HO}}:\mathcal{L}(\mathcal{H}_{\mathrm{HO}})\to\mathcal{L}(\mathcal{H}_{\mathrm{HO}})$ such that $\Lambda'_{\mathrm{HO}\to\mathrm{HO}}(\psi'_{\mathrm{HO}})=\phi'_{\mathrm{HO}}$. Defining a covariant channel
\begin{align}
    \Lambda_{A\to B}\coloneqq \mathcal{M}_{\mathrm{HO}\to B}\circ\Lambda'_{\mathrm{HO}\to\mathrm{HO}}\circ\mathcal{M}_{A\to\mathrm{HO}} :\mathcal{L}(\mathcal{H}_A)\to\mathcal{L}(\mathcal{H}_{B}),\label{eq:cov_channel_A_to_B}
\end{align}
it holds $\Lambda_{A\to B}(\psi_A)=\phi_B$, implying that $(\psi_A,H_A)\cov (\phi_B,H_B)$. The relations among these maps are summarized in the following diagram:
\begin{equation}
    \begin{gathered}
        \xymatrix@R=40pt@C=55pt{
     \ket{\psi}_{A}=\sum_{n}\sqrt{p_\psi(n)}\ket{\psi_n}\ar[d]_{\mathcal{M}_{A\to \mathrm{HO}}} \ar[r]^{\Lambda_{A\to B}} & \ket{\phi}_B=\sum_{n}\sqrt{p_\phi(n)}\ket{\phi_n}\\
   \ket{\psi'}_{\mathrm{HO}}=\sum_{n}\sqrt{p_\psi(n)}\ket{n}_{\mathrm{HO}}\ar[r]_{\Lambda'_{\mathrm{HO}\to\mathrm{HO}}}&\ket{\phi'}_{\mathrm{HO}}=\sum_{n}\sqrt{p_\phi(n)}\ket{n}_{\mathrm{HO}}\ar[u]_{\mathcal{M}_{\mathrm{HO}\to B}}
    }
    \end{gathered}
\end{equation}

The claim (ii) can be shown by extending the proof for claim (i) with the data-processing inequality. Assume $(\psi_A,H_A)\cov_\epsilon (\phi_B,H_B)$. Then there exists a covariant channel $\mathcal{E}_{A\to B}:\mathcal{L}(\mathcal{H}_A)\to\mathcal{L}(\mathcal{H}_B)$ such that $D(\mathcal{E}_{A\to B}(\psi_A),\phi_B)\leq \epsilon$. For a covariant channel $\mathcal{E}_{\mathrm{HO}\to\mathrm{HO}}$ defined in Eq.~\eqref{eq:cov_channel_HO_to_HO}, we have
\begin{align}
    D\left(\mathcal{E}'_{\mathrm{HO}\to\mathrm{HO}}\left(\psi'_{\mathrm{HO}}\right),\phi'_{\mathrm{HO}}\right)&=D\left(\mathcal{M}_{B\to\mathrm{HO}}\circ \mathcal{E}_{A\to B}(\psi_A),\mathcal{M}_{B\to\mathrm{HO}}(\phi_B)\right)\nonumber\\
    &\leq D\left(\mathcal{E}_{A\to B}(\psi_A),\phi_B\right),
\end{align}
implying that $(\psi'_{\mathrm{HO}},H_{\mathrm{HO}})\cov_\epsilon (\phi'_{\mathrm{HO}},H_{\mathrm{HO}})$. 
Conversely, assume $(\psi'_{\mathrm{HO}},H_{\mathrm{HO}})\cov_\epsilon (\phi'_{\mathrm{HO}},H_{\mathrm{HO}})$. Then there exists a covariant channel $\Lambda_{\mathrm{HO}\to \mathrm{HO}}:\mathcal{L}(\mathcal{H}_{\mathrm{HO}})\to\mathcal{L}(\mathcal{H}_{\mathrm{HO}})$ such that $D(\Lambda_{\mathrm{HO}\to \mathrm{HO}}(\psi'_{\mathrm{HO}}),\phi'_{\mathrm{HO}})\leq \epsilon$. For a covariant channel $\Lambda_{A\to B}$ defined in Eq.~\eqref{eq:cov_channel_A_to_B}, we get 
\begin{align}
    D\left(\Lambda_{A\to B}\left(\psi_A\right),\phi_B\right)&=D\left(\mathcal{M}_{\mathrm{HO}\to B}\circ\Lambda'_{\mathrm{HO}\to\mathrm{HO}}(\psi'_{\mathrm{HO}}),\mathcal{M}_{\mathrm{HO}\to B}(\phi'_{\mathrm{HO}})\right)\nonumber\\
    &\leq D\left(\Lambda'_{\mathrm{HO}\to\mathrm{HO}}(\psi'_{\mathrm{HO}}),\phi'_{\mathrm{HO}}\right),
\end{align}
implying that $(\psi_A,H_A)\cov_\epsilon (\phi_B,H_B)$. 
\end{proof}
Note that claim (ii) in this theorem is valid even when one adopts any quantum distance satisfying the data-processing inequality instead of the trance distance to quantify the error.

We here extend the above arguments to asymptotic convertibility. Let $\widehat{\rho}_A=\{\rho_{A_m}\}_m$ and $\widehat{\sigma}_B=\{\sigma_{B_m}\}_m$ be sequence of states of systems $A_m$ and $B_m$, respectively. We assume that the Hamiltonians $\widehat{H}_A=\{H_{A_m}\}_m$ and $\widehat{H}_A=\{H_{B_m}\}_m$ are bounded below. For $\epsilon\in(0,1]$, we denote $(\widehat{\psi}_A,\widehat{H}_A)\cov_\epsilon (\widehat{\phi}_B,\widehat{H}_B)$ if and only if there exists a sequence of covariant operations $\{\mathcal{E}_m\}_m$ such that $D(\mathcal{E}_m(\rho_m),\sigma_m)\leq \epsilon$ for all sufficiently large $m$. If $(\widehat{\psi}_A,\widehat{H}_A)\cov_\epsilon (\widehat{\phi}_B,\widehat{H}_B)$ holds for all $\epsilon\in (0,1]$, then we denote $(\widehat{\psi}_A,\widehat{H}_A)\cov(\widehat{\phi}_B,\widehat{H}_B)$. If no confusion arises, we omit Hamiltonians and simply write $\widehat{\rho}\cov_\epsilon \widehat{\sigma}$ and $\widehat{\rho}\cov\widehat{\sigma}$ as in the main text.

As a corollary of Lemma~\ref{lem:one_shot_conversion}, we prove that the asymptotic convertibility among pure states can be analyzed with harmonic oscillator systems. 
\begin{corollary}\label{cor:asymptotic_convertibility_general_to_ho}
Let $\widehat{\psi}_A=\{\psi_{A_m}\}_m$ and $\widehat{\phi}_B=\{\phi_{B_m}\}_m$ be sequences of pure states of systems with Hamiltonians $\widehat{H}_A=\{H_{A_m}\}_m$ and $\widehat{H}_B=\{H_{B_m}\}_m$, respectively. Assume that the periods $\tau_m$ and $\tau_m'$ of $\psi_{A_m}$ and $\phi_{B_m}$ are finite and satisfy $\tau_m'=\tau_m/k_m$ for some positive integers $k_m$. Without loss of generality, by multiplying and shifting the Hamiltonians $H_{A_m}$ and $H_{B_m}$, we can assume that $\mathrm{Spec}_{\psi_{A_m}}(H_{A_m})\subset\mathbb{Z}_{\geq 0}$ and $\mathrm{Spec}_{\phi_{B_m}}(H_{B_m})\subset\mathbb{Z}_{\geq 0}$. Define sequences of pure states in the harmonic oscillator systems $\widehat{\psi}_{\mathrm{HO}}=\{\psi_{\mathrm{HO},m}'\}_m$ and $\widehat{\phi}_{\mathrm{HO}}=\{\phi_{\mathrm{HO},m}'\}_m$ by
\begin{align}
    \ket{\psi_{\mathrm{HO},m}'}&\coloneqq \sum_{n\in\mathrm{Spec}_{\psi_{A_m}}(H_{A_m})}\sqrt{p_{\psi_{A_m}}(n)}\ket{n}_{\mathrm{HO}}\in\mathcal{H}_{\mathrm{HO}},\quad 
    \ket{\phi_{\mathrm{HO},m}'}\coloneqq \sum_{n\in\mathrm{Spec}_{\phi_{B_m}}(H_{B_m})}\sqrt{p_{\phi_m}(n)}\ket{n}_{\mathrm{HO}}\in\mathcal{H}_{\mathrm{HO}},
\end{align}
with Hamiltonian $\widehat{H}_{\mathrm{HO}}\coloneqq \{H_{m}^{(\mathrm{HO})}\}_m$ and $H_{m}^{(\mathrm{HO})}\coloneqq \sum_{n=0}^\infty\ket{n}\bra{n}$ for all $m$. Then, it holds
\begin{align}
    \forall \epsilon\in(0,1],\quad (\widehat{\psi}_A,\widehat{H}_A)\cov_\epsilon(\widehat{\phi}_B,\widehat{H}_B)\Longleftrightarrow(\widehat{\psi}_{\mathrm{HO}}',\widehat{H}_{\mathrm{HO}})\cov_\epsilon(\widehat{\phi}_{\mathrm{HO}}',\widehat{H}_{\mathrm{HO}}),
\end{align}
or equivalently
\begin{align}
    (\widehat{\psi}_A,\widehat{H}_A)\cov(\widehat{\phi}_B,\widehat{H}_B)\Longleftrightarrow(\widehat{\psi}_{\mathrm{HO}}',\widehat{H}_{\mathrm{HO}})\cov(\widehat{\phi}_{\mathrm{HO}}',\widehat{H}_{\mathrm{HO}}).
\end{align}
\end{corollary}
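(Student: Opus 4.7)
The plan is to obtain Corollary~\ref{cor:asymptotic_convertibility_general_to_ho} as a direct termwise lifting of Lemma~\ref{lem:one_shot_conversion}~(ii), since the definition of asymptotic convertibility $\widehat{\rho}\cov_\epsilon\widehat{\sigma}$ only requires the existence of some sequence of covariant channels achieving error at most $\epsilon$ for all sufficiently large $m$. Because the hypotheses of Lemma~\ref{lem:one_shot_conversion} are satisfied for every index $m$ (the Hamiltonians $H_{A_m}$, $H_{B_m}$ have been rescaled and shifted so that $\mathrm{Spec}_{\psi_{A_m}}(H_{A_m})\subset\mathbb{Z}_{\geq 0}$ and $\mathrm{Spec}_{\phi_{B_m}}(H_{B_m})\subset\mathbb{Z}_{\geq 0}$, and $\tau_m' = \tau_m/k_m$ for an integer $k_m$), the one-shot equivalence can be applied to each pair $(\psi_{A_m},\phi_{B_m})$ individually.

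For the forward direction, suppose $(\widehat{\psi}_A,\widehat{H}_A)\cov_\epsilon(\widehat{\phi}_B,\widehat{H}_B)$. Then by definition there exists a sequence of covariant channels $\{\mathcal{E}_{m,A\to B}\}_m$ such that $D(\mathcal{E}_{m,A\to B}(\psi_{A_m}),\phi_{B_m})\leq \epsilon$ for all sufficiently large $m$. For each such $m$, Lemma~\ref{lem:one_shot_conversion}~(ii) furnishes a covariant channel $\mathcal{E}'_{m,\mathrm{HO}\to\mathrm{HO}}$ on the harmonic oscillator system with $D(\mathcal{E}'_{m,\mathrm{HO}\to\mathrm{HO}}(\psi'_{\mathrm{HO},m}),\phi'_{\mathrm{HO},m})\leq \epsilon$; concretely, one may take $\mathcal{E}'_{m,\mathrm{HO}\to\mathrm{HO}} = \mathcal{M}_{B_m\to\mathrm{HO}}\circ\mathcal{E}_{m,A\to B}\circ\mathcal{M}_{\mathrm{HO}\to A_m}$ using the encoding/decoding channels built in the proof of Lemma~\ref{lem:one_shot_conversion}. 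Collecting these channels yields a witness of $(\widehat{\psi}'_{\mathrm{HO}},\widehat{H}_{\mathrm{HO}})\cov_\epsilon(\widehat{\phi}'_{\mathrm{HO}},\widehat{H}_{\mathrm{HO}})$.

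The reverse direction is symmetric: starting from a sequence $\{\Lambda'_{m,\mathrm{HO}\to\mathrm{HO}}\}_m$ certifying $(\widehat{\psi}'_{\mathrm{HO}},\widehat{H}_{\mathrm{HO}})\cov_\epsilon(\widehat{\phi}'_{\mathrm{HO}},\widehat{H}_{\mathrm{HO}})$, the termwise construction $\Lambda_{m,A\to B} \coloneqq \mathcal{M}_{\mathrm{HO}\to B_m}\circ\Lambda'_{m,\mathrm{HO}\to\mathrm{HO}}\circ\mathcal{M}_{A_m\to\mathrm{HO}}$ from the proof of Lemma~\ref{lem:one_shot_conversion}~(ii) gives a sequence of covariant channels on the original systems with the same error bound. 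Since this equivalence holds for every $\epsilon\in(0,1]$, the $\cov$ (unquantified) statement follows immediately from the $\cov_\epsilon$ version.

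There is no substantial obstacle here: the only thing to check is that the sandwich construction in Lemma~\ref{lem:one_shot_conversion} depends solely on the data of the single index $m$ (the pure states $\psi_{A_m},\phi_{B_m}$ and their Hamiltonians), so no compatibility between different $m$ is needed, and in particular we do not have to take any limit in $m$ beyond what is already built into the definition of $\cov_\epsilon$. The one mild bookkeeping point is to ensure that the same $\epsilon$ works for all sufficiently large $m$ on both sides, which is automatic because Lemma~\ref{lem:one_shot_conversion}~(ii) preserves the error parameter exactly.
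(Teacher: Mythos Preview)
Your proof is correct and follows essentially the same approach as the paper: both apply Lemma~\ref{lem:one_shot_conversion}~(ii) termwise to the witnessing sequence of covariant channels, noting that the one-shot equivalence preserves the error parameter $\epsilon$ exactly so that the ``for all sufficiently large $m$'' clause carries over unchanged in both directions. Your write-up is in fact slightly more explicit than the paper's, spelling out the sandwich channels $\mathcal{M}_{B_m\to\mathrm{HO}}\circ\mathcal{E}_{m,A\to B}\circ\mathcal{M}_{\mathrm{HO}\to A_m}$ and the symmetric reverse construction, but the logical content is the same.
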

\begin{proof}
    $(\widehat{\psi}_A,\widehat{H}_A)\cov_\epsilon(\widehat{\phi}_B,\widehat{H}_B)$ implies that there exist covariant operations $\{\mathcal{E}_{A_m\to B_m}\}_m$ and $M>0$ such that $D(\mathcal{E}_{A_m\to B_m}(\psi_{A_m}),\phi_{B_m})\leq \epsilon$ for all $m\geq M$. From Lemma~\ref{lem:one_shot_conversion}, there also exist covariant operations $\{\mathcal{E}'_{\mathrm{HO}\to\mathrm{HO}}\}$ such that and $M>0$ such that $D(\mathcal{E}'_{\mathrm{HO}\to\mathrm{HO}}(\psi'_{\mathrm{HO},m}),\phi_{\mathrm{HO},m}')\leq \epsilon$ for all $m\geq M$ and hence $(\widehat{\psi}'_{\mathrm{HO}},\widehat{H}_{\mathrm{HO}})\cov_\epsilon(\widehat{\phi}'_{\mathrm{HO}},\widehat{H}_{\mathrm{HO}})$. In the same way, it is proven that $(\widehat{\psi}'_{\mathrm{HO}},\widehat{H}_{\mathrm{HO}})\cov_\epsilon(\widehat{\phi}'_{\mathrm{HO}},\widehat{H}_{\mathrm{HO}})$ implies $(\widehat{\psi}_A,\widehat{H}_A)\cov_\epsilon(\widehat{\phi}_B,\widehat{H}_B)$. 
\end{proof}

\section{General formula for the coherence cost and the distillable coherence for pure states of arbitrary systems}
Based on the results in the previous section, we here generalize Theorem~\ref{thm:cost_dist_Finf_F0}. Let $(\widehat{\psi},\widehat{H})$ be a set of sequences of pure states $\widehat{\psi}=\{\psi_m\}$ and Hamiltonians $\widehat{H}=\{H_m\}$. We assume that $\psi_m$ has a finite period $\tau$ for all $m$. As a reference state with period $\tau$, we adopt the coherence bit
\begin{align}
    \ket{\phi_{\mathrm{coh}}}\coloneqq \frac{1}{\sqrt{2}}\left(\ket{0}+\ket{1}\right)
\end{align}
with Hamiltonian
\begin{align}
    H_{\mathrm{coh}}=\frac{2\pi}{\tau}\ket{1}\bra{1}. 
\end{align}
For $R>0$, the i.i.d. sequence of coherence bits are defined by
\begin{align}
    \widehat{\phi_{\mathrm{coh}}}(R)\coloneqq\{ \phi_{\mathrm{coh}}^{\otimes \ceil{Rm}}\}_m,\quad \phi_{\mathrm{coh}}\coloneqq \ket{\phi_{\mathrm{coh}}}\bra{\phi_{\mathrm{coh}}}
\end{align}
with a sequence of Hamiltonians
\begin{align}
    \widehat{H}_{\mathrm{coh}}(R)&\coloneqq \{H_{\mathrm{coh},m}(R)\}_m,\quad 
    H_{\mathrm{coh},m}(R)\coloneqq \sum_{i=1}^{\ceil{Rm}}H_{\mathrm{coh}}^{(i)},\quad 
    H_{\mathrm{coh}}^{(i)}\coloneqq \mathbb{I}^{\otimes i-1}\otimes H_{\mathrm{coh}}\otimes \mathbb{I}^{\otimes \ceil{Rm}-i}.
\end{align}

We here denote the coherence cost and the distillable coherence by
\begin{align}
    C_{\mathrm{cost}}\left(\widehat{\psi},\widehat{H}\right)&\coloneqq \inf\left\{R\middle|\left(\widehat{\phi_{\mathrm{coh}}}(R),\widehat{H}_{\mathrm{coh}}(R)\right)\cov \left(\widehat{\psi},\widehat{H}\right)\right\},\\
     C_{\mathrm{dist}}\left(\widehat{\psi},\widehat{H}\right)&\coloneqq \sup\left\{R\middle|\left(\widehat{\psi},\widehat{H}\right)\cov \left(\widehat{\phi_{\mathrm{coh}}}(R),\widehat{H}_{\mathrm{coh}}(R)\right)\right\}
\end{align}
to emphasize that these quantities depend not only on the states but also the Hamiltonians. By shifting the Hamiltonians, we can assume that $\mathrm{Spec}_{\psi_m}(H_m)=\frac{2\pi}{\tau} n$ for some $n\in\mathbb{Z}_{\geq 0}$. For the sequence of pure states $\widehat{\psi}=\{\psi_m\}$, we define a sequence of states $\widehat{\psi}_{\mathrm{HO}}=\{\psi_{\mathrm{HO},m}\}_m$ in a harmonic oscillator system by
\begin{align}
    \ket{\psi_{\mathrm{HO},m}}&\coloneqq \sum_{\frac{2\pi}{\tau}n\in\mathrm{Spec}_{\psi_m}(H_m)}\sqrt{p_{\psi_m}\left(\frac{2\pi}{\tau}n\right)}\ket{n}_{\mathrm{HO}}\in\mathcal{H}_{\mathrm{HO}}\label{eq:harmonic_oscillator_copy}
\end{align}
with Hamiltonian
\begin{align}
    H_{\mathrm{HO}}=\frac{2\pi}{\tau}\sum_{n=0}^\infty n\ket{n}\bra{n}.
\end{align}
Note that if we multiply all the Hamiltonians by $\frac{\tau}{2\pi}$, we can assume that the period is $2\pi$ without loss of generality.
By using the sequence of states $\widehat{\psi}=\{\psi_m\}$, we define
\begin{align}
    \overline{\mathcal{F}}\left(\widehat{\psi},\widehat{H}\right)\coloneqq \overline{\mathcal{F}}\left(\widehat{\psi}_{\mathrm{HO}}\right),\quad \underline{\mathcal{F}}\left(\widehat{\psi},\widehat{H}\right)\coloneqq \underline{\mathcal{F}}\left(\widehat{\psi}_{\mathrm{HO}}\right),
\end{align}
where the right hand sides are the spectral sup- and inf-QFI rates defined in Eqs.~\eqref{eq:spectral_sup_rate} and \eqref{eq:spectral_inf_rate} in the main text.

From Theorem~\ref{thm:cost_dist_Finf_F0} and Corollary~\ref{cor:asymptotic_convertibility_general_to_ho}, we get a general formula for the coherence cost and the distillable coherence for pure states in general systems with arbitrary Hamiltonians:
\begin{theorem}
    For any sequences of pure states  $\widehat{\psi}=\{\psi_m\}$ and Hamiltonians $\widehat{H}=\{H_m\}$ with a finite period, it holds
    \begin{align}
        C_{\mathrm{cost}}\left(\widehat{\psi},\widehat{H}\right)=\overline{\mathcal{F}}\left(\widehat{\psi},\widehat{H}\right),\quad C_{\mathrm{dist}}\left(\widehat{\psi},\widehat{H}\right)=\underline{\mathcal{F}}\left(\widehat{\psi},\widehat{H}\right).
    \end{align}
\end{theorem}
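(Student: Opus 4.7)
The plan is to reduce to Theorem~\ref{thm:cost_dist_Finf_F0} by embedding the sequence of pure states into the harmonic oscillator setting via Corollary~\ref{cor:asymptotic_convertibility_general_to_ho}. First, I would normalize the problem: by shifting each $H_m$ by a constant and multiplying by $\tau/(2\pi)$, I may assume without loss of generality that the common period is $2\pi$ and that $\mathrm{Spec}_{\psi_m}(H_m) \subset \mathbb{Z}_{\geq 0}$. Under this normalization, the coherence bit Hamiltonian becomes $H_{\mathrm{coh}} = \ket{1}\bra{1}$ and the reference $\widehat{\phi_{\mathrm{coh}}}(R)$ with Hamiltonian $\widehat{H}_{\mathrm{coh}}(R)$ matches exactly the reference used in the main-text Theorem~\ref{thm:cost_dist_Finf_F0}.

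Next, I would apply Corollary~\ref{cor:asymptotic_convertibility_general_to_ho} twice, in both directions of conversion. Let $\widehat{\psi}_{\mathrm{HO}} = \{\psi_{\mathrm{HO},m}\}_m$ be the harmonic-oscillator embedding defined in Eq.~\eqref{eq:harmonic_oscillator_copy}. The embedding of the coherence bit (which already lives in the span of $\ket{0},\ket{1}$ with integer-valued energies) coincides with the coherence bit itself, and similarly for tensor powers. Consequently, for every $R > 0$ and every $\epsilon \in (0,1]$,
\begin{align}
\bigl(\widehat{\phi_{\mathrm{coh}}}(R),\widehat{H}_{\mathrm{coh}}(R)\bigr) \cov_{\epsilon} \bigl(\widehat{\psi},\widehat{H}\bigr) &\iff \widehat{\phi_{\mathrm{coh}}}(R) \cov_{\epsilon} \widehat{\psi}_{\mathrm{HO}},\\
\bigl(\widehat{\psi},\widehat{H}\bigr) \cov_{\epsilon} \bigl(\widehat{\phi_{\mathrm{coh}}}(R),\widehat{H}_{\mathrm{coh}}(R)\bigr) &\iff \widehat{\psi}_{\mathrm{HO}} \cov_{\epsilon} \widehat{\phi_{\mathrm{coh}}}(R),
\end{align}
where the right-hand sides are interpreted in the standard harmonic-oscillator setting.

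Taking the infimum (respectively supremum) over admissible $R$ and letting $\epsilon \to 0$ converts this equivalence of one-shot convertibility into an identity of asymptotic rates:
\begin{align}
C_{\mathrm{cost}}\bigl(\widehat{\psi},\widehat{H}\bigr) = C_{\mathrm{cost}}\bigl(\widehat{\psi}_{\mathrm{HO}}\bigr), \qquad C_{\mathrm{dist}}\bigl(\widehat{\psi},\widehat{H}\bigr) = C_{\mathrm{dist}}\bigl(\widehat{\psi}_{\mathrm{HO}}\bigr).
\end{align}
I would then invoke Theorem~\ref{thm:cost_dist_Finf_F0} applied to $\widehat{\psi}_{\mathrm{HO}}$, which yields $C_{\mathrm{cost}}(\widehat{\psi}_{\mathrm{HO}}) = \overline{\mathcal{F}}(\widehat{\psi}_{\mathrm{HO}})$ and $C_{\mathrm{dist}}(\widehat{\psi}_{\mathrm{HO}}) = \underline{\mathcal{F}}(\widehat{\psi}_{\mathrm{HO}})$. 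Combining these with the definitions $\overline{\mathcal{F}}(\widehat{\psi},\widehat{H}) \coloneqq \overline{\mathcal{F}}(\widehat{\psi}_{\mathrm{HO}})$ and $\underline{\mathcal{F}}(\widehat{\psi},\widehat{H}) \coloneqq \underline{\mathcal{F}}(\widehat{\psi}_{\mathrm{HO}})$ closes the argument.

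The conceptual work has already been done in Corollary~\ref{cor:asymptotic_convertibility_general_to_ho}, so essentially no obstacle remains; the only point that requires care is verifying that the rescaling/shifting of Hamiltonians does not alter either the reference sequence or the quantity being computed, which follows because covariant channels are invariant under adding a multiple of the identity to the Hamiltonian and because a global rescaling of $H$ does not change the set of covariant channels (it merely reparametrizes time).
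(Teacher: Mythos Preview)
Your proposal is correct and follows exactly the paper's approach: the paper simply states that the theorem follows ``from Theorem~\ref{thm:cost_dist_Finf_F0} and Corollary~\ref{cor:asymptotic_convertibility_general_to_ho},'' and you have filled in precisely those details (normalization of the Hamiltonians, embedding into the harmonic oscillator via Corollary~\ref{cor:asymptotic_convertibility_general_to_ho}, and identification of the embedded coherence bits with the reference in Theorem~\ref{thm:cost_dist_Finf_F0}). Your treatment is in fact more explicit than the paper's own one-line justification.
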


As a consistency check of the result in the i.i.d. regime \cite{gour_resource_2008,marvian_operational_2022}, we show that
\begin{align}
    \overline{\mathcal{F}}\left(\widehat{\psi}_{\mathrm{iid}},\widehat{H}_{\mathrm{iid}}\right)=\overline{\mathcal{F}}\left(\widehat{\psi}_{\mathrm{iid}},\widehat{H}_{\mathrm{iid}}\right)=\mathcal{F}(\psi,H)\label{eq:iid_consistency}
\end{align}
for i.i.d. sequence of pure states $\widehat{\psi}_{\mathrm{iid}}=\{\psi^{\otimes m}\}$ and Hamiltonians $\{\widehat{H}_{\mathrm{iid}}\}=\{H_{\mathrm{iid},m}\}$ with period $2\pi$, where $H_{\mathrm{iid},m}\coloneqq \sum_{i=1}^m\mathbb{I}^{\otimes i-1}\otimes H\otimes \mathbb{I}^{\otimes m-i}$. First, following Eq.~\eqref{eq:harmonic_oscillator_copy}, let us introduce 
\begin{align}
    \ket{\psi_{\mathrm{HO},m}}&\coloneqq \sum_{n\in\mathrm{Spec}_{\psi^{\otimes m}}(H_{\mathrm{iid},m})}\sqrt{p_{\psi^{\otimes m}}\left(n\right)}\ket{n}_{\mathrm{HO}}
\end{align}
generalizing Eq.~\eqref{eq:convergence_iid_energy_distr}, we find that
\begin{align}
    \lim_{m\to\infty}\dtv\left(p_{\psi^{\otimes m}}, \mathrm{P}_{m \frac{1}{4}\mathcal{F}(\psi,H)}\right)=0,
\end{align}
implying that $\forall\epsilon\in(0,1]$, 
\begin{align}
   m \mathcal{F}(\psi,H)\geq \mathcal{F}_{\max}^\epsilon (\psi_{\mathrm{HO},m}),\quad \mathcal{F}_{\min}^\epsilon (\psi_{\mathrm{HO},m})\geq m \mathcal{F}(\psi,H)
\end{align}
hold for all sufficiently large $m$. Therefore, we get
\begin{align}
    \mathcal{F}(\psi,H)\geq\overline{\mathcal{F}}\left(\widehat{\psi}_{\mathrm{iid}},\widehat{H}_{\mathrm{iid}}\right),\quad \overline{\mathcal{F}}\left(\widehat{\psi}_{\mathrm{iid}},\widehat{H}_{\mathrm{iid}}\right)\geq \mathcal{F}(\psi,H).
\end{align}
Thus, in order to prove Eq.~\eqref{eq:iid_consistency}, it is sufficient to show
\begin{align}
    \overline{\mathcal{F}}\left(\widehat{\psi},\widehat{H}\right)\geq \underline{\mathcal{F}}\left(\widehat{\psi},\widehat{H}\right),
\end{align}
or equivalently,
\begin{align}
    C_{\mathrm{cost}}\left(\widehat{\psi},\widehat{H}\right)\geq C_{\mathrm{dist}}\left(\widehat{\psi},\widehat{H}\right).\label{eq:cost_geq_dist}
\end{align}
Although Eq.~\eqref{eq:cost_geq_dist} might seem to be trivial, technically, one must prove
\begin{align}
    \widehat{\phi_{\mathrm{coh}}}(R)\cov \widehat{\phi_{\mathrm{coh}}}(R') \implies  R\geq R'.\label{eq:monotonicity_iid_cbit}
\end{align}
By using the results in \cite{yamaguchi_smooth_2022}, Eq.~\eqref{eq:monotonicity_iid_cbit} can be proven as a consequence of asymptotic monotonicity of smooth metric adjusted skew information rates.

\begin{comment}
In analyses of asymptotic convertibility, the same argument is valid. 
Suppose that we want to investigate the convertibility from a sequence of pure states $\widehat{\psi}=\{\psi_m\}_m$ to another sequence of pure states $\widehat{\phi}=\{\phi_m\}_m$. 
We assume that the periods
\begin{align}
    \tau_m\coloneqq \inf_{t>0}\left\{t\middle| |\Braket{\psi_m|e^{-\ii H_m^{(S)} t}|\psi_m}|=1\right\},\quad \tau_m'\coloneqq \inf_{t>0}\left\{t\middle| |\Braket{\phi_m|e^{-\ii H_m^{(S')} t}|\phi_m}|=1\right\}
\end{align}
are finite and satisfy $\tau_m'=\tau_m/k_m$ for some positive integer $k_m$. 
Applying the above argument for each $m$, without loss of generality, we can analyze the convertibility under the assumption that all the Hamiltonians are given by $\sum_{n=0}^\infty n\ket{n}\bra{n}$. 
\end{comment}

\end{document}